\renewcommand{\leq}{\leqslant}
\renewcommand{\geq}{\geqslant}
\newcommand{\R}{\mathbb{R}}
\newcommand{\C}{\mathbb{C}}
\newcommand{\norm}[1]{\left\Vert #1\right\Vert}
\newcommand{\normt}[1]{\left\Vert #1\right\Vert_{(t)}}
\newcommand{\Bell}{\mathrm{Bell}}
\DeclareMathOperator{\Gr}{Gr}
\newcommand{\scalar}[2]{\langle #1 , #2\rangle}
\renewcommand{\phi}{\varphi}
\newcommand{\iy}{\infty}
\newcommand{\xopt}{x^*_t}
\newtheorem{theorem}{Theorem}[section]
\newtheorem{definition}[theorem]{Definition}
\newtheorem{proposition}[theorem]{Proposition}
\newtheorem{lemma}[theorem]{Lemma}
\newtheorem{corollary}[theorem]{Corollary}
\begin{document}

\title[Almost one bit violation]{Almost one bit 
violation for the additivity of the minimum output
entropy}
\author{Serban T. Belinschi}
\address{Department of Mathematics \& Statistics,
Queen's University, and Institute of Mathematics ``Simion Stoilow'' of the Romanian Academy;
Jeffrey Hall,
Kingston, ON K7L 3N6 CANADA}
\email{ sbelinsch@mast.queensu.ca}
\author{Beno\^{\i}t Collins}
\address{
D\'epartement de Math\'ematique et Statistique, Universit\'e d'Ottawa,
585 King Edward, Ottawa, ON, K1N6N5 Canada
and
WPI AIMR Tohoku University, Mathematics Unit
and
CNRS, Institut Camille Jordan Universit\'e  Lyon 1}
\email{bcollins@uottawa.ca}
\author{Ion Nechita}
\address{CNRS, Laboratoire de Physique Th\'eorique, IRSAMC, Universit\'e de Toulouse, UPS, 31062 Toulouse, France}
\email{nechita@irsamc.ups-tlse.fr}

\begin{abstract}
In a previous paper,
we proved that the limit of the collection
of possible eigenvalues of output states of a random  quantum channel is a deterministic, compact set $K_{k,t}$.
We also showed that the set $K_{k,t}$ is obtained, up to an intersection, as the unit ball of the
dual of a free compression norm.

In this paper, we identify the maximum of $\ell^p$ norms on the set $K_{k,t}$ and
prove that the maximum is attained on a vector of shape $(a,b,\ldots ,b)$ where $a > b$.
In particular, we compute the precise limit value of the minimum output entropy of a single random quantum channel.
As a corollary, we show that for any $\varepsilon >0$, it is possible to obtain a
violation for the additivity of the minimum output entropy for an output dimension as low as
$183$, and that for appropriate choice of parameters, the violation can be as large as
$\log 2 -\varepsilon$. Conversely, our result implies that, with probability one,
one does not obtain a violation of additivity using conjugate random quantum channels and the Bell state, in dimension $182$ and less.
\end{abstract}

\maketitle

\section{Introduction}

Let $(\mathcal A,\tau)$ be a tracial von Neumann non-commutative probability space.
On this vector space, given $t\in (0,1)$,
let us introduce the quantity
$||x||_{(t)}=||pxp||$ where $p$ is a projection of normalized trace $t$, free from $x$.
As we indicate in Section \ref{sec:t-norm}, this is a norm, which we call the $(t)$-norm.

In this paper we are interested in the $(t)$-norm
restricted to subalgebras of $(\mathcal A,\tau)$ of the form $(\mathbb{C}^k,k^{-1}\sum\delta_i)$
(generated by $k$ selfadjoint orthogonal projections of trace $1/k$).
The set $K_{k,t}$ is the dual of the unit ball for the $(t)$-norm, intersected with the
$(k-1)$-dimensional
probability simplex
 $\Delta_k = \{y \in \R_+^k \; | \; \sum_{i=1}^k y_i = 1\}$.

This set was introduced in \cite{bcn1} and we recall some of its properties in Section
\ref{subsec:def-kkt}.
The interest of $K_{k,t}$ is that it describes the limit of the collection of
\emph{all possible outputs} of
states (or eigenvalues thereof)  in the large dimension limit for
a natural family of random quantum channels (see Section \ref{sec:large-asymptotics}).

In this paper, we state and
study a maximization problem of $\ell^p$ norms on
$K_{k,t}$.
Our main result (stated below as Theorem \ref{thm:minimum-pnorm-Kkt}) is that the maximum is reached on a point that we call $\xopt$:
\begin{theorem}
The maximum of the $\ell^p$ norm on $K_{k,t}$ is reached at the vector $\xopt = (a,b,\ldots ,b)$, with $a>b$  depending only on $k$ and $t$. In particular, the point where the maximum is achieved does not depend on $p$.
\end{theorem}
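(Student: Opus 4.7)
My plan is to combine the permutation symmetry of $K_{k,t}$ with a majorization argument. I first observe that the $(t)$-norm restricted to the diagonal subalgebra $(\mathbb{C}^k, k^{-1}\sum\delta_i)$ depends only on the multiset of coordinates of $x$, because permutations of diagonal entries come from unitaries that leave the law of a free projection invariant. Hence $K_{k,t}$ is permutation-symmetric, and since $\ell^p$ shares this symmetry, a maximizer may be assumed to be sorted, $y_1\geq y_2\geq\cdots\geq y_k$.

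Second, the map $y\mapsto \sum_i y_i^p$ is Schur-convex for $p\geq 1$, so its maximum over the convex, permutation-symmetric set $K_{k,t}$ is achieved at a majorization-maximal element. The core content of the theorem is then the assertion that $K_{k,t}$ admits, up to permutation, a unique majorization-maximal element, and that this element takes only two distinct coordinate values. Combined with the simplex constraint, the two-value shape must be $(a,b,\ldots,b)$ with $a>b$, and independence of the maximizer on $p$ then follows automatically from Schur-convexity.

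To identify this element I would exploit the description of $K_{k,t}$ from \cite{bcn1} recalled in Section \ref{subsec:def-kkt}, which realizes the $(t)$-norm on $\mathbb{C}^k$ through the free compression of the empirical measure of $x$. Writing the boundary of $K_{k,t}$ as a level set $\Phi(y)=1$ of a permutation-symmetric smooth function $\Phi$, the Lagrange conditions for maximizing $\sum_i y_i^p$ under $\sum_i y_i=1$, $y_i\geq 0$, $\Phi(y)\leq 1$ read $py_i^{p-1}=\lambda+\mu\,\partial_i\Phi(y)$. Because $\partial_i\Phi(y)$ depends on the $y_j$ with $j\neq i$ only through symmetric functions, this system forces the coordinates to cluster into at most two groups; sortedness then pins down the $(a,b,\ldots,b)$ shape.

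The principal difficulty lies in this strict two-value collapse. I expect to prove an exchange lemma of the following form: if three coordinates of a putative maximizer take pairwise distinct values, one can construct an infinitesimal symmetric perturbation of $y$ that preserves $\Phi$ to first order but strictly increases a partial sum $y_1+\cdots+y_j$, contradicting majorization maximality. The key analytic input will be a strict monotonicity property of the Cauchy transform of the free compression of $k^{-1}\sum\delta_{y_i}$ with respect to the support data; this is where the free-probabilistic content of \cite{bcn1} is really used. Once the exchange lemma is in place, permutation symmetry and Schur-convexity immediately yield the theorem uniformly in $p\geq 1$, and the identification of $(a,b)$ reduces to a one-dimensional equation in $t$ and $k$.
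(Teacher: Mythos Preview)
Your high-level reframing via majorization is attractive: if $K_{k,t}$ had a unique (up to permutation) majorization-maximal element, then every Schur-convex function---in particular every $\ell^p$ norm for $p\geq 1$---would be maximized there, and the $p$-independence of the maximizer would be a structural consequence rather than an a posteriori observation. This is conceptually cleaner than the paper's route, which works one $p$ at a time.

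That said, the proposal has genuine gaps at precisely the two places where the real work lies.

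First, your Lagrange step does not force a two-value collapse. Permutation invariance of $\Phi$ does let you write $\partial_i\Phi(y)=h(y_i;e_1(y),\dots,e_k(y))$ for some $h$ depending on $y_i$ and the elementary symmetric functions of $y$, so the stationarity condition becomes an equation in $y_i$ with the $e_j(y)$ as parameters. But nothing you say bounds the number of roots of that equation by two; this requires detailed information about $h$, i.e.\ about the dual of the $(t)$-norm, which you do not supply. (Note also that this Lagrange equation carries an explicit $p$, so even if it worked it would not by itself yield a $p$-independent maximizer.) Your ``exchange lemma'' is the right $p$-free statement to aim for, but you offer no mechanism beyond a vague appeal to ``strict monotonicity of the Cauchy transform''. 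The paper resolves exactly this difficulty by a different device: it parametrizes the exposed points of $K_{k,t}$ as $\lambda=\nabla\normt{x}$ (Lemma~\ref{Lemma3.1}), transfers the problem to maximizing $g(x)=\|\nabla\normt{x}\|_p^p$ over $\Delta_k$, and then carries out an explicit Hessian computation using the subordination description of the $(t)$-norm (Proposition~\ref{prop:properties-H}, Theorem~\ref{thm:no-3-valued-max}, Lemma~\ref{lem:inequality}) to exhibit, at every at-least-three-valued $x$, a concrete ascent direction for $g$.

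Second, even granting that the maximizer is two-valued, you must still single out the pattern $(a,b,\dots,b)$ with a \emph{single} large coordinate among all two-valued elements of $K_{k,t}\cap\Delta_k^\downarrow$; ``sortedness then pins down the shape'' is not an argument, since $(a,\dots,a,b,\dots,b)$ with $m\geq 2$ copies of $a$ is also sorted. The paper handles this in a separate step (Proposition~\ref{prop:3.8}) via a nontrivial monotonicity analysis of $x\mapsto x^{1-p}\phi(x,t)^p+(1-x)^{1-p}(1-\phi(x,t))^p$ on $(0,1-t)$. Your outline omits this step entirely.
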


With this result, we are in position to supply the optimal bounds for the
random techniques at hand in order to disprove the additivity of the
minimum output entropy (MOE). 
Our main application can be summarized as follows.

\begin{theorem}
Violations of the additivity of the MOE, using conjugate random quantum channels and the Bell state, can occur iff the output space has dimension
 at least $183$. Almost surely, the defect of additivity is less than $\log 2$,
 and it can be made as close as desired to  $\log 2$.
 \end{theorem}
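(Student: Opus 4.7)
The plan is to compute, in the almost-sure large-$n$ limit, both sides of the additivity inequality and to extract the three claims from the resulting explicit function of $(k,t)$. Write $\Phi_n \colon \M_n \to \M_k$ for the random channel whose asymptotic output spectrum is supported on $K_{k,t}$ (as recalled in Section \ref{sec:large-asymptotics}).

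First, I would use Theorem \ref{thm:minimum-pnorm-Kkt} to identify the almost-sure limit of the minimum output entropy with $H(x^*_t)$: because $x^*_t$ is the common $\ell^p$ maximizer on $K_{k,t}$ for every $p > 1$, the inequality $\sum_i (x^*_t)_i^p \geq \sum_i y_i^p$, valid for all $y \in K_{k,t}$ and $p > 1$, can be differentiated from the right at $p = 1$ (both sides equal $1$ there) to give $H(x^*_t) \leq H(y)$. Writing $x^*_t = (a,b,\ldots,b)$ with $a + (k-1)b = 1$ yields the clean formula $H(x^*_t) = h(a) + (1-a)\log(k-1)$. On the other side, a standard Weingarten computation already present in the conjugate-channel literature shows that $(\Phi_n \otimes \bar\Phi_n)(\mathrm{Bell})$ has top eigenvalue converging almost surely to $t$ with the remaining $k^2-1$ eigenvalues concentrating uniformly at $(1-t)/(k^2-1)$, giving the almost-sure limit $h(t) + (1-t)\log(k^2-1)$.

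Combining these yields the explicit asymptotic defect
\[
D(k,t) \;=\; 2h(a) + 2(1-a)\log(k-1) - h(t) - (1-t)\log(k^2-1),
\]
where $a = a(k,t)$ is determined implicitly by the description of $K_{k,t}$ from Section \ref{subsec:def-kkt}. The three claims reduce to quantitative statements about $D$. The integer threshold $183$ is obtained by studying $t \mapsto D(k,t)$ at each fixed $k$: the first-order optimality condition in $t$, combined with the implicit expression for $a$, isolates the transition and one verifies that $\max_t D(k,t) \leq 0$ for $k \leq 182$ and $\max_t D(k,t) > 0$ for $k = 183$. The strict upper bound $D(k,t) < \log 2$ rearranges to
\[
2h(a) - h(t) - \log 2 \;<\; (2a-1-t)\log(k-1) + (1-t)\log(k+1),
\]
which I would establish using the gap $a > t$ guaranteed by the non-degeneracy of the $\ell^p$ maximizer: the excess $a - t$ scales like $k^{-1/2}$ (free-probability fluctuations), so the $\log(k-1)$ term dominates and provides the margin. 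Finally, the asymptotic attainability of $\log 2$ is obtained by taking $k \to \infty$ with $t = 1/2$: then $2h(a) - h(t) \to \log 2$, while the $\log k$ contributions cancel to leading order because $a - t = O(k^{-1/2})$, and $D$ converges to $\log 2$ from below.

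The main technical obstacle will be the sharp integer threshold $183$. Since $a(k,t)$ is implicitly defined through the free-compression description of $K_{k,t}$, pinpointing the transition between $k=182$ and $k=183$ demands either tight analytic control of $\partial_t a$ near the optimal $t$, or a certified numerical argument built on the explicit description of Section \ref{subsec:def-kkt}. By contrast, the strict upper bound $D < \log 2$ and its asymptotic tightness should follow from substantially softer manipulations of the closed-form expression for $D$ once $H(x^*_t)$ has been put in the form $h(a) + (1-a)\log(k-1)$.
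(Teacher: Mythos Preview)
Your overall architecture matches the paper's: compute both sides of the additivity inequality in the almost-sure limit via Theorem \ref{thm:minimum-pnorm-Kkt} on the single-channel side and the Bell-state output spectrum on the product side, then study the resulting explicit function $D(k,t)$. The derivation of $H(x^*_t)$ by differentiating the $\ell^p$ inequality at $p=1$ is fine, and the paper likewise relies on a certified numerical argument for the integer threshold. However, there is one concrete error that would make your proof yield the wrong threshold.

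You assert that $(\Phi_n \otimes \bar\Phi_n)(E_d)$ has, almost surely in the limit, top eigenvalue $t$ and the remaining $k^2-1$ eigenvalues equal to $(1-t)/(k^2-1)$. That is the Hayden--Winter \emph{bound}, not the exact limiting spectrum. The correct almost-sure limit (Theorem \ref{thm:bi-canal}, proved in \cite{cn1}) is
\[
\gamma^*_t \;=\; \Bigl(t + \tfrac{1-t}{k^2},\ \underbrace{\tfrac{1-t}{k^2},\ \ldots,\ \tfrac{1-t}{k^2}}_{k^2-1}\Bigr),
\]
with top eigenvalue strictly larger than $t$. The paper explicitly computes both versions of $D$: with your vector $\gamma^{HW}_t$ the first violation appears at $k=184$, whereas with the correct $\gamma^*_t$ it appears at $k=183$. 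Since the theorem is an ``iff'' about violations detected \emph{via the Bell state}, you need the exact Bell-output entropy, not a bound on it; your formula for $D$ would therefore both fail to detect the violation at $k=183$ and fail to certify its absence for $k\le 182$.

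Two smaller points. First, $a=a(k,t)$ is not merely implicit: Proposition \ref{technical-proposition}(5) gives the closed form $a=\phi(1/k,t)=t+1/k-2t/k+2\sqrt{t(1-t)(1/k)(1-1/k)}$, and the paper works directly with this. Second, your argument for the strict bound $D<\log 2$ via ``the $\log(k-1)$ term dominates'' does not go through as written: the coefficient $2a-1-t$ tends to $t-1<0$ as $k\to\infty$, so the two logarithmic terms cancel to leading order rather than one dominating. The paper instead expands $D(k,t)$ directly (using the explicit $a$) to obtain $D(k,t)=-H(t,1-t)+o(1)$ at fixed $t$ and $k\to\infty$, which gives the $\log 2$ asymptote at $t=1/2$ immediately; the strict inequality for finite $k$ is absorbed into the same numerical study that pins down the threshold.
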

The detailed version corresponding to the above is Theorem
\ref{thm:violation} and its proof.

This theorem completely solves the problem of computing the MOE for single random quantum channels. It fully clarifies and optimizes the extent to which all available techniques so far
in the problem of the additivity of the MOE can give violation of additivity.

Our paper is organized as follows.
In Section
\ref{sec:main}
we state a minimization problem arising from free probability.
Section
\ref{sec:proof} is devoted to its proof.
In Sections
\ref{sec:MOE}
and \ref{sec:limiting},
we show that this minimization problem translates into the
computation of the MOE for random quantum channels
and in Section
\ref{sec:violation}
we use this to obtain new bounds for the violation
of the MOE, which are optimal in some sense.

\bigskip

\noindent \textit{Acknowledgments.}
The authors would like to thank Motohisa Fukuda for inspiring discussions. The authors had opportunities to meet in pairs at Saarbr\"ucken, Toulouse, Lyon and Ottawa to complete their research, and thank these institutions for a fruitful working environment. One of the visits has
been supported in part by an MAE/MESR/DAAD ``Procope'' joint travel grant between Universit\'e
de Toulouse ``Paul Sabatier'' and Universit\"{a}t de Saarlandes.
SB and BC's research was supported by NSERC discovery grants.
SB's research was also partly supported by a RIG from Queen's University.
BC's research was supported by an ERA and in part by AIMR. IN's research has been supported by the ANR projects {OSQPI} {2011 BS01
  008 01} and {RMTQIT}  {ANR-12-IS01-0001-01}, as well as by the PEPS-ICQ CNRS project \mbox{Cogit}. IN also acknowledges the hospitality of the Technische Universit\"at M\"unchen, where part of this research was conducted.

\section{Definitions and statement of the main result}
\label{sec:main}

\subsection{The $(t)$-norm}
\label{sec:t-norm}

\begin{definition}\label{def:t-norm}
For a positive integer $k$, embed $\R^k$ as a selfadjoint real subalgebra $\mathcal R$ of
a $\mathrm{II}_1$ factor  $\mathcal A$, spanned by $k$ mutually orthogonal projections of normalized trace $1/k$. Let $p_t$ be a projection of rank $t \in (0,1]$ in $\mathcal A$,
free from $\mathcal R$. On the real vector space $\R^k$, we introduce the following quantity,
called the \emph{$(t)$-norm}:
\begin{equation}
        \normt{x}:=\norm{p_t x p_t}_{\infty},
\end{equation}
where the vector $x \in \R^k$ is identified with its image in $\mathcal R$.
\end{definition}

In the sequel, the notions of 
$\mathrm{II}_1$ factor and freeness do not matter.
We refer the interested reader
to our previous paper \cite{bcn1} and to reference texts
\cite{voiculescu-dykema-nica,nsp} for detail.
For the purpose of this paper, it is enough to know that even though it is difficult to
compute explicitly the $(t)$-norm, there is a simple algebraic definition of it, given
in the proposition \ref{technical-proposition} below.

We make use of the following notation:
\begin{equation}
        (1^j0^{k-j}) = (\underbrace{1, 1, \ldots, 1}_{j \text{ times}},\underbrace{0, 0, \ldots, 0}_{k-j \text{ times}}) \in \R^k,
\end{equation}
and $1^k = (1^k0^0)$. We denote by

\begin{equation}
	G_\mu(z) = \int \frac{d\mu(t)}{z-t}
\end{equation}
the Cauchy-Stieltjes transform of the measure $\mu$ and by $F_\mu = 1/G_\mu$. For $x = (x_1, \ldots, x_k) \in \mathbb R^k$, we write
\begin{equation}
\mu_x = \frac 1 k \sum_{i=1}^k \delta_{x_i}.	
\end{equation}

\begin{proposition}\label{technical-proposition}
The quantity $\normt{\cdot}$ has the following properties:
        \begin{enumerate}
        \item  It is indeed a norm.

\item
        It is invariant under permutation of coordinates
                \begin{equation*}
                        \normt{(x_1, x_2, \ldots, x_k)} = \normt{(x_{\sigma(1)}, x_{\sigma(2)}, \ldots, x_{\sigma(k)})} \qquad \forall \sigma \in S_k.
                \end{equation*}
        \item For any $x\in\mathbb R^k$,
                \begin{equation*}
                        \frac1t\normt{x} = \frac1tw_x+\left(1-\frac1t\right)F_{\mu_x}(w_x),
                \end{equation*}
where $w_x$ is the largest in absolute value solution to the equation
\begin{equation*}
F_{\mu_x}(w)\left(F_{\mu_x}'(w)-\frac{1}{1-t}\right)=0.
\end{equation*}
        \item The function $\mathbb R\ni s\mapsto\normt{x+s1^k}$ has a unique
point of minimum $s_0$, with the property that
$$
\frac{1}{t}\normt{x+s_01^k}=\max\mathrm{supp}(\mu_{x+s_01^k}^{\boxplus 1/t})=
-\min\mathrm{supp}(\mu_{x+s_01^k}^{\boxplus 1/t}).
$$
Moreover,
$$
\normt{x+s1^k}=|s-s_0|+\normt{x+s_01^k},\quad s\in\mathbb R.
$$
        \item For all $j=1, 2, \ldots, k$, one has
                \begin{equation*}
                        \normt{(1^j0^{k-j})}=\phi(u,t)=
                        \begin{cases}
                                t +u-2tu +2\sqrt{tu (1-t)(1-u)}  &\text{ if } t + u < 1,\\
                                1  &\text{ if } t + u \geq 1,
                        \end{cases}
                \end{equation*}
                where $u = j/k$. 
        \item If $x\in\mathbb R_+^k$, the $m$ largest coordinates of $x$ are all equal and  $m/k+t\geq 1$,
        then $||x||_{(t)}=||x||_\infty$
        \end{enumerate}
\end{proposition}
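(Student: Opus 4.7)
The plan is to reduce each of the six items to the free-compression calculus for $p_t x p_t$, using that $p_t$ is free from $\mathcal R$. Items (1) and (2) are structural. For (1), the map $x \mapsto p_t x p_t$ is $*$-linear and $\|\cdot\|_\infty$ is a norm, so only definiteness needs argument: if $p_t x p_t = 0$, the compressed distribution of $p_t x p_t / t$ in $(p_t \mathcal A p_t, t^{-1}\tau)$ equals $\delta_0$, and the free-compression formula recalled below forces $\mu_x = \delta_0$, contradicting $x\neq 0$. For (2), a trace-preserving $*$-automorphism of $\mathcal R$ permuting the generating projections extends to the free product with the algebra generated by $p_t$, preserving $\normt{\cdot}$.

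The heart of the argument is (3), from which (4) follows. By the Nica--Speicher free-compression formula, the distribution of $p_t x p_t / t$ inside $(p_t \mathcal A p_t, t^{-1}\tau)$ is exactly $\mu_x^{\boxplus 1/t}$, so $\normt{x} = t\cdot \max|\mathrm{supp}(\mu_x^{\boxplus 1/t})|$. The classical subordination $F_{\mu^{\boxplus 1/t}}(z) = F_\mu(\omega(z))$ combined with $R_{\mu^{\boxplus 1/t}} = t^{-1} R_\mu$ yields the inverse relation $\omega^{-1}(w) = [w - (1-t) F_{\mu_x}(w)]/t$. Edges of $\mathrm{supp}(\mu_x^{\boxplus 1/t})$ lie above critical points of this inverse, which come in two flavours: interior ones with $F_{\mu_x}'(w) = 1/(1-t)$, and degenerate ones at zeros of $F_{\mu_x}$ (i.e.\ poles of $G_{\mu_x}$, corresponding to atoms of $\mu_x$ that persist in $\mu_x^{\boxplus 1/t}$). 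Substituting the extremal solution $w_x$ then gives $\normt{x} = w_x - (1-t) F_{\mu_x}(w_x)$, which is (3). For (4), since $\mu_{x+s1^k} = \mu_x \boxplus \delta_s$ is a rigid translate, $s \mapsto \max|\mathrm{supp}(\mu_{x+s1^k}^{\boxplus 1/t})|$ takes the shape $\max(|A+s|,|B-s|)$ for real constants $A,B$ depending on $x$; it is therefore uniquely minimized at the symmetric point $s_0$ where the two quantities coincide, and grows linearly with unit slope in $|s-s_0|$ away from that minimum.

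Item (5) is classical: for free projections of traces $t$ and $u = j/k$, the distribution of $p_t q p_t$ equals the free multiplicative convolution $\mu_{p_t} \boxtimes \mu_q$, most efficiently recovered from $S_{p_t q p_t}(z) = S_{p_t}(z) S_q(z) = (1+z)^2 / [(z+t)(z+u)]$; the continuous part has edges $t + u - 2tu \pm 2\sqrt{tu(1-t)(1-u)}$ when $t+u < 1$, while for $t+u \geq 1$ an additional atom of mass $t+u-1$ appears at $1$, extending the top of the support to $1$. Item (6) then admits a clean operator-order proof that bypasses (3) entirely. Using (2), I may assume that the $m$ largest coordinates all equal $\lambda = \|x\|_\infty$; let $q\in\mathcal R$ be the selfadjoint projection onto the corresponding basis directions, with $\tau(q) = m/k$. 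Since the remaining coordinates lie in $[0,\lambda]$, we have $\lambda q \leq x \leq \lambda\cdot 1$ in the operator order. Compressing by $p_t$ preserves positivity, giving $\lambda \|p_t q p_t\|_\infty \leq \normt{x} \leq \lambda$. By (5) with $u = m/k$ and $m/k+t\geq 1$, $\|p_t q p_t\|_\infty = \phi(m/k,t)=1$, so $\normt{x} = \lambda = \|x\|_\infty$.

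The main obstacle is the subordination analysis in (3): one must argue rigorously that the outermost real edge of $\mathrm{supp}(\mu_x^{\boxplus 1/t})$ is realized at the largest-in-absolute-value real solution of the branched equation $F_{\mu_x}(w)(F_{\mu_x}'(w) - 1/(1-t)) = 0$, which requires controlling the univalence domain of $\omega$ and tracking which atoms of $\mu_x$ persist in $\mu_x^{\boxplus 1/t}$. Once (3) is settled, (4) is algebraic, (5) is a classical $S$-transform computation, and (6) reduces to the three-line operator-order sandwich above.
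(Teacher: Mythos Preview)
The paper does not actually prove this proposition: it is stated without proof as a summary of properties taken from the authors' earlier paper \cite{bcn1} and from standard free probability (the Nica--Speicher compression formula \cite{NS}, the subordination/regularity theory of \cite{BB-mathz,BB-imrn,Biane}). There is thus no in-paper argument to compare your proposal against.

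That said, your outline is the standard route and is essentially correct. Your reductions for (1), (2), (5), (6) are sound; in particular the operator-order sandwich $\lambda\,p_tqp_t\le p_txp_t\le \lambda p_t$ for (6) is clean, uses only (5), and is exactly the mechanism behind the dichotomy the paper exploits later (differentiable regime $m_x/k+t<1$ versus the case $m_x/k+t\ge1$). Your derivation of (4) from the rigid translation $\mu_{x+s1^k}^{\boxplus 1/t}=\mu_x^{\boxplus 1/t}\boxplus\delta_{s/t}$ is also correct; the shape $\max(|A+s|,|B-s|)$ is right once $A,B$ are chosen as the suitably rescaled right and (negated) left edges of $\mathrm{supp}\,\mu_x^{\boxplus 1/t}$.

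The one genuine piece of work is the point you flag yourself: in (3), identifying the outermost edge of $\mathrm{supp}\,\mu_x^{\boxplus 1/t}$ with the extremal real root of $F_{\mu_x}(w)\bigl(F_{\mu_x}'(w)-\tfrac{1}{1-t}\bigr)=0$ requires separating two cases. When $m_x/k+t<1$ the edge is an endpoint of the absolutely continuous part and is produced by a critical point $F_{\mu_x}'(w)=1/(1-t)$ of the inverse subordination map; when $m_x/k+t\ge1$ the top of the support is an atom of $\mu_x^{\boxplus 1/t}$ sitting at $x_1/t$, which corresponds to the factor $F_{\mu_x}(w)=0$ (pole of $G_{\mu_x}$ at $w=x_1$). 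The paper handles exactly this split in Section~\ref{sec:proof} (see the hypotheses in Proposition~\ref{prop3.4} and Lemma~\ref{lem:position-w}), and the references \cite{BB-mathz,BB-imrn} supply the needed atom/regularity statements. Once that case analysis is filled in, your sketch becomes a complete proof.
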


\subsection{Definition of the convex body $K_{k,t}$}
\label{subsec:def-kkt}

We introduce now the convex body $K_{k,t}\subset \Delta_k$ as follows:
\begin{equation}\label{eq:convex}
        K_{k,t}:=\{ \lambda \in\Delta_{k} \; |\; \forall a\in\Delta_{k} , \scalar{\lambda}{a} \leq \normt a
        \},
\end{equation}
where $\scalar{\cdot}{\cdot}$ denotes the canonical scalar product in $\R^k$.
\begin{lemma}\label{max}
For any $\lambda \in K_{k,t}$, we have
$$
\max_{a\in\Delta_k}\scalar{\lambda}{a}-\normt{a}=\max_{a\in\mathbb R^k}\scalar{\lambda}{a}-\normt{a}\leq0.
$$
In other words, $K_{k,t}$ is the intersection of the probability simplex $\Delta_k$ with the
unit ball of the dual norm of $\normt{\cdot}$.
\end{lemma}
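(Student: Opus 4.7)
The plan is to handle the statement in two steps: first note that the inequality $\leq 0$ and the equality $\sup_{\Delta_k} f = 0$ (where $f(a) := \scalar{\lambda}{a} - \normt{a}$) are immediate, the former from the defining condition of $K_{k,t}$ and the latter because $f\bigl((1/k)1^k\bigr) = 1/k - 1/k = 0$. Then, to establish that the supremum over $\R^k$ is no larger, I would exploit the positive $1$-homogeneity of $f$: this gives $\sup_{\R^k} f \in \{0, +\infty\}$, so one only has to rule out $+\infty$.

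The central argument is to study, for a fixed $a \in \R^k$, the one-variable function $g(s) := f(a + s 1^k)$. Using $\scalar{\lambda}{1^k} = 1$ (because $\lambda \in \Delta_k$) together with part (4) of Proposition \ref{technical-proposition}, one gets
\begin{equation*}
g(s) = \scalar{\lambda}{a} + s - |s - s_0| - \normt{a + s_0 1^k},
\end{equation*}
where $s_0 = s_0(a)$ is the minimizer from part (4). Hence $g$ is affine with slope $2$ on $(-\infty, s_0]$ and constant on $[s_0, +\infty)$; in particular $g$ is nondecreasing on $\R$ and attains its supremum on $[s_0, +\infty)$.

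To finish, pick any $s$ large enough that $a + s 1^k$ has strictly positive entries and $\sigma(s) := ks + \sum_i a_i > 0$. Then $b(s) := (a + s 1^k)/\sigma(s)$ lies in $\Delta_k$, so $\lambda \in K_{k,t}$ yields $\scalar{\lambda}{b(s)} \leq \normt{b(s)}$; by homogeneity this rearranges into $g(s) \leq 0$. Combined with the fact that $g$ is constant on $[s_0, +\infty)$ and nondecreasing on $(-\infty, s_0]$, we conclude $g \leq 0$ on all of $\R$, and specializing to $s = 0$ gives $f(a) \leq 0$, as required.

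The main obstacle, conceptually, is that the defining inequality of $K_{k,t}$ quantifies only over probability vectors, whereas the claim concerns arbitrary $a \in \R^k$. The bridge is precisely the exact matching, coming from part (4) of Proposition \ref{technical-proposition}, between the slope $1$ of $s \mapsto \scalar{\lambda}{a + s 1^k}$ and the eventual slope $1$ of $s \mapsto \normt{a + s 1^k}$: this is what makes $g(s)$ saturate for large $s$, and allows the saturation value to be estimated from within $\Delta_k$.
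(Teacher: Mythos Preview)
Your proof is correct and follows essentially the same approach as the paper: both arguments use Proposition~\ref{technical-proposition}(4) to analyze $s\mapsto \scalar{\lambda}{a+s1^k}-\normt{a+s1^k}$, observe that it is nondecreasing and eventually constant (since $\scalar{\lambda}{1^k}=1$), and then invoke positive homogeneity to reduce the problem to $\Delta_k$. The only cosmetic difference is that the paper first passes to $\R^k_+$ and then normalizes, whereas you combine these into a single step by taking $s$ large; your initial remark that homogeneity forces $\sup_{\R^k}f\in\{0,+\infty\}$ is correct but ultimately unused, since your line-by-line argument already yields $f(a)\le 0$ directly.
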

\begin{proof}
Fix $\lambda\in K_{k,t}$ and $a_0\in\mathbb R^k$. By Proposition \ref{technical-proposition},
it follows that $\normt{a_0+s1^k}=|s-s_0|+\min_{r\in
\mathbb R}\normt{a_0+r1^k}$, where $s_0$ is exactly the point where this minimum is
reached.
We have
\begin{eqnarray*}
\scalar{\lambda}{a+s1^k}-\normt{a+s1^k} & = & \scalar{\lambda}{a}+s-(|s-s_0|+
\normt{a+s_01^k})\\
& = & \scalar{\lambda}{a+s_01^k}-\normt{a+s_01^k}\\
& & \mbox{}-|s-s_0|+(s-s_0)\\
& \leq & \scalar{\lambda}{a+s_01^k}-\normt{a+s_01^k},
\end{eqnarray*}
with equality if and only if $s\ge s_0$.
Thus,
$\scalar{\lambda}{a+\|a\|_\infty1^k}-\normt{a+\|a\|_\infty1^k}=\max_{s\in\mathbb R}
\scalar{\lambda}{a+s1^k}-\normt{a+s1^k},$ and $a+\|a\|_\infty1^k\in\mathbb R_+^k$.
This proves the fact that
$$
\max_{a\in\mathbb R^k}\scalar{\lambda}{a}-\normt{a}=\max_{a\in\mathbb R^k_+}\scalar{\lambda}{a}-\normt{a}.
$$
We have assumed that $\lambda\in K_{k,t}$, so that $\max_{a\in\Delta_k}\scalar{\lambda}{a}-\normt{a}\leq0$. Thus, $\max_{a\in\mathbb R^k_+}\scalar{\lambda}{a}-\normt{a}$
cannot be a positive number or $+\infty$. Indeed, if this were not the case, then for any
$a\in\mathbb R^k_+$ so that $\scalar{\lambda}{a}-\normt{a}>0$ we can take
$\scalar{\lambda}{a/\|a\|_1}-\normt{a/\|a\|_1}=(\scalar{\lambda}{a}-\normt{a})/\|a\|_1$
to obtain a contradiction with the condition $\lambda\in K_{k,t}$. On the other hand,
$\scalar{\lambda}{1^k}=\normt{1^k}=1$, so the equality is reached, and the above maximum is
indeed zero (see also last section of \cite{bcn1}).
\end{proof}

\subsection{Main result}

The main result of this paper is that the maximum of $\ell^p$ norm 
on $K_{k,t}$ is reached at a precise point (up to permutation of coordinates), to be identified below.
Moreover, this point does not depend on the value of $p$. The value of this maximum will be easily computed. Since most of the properties
we prove for vectors in $\Delta_k$ do not depend on the order of the coordinate entries
of those vectors, we shall often focus our attention on the subset
$$
\Delta_k^\downarrow:=\{x\in\Delta_k\colon x=(x_1\ge x_2\ge\cdots\ge x_k)\}.
$$

Recall that $e_1 = (1, 0, \ldots, 0) \in \R^k$ and let
\begin{equation}\label{eq:def-xopt}
        \xopt=\left(\normt{e_1}, \underbrace{\frac{1-\normt{e_1}}{k-1}, \ldots, \frac{1-\normt{e_1}}{k-1}}_{k-1 \text{ times}}\right).
\end{equation}

With this notation we are able to state our main result

\begin{theorem}\label{thm:minimum-pnorm-Kkt}
For any $p> 1$, the maximum of the $\ell^p$ norm on $K_{k,t}$ is reached at the point
$\xopt$.

\end{theorem}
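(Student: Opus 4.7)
The plan is to (a) verify $\xopt \in K_{k,t}$, (b) show that any maximizer of $\|\cdot\|_p$ on $K_{k,t}$ has the two-value form $(a, b, \ldots, b)$, and (c) optimize over such vectors. By permutation invariance of both $\|\cdot\|_{(t)}$ and $\ell^p$, I restrict throughout to $\lambda \in K_{k,t}\cap\Delta_k^\downarrow$. For part (a), write $\xopt = (c-b) e_1 + b \cdot 1^k$ with $c = \|e_1\|_{(t)}$ and $b = (1-c)/(k-1)$. By symmetry of the $(t)$-norm, the condition $\scalar{\xopt}{a} \leq \|a\|_{(t)}$ for all $a \in \Delta_k$ reduces (by aligning the largest coordinate of $a$ with the first slot and using positive homogeneity) to
\begin{equation}\label{eq:key-ineq}
\|a\|_{(t)} \geq (c-b)\|a\|_\infty + b\|a\|_1, \qquad a \in \R_+^k.
\end{equation}
Equality in \eqref{eq:key-ineq} holds at $a = e_1$ (by Proposition \ref{technical-proposition}(5) with $j=1$) and at $a = 1^k$ (using $c+(k-1)b=1$). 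For $a = (1^j 0^{k-j})$ it reduces to $\phi(j/k, t) \geq 1 - (k-j)b$, which follows from the concavity in $u$ of $\phi(u,t) = (\sqrt{t(1-u)} + \sqrt{u(1-t)})^2$ on $[0, 1-t]$ combined with property (6) beyond that threshold. Extending to arbitrary $a \in \R_+^k$ requires the analytic description of $\|a\|_{(t)}$ in Proposition \ref{technical-proposition}(3).

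For (b), I set up the Lagrangian of $\max\{\sum \lambda_i^p : \lambda \in K_{k,t}\}$. At a local maximum $\lambda^*$ in the relative interior of $\Delta_k$, KKT stationarity gives
\[
p(\lambda^*_i)^{p-1} = \alpha + \sum_{a \in A^*} \mu_a a_i, \qquad i = 1, \ldots, k,
\]
with $A^*$ the set of dual constraints active at $\lambda^*$ and $\mu_a \geq 0$. So the vector $((\lambda^*_i)^{p-1})_i$ lies in the cone generated by $1^k$ and the active $a$'s. The main obstacle, and the crux of the proof, is to show that at the \emph{global} maximum this cone is effectively one-dimensional, generated by a single permutation of $e_1$ (whose associated dual constraint is $\lambda_1 \leq c$). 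Such a reduction forces $\lambda^*_2 = \cdots = \lambda^*_k$, i.e.\ $\lambda^* = (a, b, \ldots, b)$. My plan is to exploit the strict concavity of $\phi$, which should yield strict inequality in \eqref{eq:key-ineq} outside the two tangency points already identified, in order to show that any candidate critical point with a richer active set admits a small feasible perturbation along which $\sum \lambda_i^p$ strictly increases, contradicting optimality.

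Finally, for (c), restrict to two-value vectors $(a, b, \ldots, b)$ with $a + (k-1)b = 1$ and $a \leq c$, the latter being the only remaining constraint (from the $e_1$-direction of (a)). The objective equals
\[
a^p + \frac{(1-a)^p}{(k-1)^{p-1}},
\]
whose derivative in $a$ is $p\bigl(a^{p-1} - ((1-a)/(k-1))^{p-1}\bigr) = p(a^{p-1} - b^{p-1})$, non-negative for $a \geq 1/k$. Hence the maximum is attained at the boundary $a = c$, yielding precisely $\xopt$. Crucially, the optimizer depends on $k$ and $t$ only through $c = \|e_1\|_{(t)}$ and not on $p$, consistent with the final claim of the theorem.
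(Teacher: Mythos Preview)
Your outline identifies the right structure, but part (b) --- the heart of the argument --- is not a proof but a plan, and the plan as stated will not close. The KKT condition you write is correct, but to conclude $\lambda^*_2 = \cdots = \lambda^*_k$ you need the active set $A^*$ to consist (up to permutation) of the single direction $a = e_1$. Your proposed tool, strict concavity of $\phi$, controls only the constraints coming from the special vectors $(1^j0^{k-j})$; the set $K_{k,t}$ is cut out by \emph{all} $a \in \Delta_k$, and there is no a priori reason the active constraints at the maximizer should lie in this finite family. Carrying out the perturbation argument you sketch would require knowing the local geometry of $\partial K_{k,t}$ near $\lambda^*$, which amounts to understanding $\normt{\cdot}$ on general vectors, not just on two-valued ones. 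Part (a) has the same defect: you verify your displayed inequality only on $(1^j0^{k-j})$ and then defer the general case without argument.

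The paper sidesteps this difficulty by a change of variables. Lemma~\ref{Lemma3.1} identifies the exposed points of $K_{k,t}$ with the gradients $\nabla\normt{x}$ at points of differentiability, so maximizing $\|\lambda\|_p^p$ over $K_{k,t}$ pulls back to maximizing $g(x) = \|\nabla\normt{x}\|_p^p$ over $x \in \Delta_k$. This trades the implicit constraint description of $K_{k,t}$ for an explicit parametrization. The two-value reduction (Theorem~\ref{thm:no-3-valued-max}) is then obtained by exhibiting, at every at-least-three-valued $x$, an explicit ascent direction for $g$; the verification requires a Hessian computation in terms of the subordination point $w(x)$ and a nontrivial polynomial inequality (Lemma~\ref{lem:inequality}). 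Finally, Proposition~\ref{prop:3.8} compares the values of $g$ on $(1^m0^{k-m})$ across $m$ and shows $m=1$ is optimal --- your step (c) assumes $m=1$ from the outset, so this comparison is work you have not done. The membership $\xopt \in K_{k,t}$ that you attempt directly in (a) falls out for free from $\xopt = \nabla\normt{e_1}$ and Lemma~\ref{Lemma3.1}(1).
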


The next section is devoted to the proof of this result.

\section{Proof of Theorem \ref{thm:minimum-pnorm-Kkt}}
\label{sec:proof}

\subsection{Strategy of the proof}

Our proof relies on the following crucial observation (see also \cite[Lemma 6.1]{bcn1}):
\begin{lemma}\label{Lemma3.1}
The convex body $K_{k,t}$ is the
image of $\Delta_k$ via the subdifferential of the $(t)$-norm $\normt{\cdot}$:
$$
K_{k,t}=\bigcup_{x\in\Delta_k}\left(\partial\normt{x}\right)\cap\Delta_k.
$$
This correspondence between $K_{k,t}$ and $\partial\normt{\cdot}$ has the following properties:
\begin{enumerate}
\item If $\normt{\cdot}$ is differentiable in $x\in\Delta_k$, then 
$\nabla\normt{x}\in K_{k,t}$.
\item In addition, for any $t<1-\frac1k$, the set of points of differentiability of $\normt{\cdot}$
is dense in $\Delta_{k}$.
\item The map $x\to \nabla ||x||_{(t)}$ is increasing, in the sense that
$\scalar{x-y}{\nabla\normt{x}-\nabla\normt{y}}\ge0$.
\item Let $x=(x_1\ge x_2\ge\cdots\ge x_k)\in\Delta_k^\downarrow$. If $x_j>x_{j+1}$ for a $j\in\{1,\dots,k-
1\}$ and $\lambda\in\partial\normt{x}$, then
$\lambda_j\ge\lambda_{j+1}$. In particular, the correspondence $x\mapsto\nabla\normt{x}$
preserves monotonicity of vector coordinates.
\end{enumerate}
\end{lemma}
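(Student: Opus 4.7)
The plan is to first identify $K_{k,t}$ with the intersection of $\Delta_k$ and the unit ball $B^*$ of the norm dual to $\normt{\cdot}$ (this is the content of Lemma~\ref{max}), and then harvest the four properties by combining standard convex-analytic facts with the algebraic information supplied by Proposition~\ref{technical-proposition}. For the set equality $K_{k,t}=\bigcup_{x\in\Delta_k}\partial\normt{x}\cap\Delta_k$, the inclusion $\supseteq$ is immediate, since every element of $\partial\normt{x}$ automatically lies in $B^*$. For $\subseteq$, given $\lambda\in K_{k,t}$ I would use the canonical witness $x_0=k^{-1}1^k\in\Delta_k$: Proposition~\ref{technical-proposition}(5) with $j=k$ gives $\normt{x_0}=1/k$, while $\lambda\in\Delta_k$ gives $\scalar{\lambda}{x_0}=1/k$; combined with $\scalar{\lambda}{y}\le\normt{y}$ for all $y\in\R^k$ (Lemma~\ref{max}), this is exactly the characterization of $\lambda\in\partial\normt{x_0}$.

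Properties (1), (3) and (4) are then soft. For (1), a differentiability point $x$ of $\normt{\cdot}$ satisfies $\partial\normt{x}=\{\nabla\normt{x}\}\subset B^*$; sum-to-one is the directional derivative in direction $1^k$, which by Proposition~\ref{technical-proposition}(4) equals $+1$ as soon as the minimum point $s_0(x)$ of $s\mapsto\normt{x+s1^k}$ is strictly negative. The norm $\normt{\cdot}$ is coordinatewise non-decreasing on $\R_+^k$---this follows from the operator realization $\normt{x}=\|p_txp_t\|_\infty$ together with the fact that $p_t(\cdot)p_t$ is positive and the operator norm is monotone---so for $x$ in the interior of $\Delta_k$ the value $\normt{x-\e 1^k}$ is strictly smaller than $\normt{x}$ for small $\e>0$, forcing $s_0(x)<0$. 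Non-negativity of $\nabla\normt{x}$ follows from the same monotonicity: the subgradient inequality $\normt{x-\e e_i}\ge\normt{x}-\e(\nabla\normt{x})_i$ combined with $\normt{x-\e e_i}\le\normt{x}$ forces $(\nabla\normt{x})_i\ge 0$. Property (3) is the general monotonicity of subgradients of a convex function, obtained by adding two subgradient inequalities. Property (4) uses $S_k$-invariance: with $\sigma=(j,j+1)$, the subgradient inequality at $x$ tested against $\sigma x$ reads $\normt{\sigma x}\ge\normt{x}+\scalar{\lambda}{\sigma x-x}$; since $\normt{\sigma x}=\normt{x}$ by Proposition~\ref{technical-proposition}(2), one concludes $(\lambda_j-\lambda_{j+1})(x_{j+1}-x_j)\le 0$, whence $\lambda_j\ge\lambda_{j+1}$ when $x_j>x_{j+1}$.

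The main obstacle is property (2). Convexity of $\normt{\cdot}$ on $\R^k$ gives differentiability almost everywhere in $\R^k$, but $\Delta_k$ has Lebesgue measure zero in the ambient space, so this soft argument is insufficient. I would instead exploit the explicit formula of Proposition~\ref{technical-proposition}(3): $w_x$ is the largest solution of an algebraic equation whose coefficients are polynomial in the moments of $\mu_x$, so both $w_x$ and $\normt{\cdot}$ depend real-analytically on the coordinates of $x$ on an open dense subset of $\Delta_k$. Combined with smoothness in the transverse direction $1^k$, where non-differentiability occurs only on the hypersurface $\{s_0(x)=0\}$, this gives density of the set of differentiability points. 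The hypothesis $t<1-1/k$ rules out the degenerate regime of Proposition~\ref{technical-proposition}(6), in which $\normt{\cdot}$ collapses to $\|\cdot\|_\infty$ on subsets of $\R_+^k$ and thereby produces genuine non-differentiability on open subsets of $\Delta_k$.
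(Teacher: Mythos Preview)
Your argument for the inclusion $K_{k,t}\subseteq\bigcup_x\partial\normt{x}\cap\Delta_k$ is genuinely different from, and simpler than, the paper's. The paper chooses for each $\lambda\in K_{k,t}$ a maximizer $a_0$ of $a\mapsto\scalar{\lambda}{a}-\normt{a}$ and shows $\lambda\in\partial\normt{a_0}$. Your universal witness $x_0=k^{-1}1^k$ works for every $\lambda$ at once: since $\scalar{\lambda}{x_0}=\normt{x_0}=1/k$ and (by Lemma~\ref{max}) $\scalar{\lambda}{y}\le\normt{y}$ for all $y\in\R^k$, one has $\lambda\in\partial\normt{x_0}$ immediately. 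This is cleaner; the paper's $\lambda$-dependent $a_0$ is not needed for the set equality itself.

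For item (1) you are actually more explicit than the paper, which simply writes ``$\partial\normt{x}=\{\nabla\normt{x}\}$, the proof of item (1) is complete'' without verifying $\nabla\normt{x}\in\Delta_k$. Your monotonicity arguments are correct in spirit, but as written they need $x-\e 1^k,\,x-\e e_i\in\R_+^k$, i.e.\ $x$ in the relative interior. The fix is cosmetic: use $x+\e 1^k$ and $x+\e e_i$ instead, which always lie in $\R_+^k$; monotonicity then gives $s_0(x)\le 0$, and differentiability rules out $s_0(x)=0$, so $\sum_i(\nabla\normt{x})_i=+1$; likewise the right difference quotient in direction $e_i$ is nonnegative, so $(\nabla\normt{x})_i\ge 0$.

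For item (2) there is a real gap. The paper does not argue from algebraicity; it invokes \cite[Remark 6.3]{bcn1}, which identifies the non-differentiability locus of $\normt{\cdot}$ in $\Delta_k^\downarrow$ exactly as the set of $x$ with $x_1=\cdots=x_m$ for some $m\ge k(1-t)$, hence a face of codimension $\ge 1$ when $t<1-1/k$. Your route via the implicit-function/discriminant argument for $w_x$ is plausible, but what you have written does not yet establish that the exceptional algebraic set meets $\Delta_k$ in a set with empty relative interior; one must still check that this discriminant locus does not contain an open piece of the affine hyperplane $\sum x_i=1$. This is true and essentially amounts to the cited remark, but it needs to be said. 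Items (3) and (4) match the paper's arguments (your transposition version of (4) is exactly the paper's rearrangement argument specialized to $\sigma=(j\,j{+}1)$).
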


\begin{proof}
We shall prove the main statement of our lemma by double inclusion. By definition,
$$
\partial\normt{x}=\left\{\lambda\in\mathbb R^k\; |\; \forall b\in\mathbb R^k , \scalar{\lambda}{b-
x} \leq \normt{b}-\normt{x}\right\}.
$$
Let now $\lambda\in K_{k,t}$, which, by definition means that $\lambda\in\Delta_k$ and
$\scalar{\lambda}{a}\leq\normt{a}$ for all $a\in\Delta_k$. Let $a_0\in\Delta_k$ be so that
$\scalar{\lambda}{a_0}-\normt{a_0}=\max_{a\in\Delta_k}\scalar{\lambda}{a}-\normt{a}=
\max_{a\in\mathbb R^k}\scalar{\lambda}{a}-\normt{a}$ (by Lemma \ref{max}). We claim that $\lambda\in
\partial\normt{a_0}$. Indeed, for an arbitrary $b\in\mathbb R^k$,
$$
\scalar{\lambda}{b-a_0}\leq\normt{b}-\normt{a_0}\iff\scalar{\lambda}{b}-\normt{b}\leq
\scalar{\lambda}{a_0}-\normt{a_0}.
$$
The right hand side of the equivalence is true by the definition of $a_0$, while the left
hand side is the condition in the the definition of $\partial\normt{a_0}$. Thus ``$\subseteq$'' is proved.

To prove ``$\supseteq$'', choose $x\in\Delta_k$. If $\lambda\in\partial\normt{x}\cap\Delta_k$, then by
definition $\scalar{\lambda}{b}-\scalar{\lambda}{x}\leq\normt{b}-\normt{x}$ for all $b\in
\mathbb R^k$. Then, for given $a\in\Delta_k$ we choose $b=a+x$ to conclude that
$\scalar{\lambda}{a}\leq\normt{a+x}-\normt{x}$. By the triangle inequality
$\normt{a+x}-\normt{x}\leq\normt{a}$, which gives us $\scalar{\lambda}{a}\leq\normt{a}.$
Since this holds for any arbitrary $a\in\Delta_k$, by the definition of $K_{k,t}$ we obtain that
$\lambda\in K_{k,t}$. This gives us the required inclusion.

Since at points of differentiability we have $\partial\normt{x}=\{\nabla\normt{x}\}$,
the proof of item (1) is complete.

We have shown in \cite[Remark 6.3]{bcn1} that the set of points of non-{differentiability} of $\normt{\cdot}$
is simply the set of points $(x_1,\dots,x_k)\in\Delta_k^\downarrow$ with the property that
$x_1=\cdots=x_m$ for an $m\ge k(1-t)$. This describes a face of dimension at most $k-m$
of $\Delta_k^\downarrow$, proving item (2).

Item (3) is a straightforward consequence of the convexity of $\normt{\cdot}$ (see
\cite[Proposition 17.10]{Bau}).

Finally, let $x=(x_1\ge x_2\ge\cdots\ge x_k)\in\Delta_k^\downarrow$.
By definition, $\lambda\in\partial\normt{x}$ iff $\forall b\in\mathbb R^k, \scalar{\lambda}{b-x}
 \leq \normt{b}-\normt{x}$. This last relation implies (by picking $b=0$ and $b=2x$) that
$\scalar{\lambda}{x}=\normt{x}$. Thus $\scalar{\lambda}{b} \leq \normt{b}$ for all $b\in\mathbb
R^k$. It is known that if $x$ has decreasing coordinates, then the set of scalar products of $x$
with all the vectors obtained by permuting $\lambda$'s coordinates will be maximized by making
$\lambda$'s coordinates also decreasing. If $\lambda$ has two coordinates in the wrong order,
we simply choose as $b$ the vector $x$ in which we have permuted two coordinates in such a
manner as to match the ones of $\lambda$. Since the $(t)$-norm is invariant under such a
permutation we obtain $\scalar{\lambda}{b} \leq \normt{b}=\normt{x}=
\scalar{\lambda}{x}<\scalar{\lambda}{b},$ an obvious contradiction. This proves (4).
\end{proof} 
Let us remark that item (4) of the above lemma is true for any convex function
invariant under permutation of coordinates, not only for the $(t)$-norm.

\subsection{Some technical results about the $(t)$-norm}

First we recall a couple of facts from the literature

\begin{proposition}[\cite{NS}]
The following holds true
$$\mu_{pxp}=(1-t)\delta_0+ tD_{t}\mu^{\boxplus 1/t}_x.$$
\end{proposition}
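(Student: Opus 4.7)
The plan is to reduce the identity to Voiculescu's free compression theorem in two conceptual moves: peel off the atom at $0$ coming from the kernel of $p$, then identify the remaining ``interior'' spectral mass as a dilation of a free convolution power.

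First I would split $\mu_{pxp}$ into its atom at $0$ and the rest. Because $pxp$ annihilates the range of $1-p$, which has trace $1-t$, a mass of at least $1-t$ already sits at $0$. Writing $\tau_p:=t^{-1}\tau|_{p\mathcal A p}$ for the normalized trace on the compressed algebra, one has
\begin{equation*}
\int s^n\,d\mu_{pxp}(s)=\tau((pxp)^n)=t\,\tau_p((pxp)^n)\quad\text{for all }n\ge 1,
\end{equation*}
so if $\nu$ denotes the spectral distribution of $pxp$ inside $(p\mathcal A p,\tau_p)$, comparing the $n=0$ and $n\ge 1$ moments yields $\mu_{pxp}=(1-t)\delta_0+t\nu$.

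Next I would identify $\nu$ using free compression. The key ingredient is Voiculescu's theorem (as recorded in \cite{NS}): when $p$ is free from $x$ and $\tau(p)=t$, the rescaled element $\tilde y:=t^{-1}pxp$ viewed in $(p\mathcal A p,\tau_p)$ has distribution $\mu_x^{\boxplus 1/t}$. The standard derivation goes through free cumulants, checking that $\kappa_n^{\tau_p}(\tilde y,\dots,\tilde y)=t^{-1}\kappa_n^{\tau}(x,\dots,x)$ via the vanishing of mixed cumulants of free variables evaluated on the alternating word $pxpxp\cdots p$, and then concluding with the defining scaling property $R_{\mu^{\boxplus 1/t}}=(1/t)R_\mu$ of the free convolution power.

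Finally, since $pxp=t\tilde y$, the distribution $\nu$ of $pxp$ in $(p\mathcal A p,\tau_p)$ is the pushforward of $\mu_{\tilde y}=\mu_x^{\boxplus 1/t}$ by the map $s\mapsto ts$, i.e.\ $\nu=D_t\mu_x^{\boxplus 1/t}$. Substituting this into the decomposition of the first step gives $\mu_{pxp}=(1-t)\delta_0+tD_t\mu_x^{\boxplus 1/t}$, as claimed. The one genuine piece of free-probabilistic content is the compression identity in the middle paragraph; the outer two steps are pure moment bookkeeping and a change of variables, which is exactly why the authors simply cite \cite{NS}.
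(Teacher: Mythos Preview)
Your proposal is correct. The paper does not actually prove this proposition: it is stated with a citation to \cite{NS} and used as a black box, so there is no ``paper's own proof'' to compare against. What you have written is precisely the standard derivation one finds in the Nica--Speicher framework: peel off the $(1-t)$ mass at $0$ coming from the range of $1-p$ via the moment identity $\tau((pxp)^n)=t\,\tau_p((pxp)^n)$ for $n\ge1$, then invoke the free-compression theorem to identify the law of $t^{-1}pxp$ in the corner $(p\mathcal A p,\tau_p)$ as $\mu_x^{\boxplus 1/t}$, and finally undo the scaling. Your closing remark is exactly right: the only substantive input is the compression identity, which is why the authors simply cite \cite{NS}.
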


We conclude from this that
$$\normt{x}=\|pxp\|_\infty=t\max\textrm{supp}(\mu_x^{\boxplus 1/t}).$$

We shall frequently use the following

\noindent{\bf Notation:}
\begin{equation}\label{m_x}
m_x=\#\{j \in \{1, \ldots, k\} \colon x_j=\max_{1\le r\le k}
x_r\}.
\end{equation}

Next, we have
\begin{lemma}
Let $x\in\mathbb R^k\setminus\mathbb R1^k$.
Whenever $m_x/k+t\leq1$ (and in particular when $t\le1/k$), the quantity
$\max\textrm{supp}(\mu_x^{\boxplus 1/t})$ coincides with the largest point of
non-analyticity of $F_{\mu^{\boxplus 1/t}_x}=1/G_{\mu^{\boxplus 1/t}_x}$ along the real line,
where $G_\mu$ is the Cauchy-Stieltjes transform of the measure $\mu$.
\end{lemma}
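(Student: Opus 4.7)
My plan is to set $\nu := \mu_x^{\boxplus 1/t}$ and $b := \max\textrm{supp}(\nu)$, and then show that $b$ is precisely the largest real point of non-analyticity of $F_\nu = 1/G_\nu$. Since $G_\nu$ is analytic on $\C \setminus \textrm{supp}(\nu)$ and $G_\nu(z) = \int (z-s)^{-1}\,d\nu(s) > 0$ for real $z > b$, the function $F_\nu$ is automatically analytic on $(b,+\infty)$. This gives the easy half: every real point of non-analyticity of $F_\nu$ lies in $(-\infty,b]$.

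The content of the lemma is the opposite direction, i.e., the fact that $F_\nu$ cannot be analytically continued across $b$. My plan here is to first invoke the Bercovici--Voiculescu description of atoms of free convolution powers $\mu^{\boxplus s}$ for $s = 1/t \geq 1$: an atom at $sa$ of $\mu_x^{\boxplus s}$ arises only from an atom of $\mu_x$ at $a$ of mass strictly greater than $1 - 1/s = 1 - t$. Under the hypothesis $m_x/k + t \leq 1$, the largest atomic mass of $\mu_x$ is $m_x/k \leq 1-t$, so $\nu$ has no atoms at all; meanwhile $x \notin \R 1^k$ ensures that $\nu$ is a genuine, non-degenerate compactly supported probability measure (in particular $b$ is finite and is not an isolated point of the support).

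Then I argue by contradiction: suppose $F_\nu$ admits an analytic extension $\tilde F$ to a complex neighborhood $U$ of $b$. Because $F_\nu$ is real-valued on $U \cap (b,+\infty)$, Schwarz reflection combined with the identity principle forces $\tilde F(\bar z) = \overline{\tilde F(z)}$ on $U$, so $\tilde F$ takes real values on $U \cap \R$. Its zeros being isolated, after shrinking $\varepsilon > 0$ we may assume $\tilde F$ is real and nonzero on $(b-\varepsilon,b)$. For $x$ in that interval the non-tangential boundary value $G_\nu(x+i0^+) = 1/\tilde F(x)$ is therefore real, and Stieltjes inversion forces the absolutely continuous density of $\nu$ to vanish on $(b-\varepsilon,b)$. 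Combined with the absence of atoms, this yields $\nu((b-\varepsilon,b]) = 0$, contradicting $b \in \textrm{supp}(\nu)$.

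The main obstacle I anticipate is the appeal to the Bercovici--Voiculescu atom formula; the hypothesis $m_x/k + t \leq 1$ is exactly what one needs so that $\nu$ is atomless, and without it the Schwarz reflection argument could fail at an atom of $\nu$ (where $F_\nu$ has a zero that does admit a local analytic extension). Everything else is a fairly routine boundary-value exercise in complex analysis.
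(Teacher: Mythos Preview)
The paper states this lemma without proof, treating it as a known consequence of the regularity theory for free convolution semigroups developed in the references \cite{BB-mathz,BB-imrn,Biane} cited immediately afterward. Your argument is correct and is precisely the kind of proof those references would supply: the Bercovici--Voiculescu atom criterion rules out atoms of $\nu=\mu_x^{\boxplus1/t}$ under the hypothesis $m_x/k\leq 1-t$, and then the Schwarz reflection / Stieltjes inversion step excludes analytic continuation of $F_\nu$ through $b=\max\mathrm{supp}(\nu)$.

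One small sharpening worth noting: once you know that $G_\nu=1/\tilde F$ extends continuously to $(b-\varepsilon,b)$ with real boundary values, Stieltjes inversion already yields $\nu\big((b-\varepsilon,b)\big)=0$ for the \emph{full} measure, not merely its absolutely continuous part, so there is no need to worry separately about a hypothetical singular continuous component (which is in any event known to be absent for $\mu^{\boxplus s}$ with $s>1$). The appeal to atomlessness is still not wasted, however: it gives $\nu(\{b\})=0$, which together with $\nu\big((b-\varepsilon,b)\big)=0$ and $\nu\big((b,\infty)\big)=0$ produces the desired contradiction with $b\in\mathrm{supp}(\nu)$.
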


We denote from now on $\omega_t(\normt{x}/t)=w$ (sometimes $=w(x)$, as the dependence
in $t$ will not be interesting here and we suppress it), where $\omega_t$ is the so-called
subordination function, uniquely determined by the functional equation \cite{BB-mathz, BB-imrn, Biane}
$$
\omega_t(z)=tz+(1-t)F_{\mu^{\boxplus 1/t}_x}(z)=tz+(1-t)F_{\mu_x}(\omega_t(z)).
$$
\begin{proposition}\label{prop3.4}
Let $x\in\mathbb R^k\setminus\mathbb R1^k$.
Whenever $m_x/k+t\leq1$
(and in particular when $t\le1/k$),
the following holds true:
\begin{equation}\label{1}
\left[\sum_{j=1}^k\frac{1}{w-x_j}\right]^2=
k(1-t)\sum_{j=1}^k\frac{1}{(w-x_j)^2},\quad x\in\mathbb R^k_+.
\end{equation}
\end{proposition}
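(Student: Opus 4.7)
The plan is to read the identity as a branch-point condition for the subordination function. By the lemma immediately preceding the statement, the hypothesis $m_x/k + t \leq 1$ forces $\normt{x}/t$ to be the largest real point of non-analyticity of $F_{\mu_x^{\boxplus 1/t}}$. Via the subordination relation
$$z \;=\; \frac{\omega_t(z) - (1-t)F_{\mu_x}(\omega_t(z))}{t},$$
this non-analyticity is inherited from a branch point of $\omega_t$, that is, from a point $w := \omega_t(\normt{x}/t)$ at which the right-hand side, viewed as a function of $\omega$, has vanishing derivative. Differentiating, the branch-point condition reads
$$F'_{\mu_x}(w) \;=\; \frac{1}{1-t}.$$

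To reconcile this with the two-factor equation in item (3) of Proposition~\ref{technical-proposition}, I would argue that our hypothesis selects the factor $F'_{\mu_x}(w) - 1/(1-t)$ rather than $F_{\mu_x}(w)$. Indeed, $F_{\mu_x}(w)=0$ would force $w$ to be a pole of $G_{\mu_x}$, hence $w\in\{x_1,\ldots,x_k\}$; by item (6) of Proposition~\ref{technical-proposition}, this possibility is associated with the complementary regime $m_x/k + t \geq 1$. Under the strict hypothesis $m_x/k + t < 1$ it is therefore the derivative factor that must vanish, and the boundary equality case can be absorbed by continuity from within the strict region (items (5)--(6) of Proposition~\ref{technical-proposition} show that the two branches of $\phi(u,t)$ agree on $t + u = 1$).

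The remainder is a routine computation. Since $\mu_x = k^{-1}\sum_j \delta_{x_j}$, we have
$$F_{\mu_x}(w) \;=\; \frac{k}{\displaystyle\sum_{j=1}^{k}\frac{1}{w-x_j}}, \qquad F'_{\mu_x}(w) \;=\; \frac{k\displaystyle\sum_{j=1}^{k}\frac{1}{(w-x_j)^2}}{\displaystyle\left(\sum_{j=1}^{k}\frac{1}{w-x_j}\right)^{2}},$$
and substituting into $F'_{\mu_x}(w) = 1/(1-t)$ followed by clearing denominators yields identity (1) immediately. The only genuine obstacle in this argument, handled in the middle paragraph, is the dichotomy in item (3) of Proposition~\ref{technical-proposition}: one must be sure that the hypothesis $m_x/k + t \leq 1$ really does pick out the derivative factor as the vanishing one. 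Once that dichotomy is resolved via the edge analysis, the algebraic identity is forced.
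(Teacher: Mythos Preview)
Your argument is correct and follows essentially the same route as the paper: both identify $\normt{x}/t$ as the edge point of non-analyticity via the preceding lemma, then use the subordination relation to convert this into the condition $(1-t)F'_{\mu_x}(w)=1$ (the paper phrases this as ``$\omega_t'$ is infinite'', you phrase it as ``$dz/d\omega=0$'', which is equivalent), and finally expand $F'_{\mu_x}$ explicitly. Your middle paragraph reconciling with the two-factor equation in item (3) of Proposition~\ref{technical-proposition} is extra caution not present in the paper's proof and is in fact unnecessary, since the branch-point argument of your first paragraph already singles out the derivative condition directly without passing through that dichotomy.
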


\begin{proof}
Indeed, as seen just above the statement of our proposition,  
the function
 $\omega_t$ is analytic at $x$ iff $F_{\mu^{\boxplus 1/t}_x}$ is.
Now we differentiate the above:
$$
\omega'_t(z)=t+(1-t)F'_{\mu_x}(\omega_t(z))\omega_t'(z).
$$
This implies that in the point $\normt{x}/t$ where $\omega_t'$ is infinite we have
$$
1=(1-t)F'_{\mu_x}(\omega_t(\normt{x}/t))=(1-t)\frac{\frac1k\sum_{j=1}^k\frac{1}{(\omega_t(
\normt{x}/t)-x_j)^2}}{
\left[\frac1k\sum_{j=1}^k\frac{1}{\omega_t(\normt{x}/t)-x_j}\right]^2}.
$$
This completes the proof.
\end{proof}

We now state and prove a lemma regarding the position of the point $w$ with respect to $x$.

\begin{lemma}\label{lem:position-w}
For all vectors $x \in \Delta_k^\downarrow$ such that $t+m_x / k < 1$, we have that $w = w(x) >x_1$.
\end{lemma}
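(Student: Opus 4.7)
The plan is to exploit the characterization of $w = \omega_t(\normt{x}/t)$ furnished by Proposition~\ref{prop3.4}. Dividing both sides of equation~\eqref{1} by $\bigl[\sum_j (w-x_j)^{-1}\bigr]^2$ and recognizing the resulting quotient as $F'_{\mu_x}(w)/k$, one sees that \eqref{1} is equivalent to
$$F'_{\mu_x}(w) = \frac{1}{1-t},$$
which is precisely the identity that emerges from differentiating the subordination equation $\omega_t(z) = tz + (1-t) F_{\mu_x}(\omega_t(z))$ at the critical point $z = \normt{x}/t$ where $\omega_t'$ blows up. The strategy is to compare this value of $F'_{\mu_x}(w)$ with the value of $F'_{\mu_x}$ at the boundary point $x_1$.

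Next I would compute $F'_{\mu_x}(x_1)$ explicitly by isolating the pole of $G_{\mu_x}$ at $x_1$ from the analytic remainder:
$$G_{\mu_x}(w) = \frac{m_x/k}{w - x_1} + h(w), \qquad h(w) := \frac{1}{k}\sum_{j=m_x+1}^k \frac{1}{w - x_j},$$
with $h$ analytic at $x_1$. Taking the reciprocal yields $F_{\mu_x}(w) = k(w-x_1)/[m_x + k(w-x_1) h(w)]$, which is analytic at $x_1$ with $F_{\mu_x}(x_1) = 0$ and $F'_{\mu_x}(x_1) = k/m_x$. The hypothesis $t + m_x/k < 1$ rearranges exactly to $k/m_x > 1/(1-t)$, so $F'_{\mu_x}(x_1) > 1/(1-t) = F'_{\mu_x}(w)$, and in particular $w \neq x_1$.

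It remains to show $w \geq x_1$, after which the strict inequality from the previous step upgrades this to $w > x_1$. For this I would use the differentiated subordination equation
$$\omega_t'(z)\bigl[1 - (1-t) F'_{\mu_x}(\omega_t(z))\bigr] = t, \qquad z > \normt{x}/t,$$
together with the fact that $\omega_t$ is real-analytic and strictly increasing to $+\infty$ on $(\normt{x}/t, \infty)$. Since $\omega_t'(z)$ is finite and positive on this interval, the bracketed factor is positive, i.e., $F'_{\mu_x}(\omega_t(z)) < 1/(1-t)$ there. If $\omega_t(z_0) = x_1$ for some $z_0 > \normt{x}/t$, this would force $k/m_x = F'_{\mu_x}(x_1) < 1/(1-t)$, contradicting the hypothesis. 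Hence $\omega_t(z) \neq x_1$ throughout $(\normt{x}/t, \infty)$; combined with $\omega_t(z) \to \infty$ as $z \to \infty$ and continuity, this forces $\omega_t(z) > x_1$ on the whole interval. Passing to the limit $z \downarrow \normt{x}/t$ gives $w \geq x_1$, and combined with $w \neq x_1$ we conclude $w > x_1$. The main delicate point, which I would have to justify carefully by invoking standard facts from subordination theory, is the real-analyticity and strict monotonicity of $\omega_t$ on $(\normt{x}/t, \infty)$ used to run the continuity argument.
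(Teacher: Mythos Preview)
Your argument is correct but takes a genuinely different route from the paper. The paper's proof is purely elementary: it sets
\[
\phi(v) = (1-t)\,\frac{1}{k}\sum_{i=1}^k (v-x_i)^{-2} - \left[\frac{1}{k}\sum_{i=1}^k (v-x_i)^{-1}\right]^2,
\]
whose zeros on $\mathbb R\setminus\{x_1,\dots,x_k\}$ are exactly the solutions of $F'_{\mu_x}(v)=1/(1-t)$, and checks the two limits $\phi(v)\to +\infty$ as $v\downarrow x_1$ (the hypothesis $1-t-m_x/k>0$ is precisely the sign of the leading $\epsilon^{-2}$ coefficient) and $\phi(v)\to 0^-$ as $v\to\infty$. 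The intermediate value theorem then produces a root of $\phi$ strictly to the right of $x_1$, and since $w$ is the largest such root the lemma follows in two lines.

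Your approach instead computes $F'_{\mu_x}(x_1)=k/m_x$ exactly and then uses the global behaviour of the subordination function $\omega_t$ on $(\normt{x}/t,\infty)$ to show that $\omega_t$ can never hit $x_1$ there, forcing $w=\lim_{z\downarrow\normt{x}/t}\omega_t(z)\ge x_1$. This is a nice structural explanation and the explicit value $F'_{\mu_x}(x_1)=k/m_x$ is pleasant, but the price is that you must import real-analyticity, strict monotonicity, and continuous boundary extension of $\omega_t$ on that interval --- all standard facts from \cite{BB-mathz, BB-imrn, Biane}, as you note, but considerably heavier than the paper's one-variable intermediate value argument. In short: the paper's proof is shorter and self-contained, while yours ties the lemma back to the subordination picture underlying the whole construction.
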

\begin{proof}
By definition, $w$ is the largest root of the equation $\phi(v)=0$, where
\begin{equation}
\phi(v) = (1-t) \frac{1}{k}\sum_{i=1}^k (v-x_i)^{-2} - \left[ \frac{1}{k}\sum_{i=1}^k (v-x_i)^{-1} \right]^2.
\end{equation}
We have that
\begin{equation}
\phi(x_1+\epsilon) = (1-t) \frac{1}{k} \frac{m_x}{\epsilon^2} - \left[ \frac{m_x}{k} \epsilon^{-1} \right]^2 + o(\epsilon^{-2}) = \frac{m_x}{k\epsilon^2}(1-t-m_x/k) +  o(\epsilon^{-2}),
\end{equation}
so that
\begin{equation}
\lim_{v \to x_1^+} \phi(v) = + \infty.
\end{equation}
In the same way, when $v \to \infty$, we have
\begin{equation}
\phi(v) =v^{-2} \left[ (1-t) - 1^2\right] + o(v^{-2}),
\end{equation}
so that
\begin{equation}
\lim_{v \to \infty} \phi(v) = 0^-.
\end{equation}
We conclude that there must exist at least one root of $\phi$ larger than $x_1$.
\end{proof}

\subsection{Some properties of the Hessian matrix}

In this section, we consider vectors $x$ so 
that their $(t)$-norm can be computed from the a.c.
part of $\mu^{\boxplus1/t}_x$, i.e. vectors $x$ so that $
m_x/k+t<1$. At such points $\normt{\cdot}$ is differentiable (and in fact $C^\infty$).
In particular, when $t<1/k$, the statements below hold true for
all $x\in\Delta_k \setminus \mathbb R 1^k$.

\begin{proposition}\label{prop:properties-H}
Let $H = H(x) = \nabla^2\|x\|_{(t)}$ be the Hessian matrix of the $(t)$-norm, taken at a point
$x \in \Delta_k^\downarrow$. Then $H$ has the following remarkable properties:
\begin{enumerate}
\item $H(x)x = H(x) {1^k} = 0$, where ${1^k} = (1 1 \cdots 1)$.
\item If $x$ is a two-valued vector, $x = (a a\cdots a b b \cdots b)$, then $H(x)$ has a block structure, i.e.
$H(x)_{ij} = 0$, whenever $i \leq m_x$ and $j > m_x$.
\item In particular, when $x=(1 0 \cdots 0)$, the first line (and column) of $H$ are null.
\item For every vector $x$, $H(x)_{1k} \geq 0$, with equality iff $x$ is constant or two-valued.
\end{enumerate}
\end{proposition}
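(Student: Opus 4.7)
The backbone is an explicit factored formula for the off-diagonal entries of $H(x) = \nabla^2 \normt{x}$ from which items (2), (3), (4) can all be read off; item (1) is separate and follows from abstract homogeneity of $\normt{\cdot}$.

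The plan is to introduce $u_i := 1/(w-x_i)$ (with $w = w(x)$ as in Proposition \ref{prop3.4}) and $S_n := \sum_i u_i^n$, so that the defining relation of Proposition \ref{prop3.4} reads $S_1^2 = k(1-t) S_2$. Starting from the identity $\normt{x} = w - (1-t) F_{\mu_x}(w)$ together with $F'_{\mu_x}(w) = 1/(1-t)$, a short calculation gives
$$
\partial_i \normt{x} = \frac{(1-t) k u_i^2}{S_1^2}.
$$
Implicit differentiation of $S_1^2 = k(1-t) S_2$ yields $\partial_j w = u_j^2 (S_1 - k(1-t) u_j)/A$ with $A := S_1 S_2 - k(1-t) S_3$, and substituting into $\partial_j \partial_i \normt{x}$ for $i \ne j$ and repeatedly using $S_1^2 = k(1-t) S_2$, the computation should collapse to
$$
H(x)_{ij} = \frac{2 k^2 (1-t)^2 u_i^2 u_j^2}{S_1^3\, A} \sum_{l=1}^k u_l (u_l - u_i)(u_l - u_j),\qquad i\ne j,
$$
after recognizing the algebraic identity $S_3 - (u_i+u_j) S_2 + u_i u_j S_1 = \sum_l u_l(u_l-u_i)(u_l-u_j)$. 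I expect the chief technical obstacle to be this bookkeeping; the factored form is what makes the rest transparent.

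For item (1), one does not need the formula: $\normt{\cdot}$ is positively $1$-homogeneous, so differentiating Euler's identity $\scalar{x}{\nabla \normt{x}} = \normt{x}$ once more in $x$ gives $H(x) x = 0$; and Proposition \ref{technical-proposition}(4) shows $\nabla \normt{\cdot}$ is constant along the line $x + \R 1^k$ near any point of differentiability, so differentiating in the $1^k$ direction gives $H(x) 1^k = 0$. For item (2), if $x = (a,\ldots,a,b,\ldots,b)$ has $m_x$ copies of $a$, the $u_l$'s take only two distinct values, so for $i \le m_x < j$ every summand in $\sum_l u_l(u_l-u_i)(u_l-u_j)$ vanishes and the factored formula forces $H(x)_{ij} = 0$. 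Item (3) is the specialization $x=(1,0,\ldots,0)$, $m_x = 1$: (2) kills $H_{1j}$ for $j>1$, and then $0 = (H(x)x)_1 = H_{11}$ by (1), while symmetry of $H$ handles the first column.

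For item (4), Lemma \ref{lem:position-w} gives $w > x_1$, so all $u_i > 0$. Since $u_1 = \max_l u_l$ and $u_k = \min_l u_l$, one has $(u_l - u_1)(u_l - u_k) \le 0$ for every $l$, whence $\sum_l u_l(u_l-u_1)(u_l-u_k) \le 0$. The sign of the prefactor is controlled by $A$, which I would rewrite as $A = S_1(S_2^2 - S_1 S_3)/S_2$; by Cauchy--Schwarz $(\sum u_l^2)^2 \le (\sum u_l)(\sum u_l^3)$, so $A \le 0$, strictly so unless all $u_l$ coincide (which would force $x$ constant, excluded by the standing assumption $m_x/k + t < 1$). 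Thus the prefactor is strictly negative and $H(x)_{1k} \ge 0$. Equality holds iff every summand vanishes, i.e.\ iff $u_l \in \{u_1, u_k\}$ for all $l$, i.e.\ iff $x$ is two-valued.
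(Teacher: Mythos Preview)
Your proposal is correct and rests on the same computation as the paper's: the change of variables $u_i=(w-x_i)^{-1}$, the power sums $S_n$, and for item~(4) the reduction to $\sum_l u_l(u_l-u_1)(u_l-u_k)\le 0$ together with the Cauchy--Schwarz inequality $S_2^2\le S_1S_3$. Item~(1) is also handled the same way in both.

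The organizational difference is worth noting. You derive the single factored formula
\[
H(x)_{ij}\;=\;\frac{2\,u_i^2u_j^2}{S_2(S_2^2-S_1S_3)}\sum_{l=1}^k u_l(u_l-u_i)(u_l-u_j),\qquad i\neq j,
\]
upfront (your prefactor simplifies to this via $S_1^2=k(1-t)S_2$), and then read off (2), (3), (4) directly: the sum vanishes term-by-term when $x$ is two-valued, and for $(i,j)=(1,k)$ every summand is nonpositive while the prefactor is strictly negative. The paper instead treats (2) by a direct computation specific to the two-valued case, and for (4) works through $\partial_1 w$ and several algebraic manipulations before arriving at the same inequality $\sum_i y_i(y_i-y_1)(y_i-y_k)<0$; the general closed form for $H_{ij}$ appears only later, in the proof of Theorem~\ref{thm:no-3-valued-max}. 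Your route is more economical and makes the equality case in (4) transparent, at the cost of front-loading the implicit-differentiation bookkeeping; the paper's route keeps each item self-contained but repeats work.
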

\begin{proof}
The fact that $x$ is in the null space of the Hessian is a consequence of the homogeneity of the $(t)$-norm (and it is valid for any norm), while the second part of the first point follows from the relation $\|x+a{1^k}\|_{(t)} = \|x\|_{(t)} + a$, where $a>0$.

To prove the second statement, we need to do some explicit computations. For simplicity of
notation we will suppress the variable $x$ in the notations below. We should, however, recall that
we consider the evaluation(s) of the Cauchy-Stieltjes transform $G=G_{\mu_x}$ in the point
$w=w_x=w(x)$ provided to us by Proposition \ref{technical-proposition}. Let $m$ be the number
of $a$'s in $x$ and $l$ the number of $b$'s, $m+l=k$. We have
\begin{align*}
G &= \frac{1}{k}\left( m
(w-a)^{-1} + l (w-b)^{-1} \right)\\
\partial_iG=\frac{\partial G}{\partial x_i} &=
\begin{cases}
\frac{1}{k}(w-a)^{-2} &\qquad i \leq m\\
\frac{1}{k}(w-b)^{-2} &\qquad i > m
\end{cases}\\
G' = \partial_{k+1} G=\frac{\partial G}{\partial w}&= -\frac{1}{k}\left( m(w-a)^{-2} + l (w-b)^{-2} \right)\\
\partial_i G' = \partial_i \partial_{k+1} G &=
\begin{cases}
-\frac{2}{k}(w-a)^{-3} &\qquad i \leq m\\
-\frac{2}{k}(w-b)^{-3} &\qquad i > m
\end{cases}\\
\partial_i \partial_j G &=
\begin{cases}
\frac{2}{k}(w-a)^{-3} &\qquad i =j \leq m\\
\frac{2}{k}(w-b)^{-3} &\qquad i =j > m\\
0 &\qquad i \neq j
\end{cases}\\
G'' = \partial_{k+1}^2 G &= \frac{2}{k}\left( m(w-a)^{-3} + l (w-b)^{-3} \right)
\end{align*}

First, one shows that
\begin{equation*}
\partial_1 w = \frac{2\partial_{k+1}G \partial_1 G - G \partial_1 \partial_{k+1} G}{G \partial_{k+1}^2 G - 2 (\partial_{k+1} G)^2} = \frac{w-b}{m(a-b)},
\end{equation*}
and that the exact same
formulas are true for $\partial_i w$, when $i \leq m$. Then, by direct computation, we have that
\begin{equation}\label{eq:Hij}
H_{ij} = \frac{\partial_j G \partial_i \partial_{k+1} G - \partial_i \partial_j G \partial_{k+1} G +
\partial_i w \left( \partial_j G \partial_{k+1}^2 G - \partial_{k+1} \partial_j G \partial_{k+1} G \right)}{(\partial_{k+1} G)^2} = 0,
\end{equation}
whenever $i \leq m$ and $j >m$ (actually, by symmetry, it suffices to look at $i=1$ and $j=k$).

The third point follows form the first two: only the top-left corner of $H$ can be non-zero, but it actually is null because of the $H {1^k} = 0$ condition, 
according to part (1).

The fourth statement is trivial when $x$ is constant or two-valued, by the block-structure property.
In the case where $x$ is at least three-valued, we shall show that $H_{1k}>0$. We shall use
\begin{equation}
H_{1k} = \frac{\partial_k G \partial_1 \partial_{k+1} G - \partial_1 \partial_k G \partial_{k+1} G +
\partial_1 w \left( \partial_k G \partial_{k+1}^2 G - \partial_{k+1} \partial_k G \partial_{k+1} G \right)}{(\partial_{k+1} G)^2},
\end{equation}
with
\begin{equation}
\partial_1 w = \frac{2\partial_{k+1}G \partial_1 G - G \partial_1 \partial_{k+1} G}{G \partial_{k+1}^2 G - 2 (\partial_{k+1} G)^2}.
\end{equation}
The inequality $H_{1k}>0$ is equivalent to (all the indices run from $1$ to $k$)
\begin{equation}
\partial_1 w > (w-x_1)^{-3} \left[ \sum_i (w-x_i)^{-3} - (w-x_k)^{-1}\sum_i (w-x_i)^{-2} \right]^{-1},
\end{equation}
with
\begin{equation}
\partial_1 w = \frac{(w-x_1)^{-3}\sum_i (w-x_i)^{-1} - (w-x_1)^{-2}\sum_i (w-x_i)^{-2}}{\sum_i (w-x_i)^{-1}\sum_i (w-x_i)^{-3} - \left[ \sum_i (w-x_i)^{-2} \right]^2}.
\end{equation}
Note that $w > x_1$ and, by the Cauchy-Schwarz inequality, the denominator in the equation above is positive,
so that, after some algebraic manipulations, we obtain the following inequality
\begin{equation}
\left( 1+ \frac{w-x_1}{w-x_k} \right) \sum_i (w-x_i)^{-2} > (w-x_1)
\sum_i (w-x_i)^{-3} + (w-x_k)^{-1} \sum_i (w-x_i)^{-1}.
\end{equation}
Let us put $y_i = (w-x_i)^{-1}$, so that $y_1 \geq \cdots \geq y_k$.
The inequality becomes, after multiplying by $y_1$,
\begin{align}
&(y_1 + y_k) \sum_i y_i^2 > \sum_i y_i^3 + y_1y_k \sum_i y_i \Leftrightarrow \\
& \sum_i y_i \left( y_i^2 - (y_1+y_k) y_i + y_1y_k 
\right) < 0.
\end{align}
Note that, for each $i$, $y_i \in [y_k,y_1]$, so that
\begin{equation}
 y_i^2 - (y_1+y_k) y_i + y_1y_k 
\leq 0.
\end{equation}
Moreover, since $x$ (and thus $y$) is at least three-valued, at least one of the above inequalities
is strict, proving $H_{1k}>0$.
\end{proof}

\subsection{Local maxima of $\| \nabla \|\cdot\|_{(t)} \|_p^p$ are two-valued}

Let us first argue that all 2-valued vectors $x=(aa\cdots a b b \cdots b)$ are
critical points of the function $g(x) = \| \nabla \|x\|_{(t)} \|_p^p$ 
(when understanding the notion ``critical point'' in the usual sense of ``either zero or
non-existent derivative,'' this statement holds for all such $x\in\Delta_k$ and $t\in(0,1-1/k)$).
Recall that
differentiation in the simplex $\Delta_k$ means taking derivative in directions $y\in\mathbb
R^k$ with the property that the sum of the coordinates of $y$ is zero. Thus, let $y\in\mathbb
R^k$ be so that $\sum y_j=0$. Then
\begin{eqnarray*}
g'(x;y) & = & p\scalar{((\partial_1\normt{x})^{p-1},\dots,(\partial_n\normt{x})^{p-1})}{H(x)y}\\
& = & p\scalar{H(x)((\partial_1\normt{x})^{p-1},\dots,(\partial_n\normt{x})^{p-1})}{y}.
\end{eqnarray*}
We note that $((\partial_1\normt{x})^{p-1},\dots,(\partial_n\normt{x})^{p-1})$ is two-valued,
so that, by items (1) 
and (2) of Proposition \ref{prop:properties-H},
$$H(x)((\partial_1\normt{x})^{p-1},\dots,(\partial_n\normt{x})^{p-1})=H(x)x=0.$$

 We need to show that these are the only 
points in which the derivative of $g$ vanishes.
In fact, we will prove a bit more: we will show that in any point $x$ of differentiability for
$\normt{\cdot}$ which is not two-valued we can find a direction of ascent for $g$ inside
$\Delta_k$, thus guaranteeing that such a point is not a global maximum for $g$.

Let $x$ be at least 3-valued. Since $g$ is constant on the rays starting from
${1^k}/k$, 
 we can assume
that $x_k = 0$. We shall prove any such $x$ is not a local maximum, by exhibiting a direction of
ascent 
 $y$. First, to fix notation, let $m$ and $l$ be such that
\begin{equation}
x = (\underbrace{x_1 = x_2 = \cdots = x_m}_{m \text{ times}}> x_{m+1} \geq \cdots \geq x_{k-l} >
\underbrace{0 = \cdots  =  0}_{l \text{ times}}).
\end{equation}
With this notation, $x$ belongs to a face of co-dimension $l$ of the simplex $\Delta_k$ and we have $m+l < k$ (otherwise $x$ would be constant or two-valued).

Let us consider the direction
\begin{eqnarray}
y & = & \left( x_1 - \frac{1}{k-l}, x_2 - \frac{1}{k-l}, \cdots ,x_{k-l} - \frac{1}{k-l}, \underbrace{0, \cdots, 0}_{l \text{ times}} \right)\nonumber\\
\label{eq:direction-y}& =& x - \frac{1}{k-l} (\underbrace{1,\ldots, 1}_{k-l \text{ times}},\underbrace{0,\ldots, 0}_{l \text{ times}}),
\end{eqnarray}
which corresponds to $x$ moving away from the barycenter of the face it belongs to.
An important
feature of our choice is that, for $\varepsilon > 0$ small enough, we have that
\begin{equation}
{x} + \varepsilon y \in \Delta_k^\downarrow,
\end{equation}
so we do not leave the Weyl chamber of the simplex by moving infinitesimally in the direction $y$.
The main result of this section is the following theorem, establishing that $y$ is indeed a direction of ascent.

\begin{theorem}\label{thm:no-3-valued-max}
The direction $y$ is an ascent 
direction for $g$ at the point $x$, in the sense that
\begin{equation}
g'(x;y) =  \langle \nabla g(x) , y \rangle >0.
\end{equation}
\end{theorem}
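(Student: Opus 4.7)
The plan is to express $g'(x;y)$ as a weighted sum whose positivity follows from a Chebyshev-type weighted-average inequality. The first step exploits homogeneity and symmetry: using $H(x)x = H(x)\,{1^k} = 0$ from Proposition \ref{prop:properties-H}(1) together with $y = x - \tfrac{1}{k-l}\,{1^k} + \tfrac{1}{k-l}\mathbf{1}_{[k-l+1,k]}$, one has $H(x)y = \tfrac{1}{k-l}H(x)\mathbf{1}_{[k-l+1,k]}$. The invariance of $x$ under $S_l$-permutations of its last $l$ coordinates forces $H_{ij}(x)$ to be independent of $j\in\{k-l+1,\dots,k\}$; write $\tilde h_i := H_{ik}(x)$ for that common value, and note that $\lambda_i=\lambda_\star:=1/(w^2\sigma_2)$ for $i>k-l$. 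A short bookkeeping, using $H(x){1^k}=0$ to dispatch the contributions from $i>k-l$, yields
\begin{equation*}
g'(x;y) \;=\; \frac{pl}{k-l}\sum_{i=1}^{k-l}(\lambda_i^{p-1}-\lambda_\star^{p-1})\,\tilde h_i .
\end{equation*}

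The second step is to make $\tilde h_i$ explicit. Running the off-diagonal Hessian computation that underlies the proof of Proposition \ref{prop:properties-H}(4) for general $i\ne j$ yields
\begin{equation*}
H_{ij}(x) \;=\; -\frac{2c_i^2c_j^2}{\sigma_2(\sigma_1\sigma_3-\sigma_2^2)}\sum_r c_r(c_r-c_i)(c_r-c_j),\qquad c_r := (w-x_r)^{-1}.
\end{equation*}
For $j=k$ one has $c_j=1/w$, and the inner sum becomes a linear function of $c_i$, namely $P(c)=(\sigma_3-\sigma_2/w)-c(\sigma_2-\sigma_1/w)$. Its unique root is, after using the identity $w\sigma_\alpha-\sigma_{\alpha-1}=\sum_s c_s^\alpha x_s$, the positively-weighted average $r_* := \sum_s c_s^3 x_s/\sum_s c_s^2 x_s$ of the $c_s$'s; since $x$ is at least three-valued, $r_*$ lies strictly between $c_{k-l}$ and $c_1$. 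Setting $x_0 := w-1/r_* \in (x_{k-l},x_1)$, the sign of $\tilde h_i$ equals the sign of $x_i - x_0$. The identity $\sum_{i\leq k-l}\tilde h_i x_i = 0$ follows directly from $H(x)x=0$ (reading off a row $i>k-l$, where $x_i=0$ and $H_{ij}=\tilde h_j$ for $j\leq k-l$).

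These two ingredients set the problem up as a signed-measure inequality. Put $d\nu := \sum_{i\leq k-l}\tilde h_i\,\delta_{x_i}$: by the above, $\int x\,d\nu = 0$ and the positive/negative parts of $\nu$ sit on $\{x_i>x_0\}$ and $\{x_i<x_0\}$. Writing $\lambda_i^{p-1}-\lambda_\star^{p-1}=\sigma_2^{-(p-1)}\phi(x_i)$ with $\phi(x):=(w-x)^{-(2p-2)}-w^{-(2p-2)}$, the target becomes $\int\phi\,d\nu>0$, which equals $\int(\phi-\alpha x)\,d\nu$ for any constant $\alpha$. Taking $\alpha := \phi(x_0)/x_0$ and verifying that $x\mapsto\phi(x)/x$ is \emph{strictly increasing} on $(0,w)$ makes $\phi(x)-\alpha x$ change sign exactly at $x_0$, so $(\phi(x_i)-\alpha x_i)\tilde h_i\ge 0$ for every $i\leq k-l$, with strict inequality for at least one $i$ since $x$ has at least three distinct values. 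This forces $\int\phi\,d\nu>0$ and hence $g'(x;y)>0$. The monotonicity of $\phi(x)/x$ is the only place where $p>1$ is used; after the substitution $c=1/(w-x)$ it reduces to positivity on $(1/w,\infty)$ of $h(c) := (2p-2)wc^{2p-1}-(2p-1)c^{2p-2}+w^{-(2p-2)}$, which follows from $h(1/w)=0$ together with $h'(c) = (2p-1)(2p-2)c^{2p-3}(wc-1)>0$.

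The main obstacle I expect is the middle step---deriving the closed form for $\tilde h_i$, recognising the threshold $x_0$ as the root of the linear function $P$ and, simultaneously, as a weighted average strictly between $x_{k-l}$ and $x_1$. Once this signed-measure picture is in place, both the Chebyshev-style argument and the calculus check for $h(c)$ are short.
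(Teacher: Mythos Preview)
Your proof is correct and takes a genuinely different route from the paper's, although both share the same opening reduction: writing $H(x)y=\tfrac{1}{k-l}H(x)\mathbf 1_{[k-l+1,k]}$ via $H(x)x=H(x)1^k=0$, exploiting the $S_l$-symmetry of the last block, and plugging in the explicit off-diagonal Hessian formula $H_{ij}=\tfrac{2a_i^2a_j^2}{s_2(s_1s_3-s_2^2)}[-s_3+s_2(a_i+a_j)-s_1a_ia_j]$. From there the paper proceeds purely algebraically: it collapses the target to the power-sum inequality
\[
a_k^{1+q}(s_1s_3-s_2^2)+(s_2s_{3+q}-s_3s_{2+q})-a_k(s_1s_{3+q}-s_2s_{2+q})\ge 0,\qquad q=2(p-1),
\]
normalizes $a_k=1$, expands $s_ps_r$ as a double sum, and shows the symmetric summand $f(a_i,a_j)$ factors as $a_ia_j(a_i-a_j)$ times a quantity whose positivity is checked by differentiating in $q$. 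You instead extract structure from the Hessian column: observing that $c\mapsto\sum_r c_r(c_r-c)(c_r-1/w)$ is \emph{linear}, you locate its unique root as the weighted average $r_*=\sum c_s^3x_s\big/\sum c_s^2x_s\in(c_{k-l},c_1)$, which gives $\tilde h_i$ a single sign change at the threshold $x_0=w-1/r_*$; together with the moment relation $\sum_{i\le k-l}\tilde h_ix_i=0$ (read off from a row of $H(x)x=0$), the problem becomes a Chebyshev-type signed-measure inequality, and the only analytic input needed is the strict monotonicity of $\phi(x)/x$ on $(0,w)$, which you dispatch with the one-line derivative check $h'(c)=(2p-1)(2p-2)c^{2p-3}(wc-1)>0$. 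Your argument is more conceptual and the calculus is shorter; the paper's route has the compensating virtue that its Lemma~3.7 is a clean self-contained inequality for power sums. One small wording point: the claim that $H_{ij}$ is independent of $j\in\{k-l+1,\dots,k\}$ holds verbatim only for $i\le k-l$ (the diagonal entry differs when $i>k-l$), but your bookkeeping correctly handles the $i>k-l$ contribution via $H(x)1^k=0$, so this does not affect the argument.
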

\begin{proof}
If we set
\begin{equation}
z = H(x)y,
\end{equation}
our goal is to show that
\begin{equation}\label{eq:goal}
p^{-1}g'(x;y) = \scalar{H(x)y}{((\partial_1 \|x\|_{(t)})^{p-1},\dots,(\partial_k \|x\|_{(t)})^{p-1})} >0
\end{equation}

Note that, using Proposition \ref{prop3.4} and direct computations,
\begin{equation}\label{twentynine}
\partial_i \|x\|_{(t)} = \frac{(w-x_i)^{-2}}{-\partial_{k+1}G(w)},
\end{equation}
where the denominator above is positive. 
Moreover, given the form of the direction $y$ we have chosen \eqref{eq:direction-y} and using the fact that $H(x)1^k = H(x)x = 0$ (see Proposition \ref{prop:properties-H}), one has
$$z=H(x)y=\frac{1}{k-l}H(x)(\underbrace{0,\ldots,0}_{k-l \text{ times}}, \underbrace{1,\ldots,1}_{l \text{ times}}),$$
and thus, for all $i$,
$$z_i = \frac{1}{k-l} \sum_{j=k-l+1}^k H(x)_{ij}.$$
It follows now that equation \eqref{eq:goal} is equivalent to 
\begin{equation}\label{eq:inequality-double-sum}
\sum_{i=1}^k \sum_{j=k-l+1}^k\frac{H(x)_{ij}}{(w-x_i)^{2(p-1)}} > 0.
\end{equation}
We claim that $\sum_{i=1}^k H(x)_{ij}(w-x_i)^{-2(p-1)} $ does not depend on the actual value of $j$ between $k-l+1$ and $k$. Indeed, the first $k-l$ elements are identical for the last $l$ columns of $H(x)$, since the last $l$ components of $x$ are all zero, see equation \eqref{eq:Hij}. The columns of the bottom-right $l \times l$ corner of $H(x)$ are circular permutations of each other, so their scalar products with the constant vector $w^{-2(p-1)} 1^l$ are identical. It follows that the inequality \eqref{eq:inequality-double-sum} is $l$ times the following inequality (we take $j=k$), which is now our goal:
\begin{equation}\label{eq:inequality-simple-sum}
\sum_{i=1}^k \frac{H(x)_{ik}}{(w-x_i)^{2(p-1)}} > 0.
\end{equation}
With the change of variables 
$$a_i =\frac{1}{w-x_i}$$
and by putting, for $p>0$, $s_p = \sum_{i=1}^k a_i^p$, we have
\begin{align*}
G &= k^{-1}  s_1\\
\partial_i G = \frac{\partial G}{\partial x_i} &=  k^{-1} a_i^2\\
\partial_i \partial_j G &=  k^{-1} \delta_{ij} 2a_i^3\\
\partial_{k+1} G = \frac{\partial G}{\partial w} &= - k^{-1} s_2\\
\partial_i \partial_{k+1} G &= - k^{-1} 2a_i^3\\
\partial_{k+1} \partial_{k+1} G &=  k^{-1}  2s_3\\
\partial_i \|x\|_{(t)} &=  \frac{a_i^2}{s_2}\\
H(x)_{ij} = \partial_i \partial_j  \|x\|_{(t)} &=  \frac{2\delta_{ij}a_i^3}{s_2} + \frac{2a_i^2a_j^2}{s_2(s_1s_3-s_2^2)} [ -s_3 + s_2(a_i+a_j) - s_1 a_i a_j],
\end{align*}
where the next to last equality is a rewriting of \eqref{twentynine} and the last, of \eqref{eq:Hij}. Note that in the above equations, we have, using the Cauchy-Schwarz inequality, that $s_1s_3 > s_2^2$, the equality case being excluded using the fact that the vector $x$ (and thus $a$) is not constant. With the new notation, \eqref{eq:inequality-simple-sum} is equivalent to the following inequality (we write $q=2(p-1)$)
$$a_k^{1+q}(s_1s_3-s_2^2) + (s_2s_{3+q} - s_3 s_{2+q}) - a_k(s_1s_{3+q}-s_2s_{2+q}) > 0,$$
which is proved in the following lemma, thus completing the proof.
\end{proof}

\begin{lemma}\label{lem:inequality}
Let $a_1 \geq \cdots \geq a_k > 0$ be real numbers, and $q>0$. Then (we write $s_p = \sum_{i=1}^k a_i^p$)
$$a_k^{1+q}(s_1s_3-s_2^2) + (s_2s_{3+q} - s_3 s_{2+q}) - a_k(s_1s_{3+q}-s_2s_{2+q}) \geq 0,$$
with equality if and only if the vector $a$ is at most two-valued.
\end{lemma}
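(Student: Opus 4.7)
The plan is to rewrite the left-hand side $E$ as a $3 \times 3$ determinant, expand it by multilinearity, and then reduce the resulting inequality to the strict convexity of $t \mapsto t^{1+q}$. First I would observe that expansion along the top row gives
\begin{equation*}
E = \det\begin{pmatrix} 1 & a_k & a_k^{1+q} \\ s_1 & s_2 & s_{2+q} \\ s_2 & s_3 & s_{3+q} \end{pmatrix}.
\end{equation*}
Writing $s_p = \sum_{i=1}^k a_i^p$ and using multilinearity of the determinant in the second and third rows (factoring $a_i$ from row $2$ and $a_j^2$ from row $3$) converts this into a double sum
\begin{equation*}
E = \sum_{i,j=1}^k a_i\, a_j^2\, D(a_k,a_i,a_j),\qquad D(x,y,z) := \det\begin{pmatrix} 1 & x & x^{1+q} \\ 1 & y & y^{1+q} \\ 1 & z & z^{1+q}\end{pmatrix}.
\end{equation*}

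Since $D$ is antisymmetric under swapping any two of its arguments, averaging each $(i,j)$ term with its $(j,i)$ counterpart and noting that diagonal terms vanish gives the cleaner expression
\begin{equation*}
E = \sum_{1\leq i<j\leq k} a_i a_j (a_j-a_i)\, D(a_k,a_i,a_j).
\end{equation*}

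The remaining step uses the classical fact that $D(x,y,z)$ is a generalized Vandermonde determinant whose sign detects convexity of $f(t)=t^{1+q}$. Indeed, $f''(t) = q(1+q) t^{q-1}>0$ on $(0,\infty)$ since $q>0$, so $f$ is strictly convex, which gives $D(x,y,z)>0$ whenever $x<y<z$ and $D(x,y,z)=0$ exactly when two arguments coincide. For $i<j$ we have $a_i \geq a_j \geq a_k$, so the sorted triple is $(a_k,a_j,a_i)$ and $D(a_k,a_i,a_j)$ is obtained from the nonnegative number $D(a_k,a_j,a_i)$ by one transposition; hence $D(a_k,a_i,a_j)\leq 0$. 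Combined with $a_j - a_i \leq 0$ and $a_i a_j > 0$, every summand is nonnegative and $E \geq 0$ follows.

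For the equality case, each summand vanishes precisely when $a_i=a_j$ or when two of $\{a_k,a_i,a_j\}$ coincide; in view of the ordering $a_i\geq a_j\geq a_k$, this reduces to ``$a_i=a_j$ or $a_j=a_k$''. Imposing this for every $i<j$ is equivalent to $a$ being at most two-valued, since a third intermediate value $a_i > a_j > a_k$ would immediately violate the condition on the pair $(i,j)$. The only real obstacle in this plan is spotting the determinantal identity at the start; once it is in place, the convexity of $t^{1+q}$ does the rest, and one avoids the messy direct calculus (e.g.\ differentiating $E$ in each $a_i$) that an unstructured attack would require.
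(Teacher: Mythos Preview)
Your proof is correct and takes a genuinely different route from the paper's. Both arguments ultimately write $E$ as a sum $\sum_{i<j}$ of nonnegative terms, and in fact the summands coincide: with $a_k=1$ your $a_ia_j(a_j-a_i)D(1,a_i,a_j)$ is exactly the paper's $f(a_i,a_j)=xy(x-y)g(q)$. The difference lies in how each gets there and how positivity of the summand is established. The paper normalises $a_k=1$, expands the products $s_ps_r$ directly, collects into the symmetric polynomial $f(x,y)$, factors it by hand, and then proves $g(q)=x-y-x^{1+q}+y^{1+q}+x^{1+q}y-xy^{1+q}>0$ for $x>y>1$ via a calculus argument on $g'(q)$ using the monotonicity of $u\mapsto u\ln u/(u-1)$. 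Your determinantal rewriting bypasses both the factoring step and the calculus: once $E$ is recognised as a $3\times3$ determinant, multilinearity delivers the pairwise decomposition automatically, and the sign of each generalised Vandermonde $D(a_k,a_i,a_j)$ follows immediately from the strict convexity of $t\mapsto t^{1+q}$ on $(0,\infty)$. Your equality analysis is also cleaner, since the vanishing of $D$ exactly when two arguments coincide is built into the determinant structure. The paper's approach has the minor advantage of being entirely elementary (no appeal to Chebyshev/convexity folklore), but yours is shorter, more structural, and makes transparent \emph{why} the threshold is ``at most two-valued'': it is precisely the condition that every triple $(a_k,a_i,a_j)$ has a repeated entry.
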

\begin{proof}
The fact that the above expression is zero for two-valued vectors is checked by direct computation. We assume from now on that the vector $a$ is at least three valued. Using the homogeneity of the inequality in $a$, we can assume $a_k=1$, and our task is now to show that$$s_1s_3-s_2^2 + s_2s_{3+q} - s_3 s_{2+q} + s_1s_{3+q}-s_2s_{2+q}  >0,$$
whenever the vector $a$ is at least three valued and $\min a = 1$. Developing 
$$s_p s_r = \sum_{i,j=1}^k a_i^p a_j^r,$$
one notices that the contributions from $i=j$ vanish in the expression above, in such a way that we only need to show that 
$$\sum_{1 \leq i < j \leq k} f(a_i,a_j) >0,$$
where the function $f$ is defined by 
$$f(x,y) = xy^3+x^3y - 2x^2y^2 + x^2y^{3+q} + x^{3+q}y^2-x^3 y^{2+q} - x^{2+q} y^3-xy^{3+q}-x^{3+q}y + x^2y^{2+q} + x^{2+q}y^2.$$
Moreover, since the initial vector $a$ is at least three valued and we already assumed that $a_k =1$, it suffices to show that, for all $x,y>1$, $x \neq y$, $f(x,y)>0$. To start, note that
$$f(x,y)=xy(x-y)(x-y-x^{1+q}+y^{1+q} +x^{1+q}y - xy^{1+q}).$$
Let us now assume that $x>y$ and show that, for all $1<y<x$, the function $g:(0,\infty) \to \mathbb R$,
$$g(q) = x-y-x^{1+q}+y^{1+q} +x^{1+q}y - xy^{1+q}=x-y+(y-1)x^{1+q}-(x-1)y^{1+q}$$
is strictly positive. We compute
$$g'(q) = (y-1)x^{1+q}\ln x - (x-1)y^{1+q}\ln y,$$
and, using the fact that the function 
$$x \mapsto \frac{x \ln x}{x-1}$$
 is increasing on $(1, \infty)$, we conclude that $g'(q)>0$ for all $q>0$. Together with the fact that $g(0)=0$, this shows that $f(x,y)>0$, whenever $x,y>1$ and $x \neq y$.
\end{proof}

\subsection{Maximum of $g$ on two-valued vectors}

From Theorem \ref{thm:no-3-valued-max}, we know that 
on the set of differentiability of $\normt{\cdot}$, all local maxima of the function $g$ are at most
two-valued.

\begin{proposition}\label{prop:3.8}
For any $ p > 1$ and $t\in(0,1-1/k)$, the maximum of the quantities
$\|x\|_p^p$ on the set of two-valued vectors $\{x\in K_{k,t}\colon\#\{{x_j}\colon
1\le j\le k\}=2\}$ is reached at $x_t^*=\nabla\normt{(1,0,\dots,0)}=(\varphi(1/k,t),
(1-\varphi(1/k,t))/(k-1),\dots,(1-\varphi(1/k,t))/(k-1))$,
where the function $\phi$ was defined in 
Proposition \ref{technical-proposition}.
\end{proposition}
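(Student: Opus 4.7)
The plan is to reduce the problem through three successive simplifications: (a) parametrise two-valued vectors in $K_{k,t}$, (b) identify the extremal one for each pattern, and (c) compare the resulting one-parameter family. For (a): by permutation invariance of both $K_{k,t}$ and $\|\cdot\|_p$, any two-valued $\lambda \in K_{k,t}$ may be sorted, and thus has the form
\[
\lambda = (\underbrace{\alpha,\dots,\alpha}_j,\underbrace{\beta,\dots,\beta}_{k-j}), \qquad \alpha > \beta \geq 0, \quad j\alpha + (k-j)\beta = 1, \quad 1 \leq j \leq k-1.
\]
Since $\Delta_k^\downarrow$ is the convex hull of $\{(1^r 0^{k-r})/r\}_{r=1}^k$ (an Abel summation), and the rearrangement inequality lets us restrict the test vectors $a$ in the definition of $K_{k,t}$ to $\Delta_k^\downarrow$, Proposition~\ref{technical-proposition}(5) shows that $\lambda \in K_{k,t}$ is equivalent to the partial-sum inequalities $\sum_{i=1}^r \lambda_i \leq \phi(r/k, t)$ for $r = 1,\dots,k-1$.

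For (b), the derivative $d\|\lambda\|_p^p/d\alpha = jp(\alpha^{p-1} - \beta^{p-1}) > 0$ shows $\|\lambda\|_p^p$ is strictly increasing in $\alpha$ at fixed $j$, so the maximum is at the largest admissible $\alpha$. I would verify that the binding partial-sum constraint is the $r = j$ one, giving $\alpha_{\max}(j) = \phi(j/k, t)/j$: for $r < j$, the bound $\alpha \leq \phi(r/k, t)/r$ is slacker because $u \mapsto \phi(u, t)/u$ is decreasing (from the explicit formula); for $r > j$, the bound reduces to the chord from $(j/k, \phi(j/k, t))$ to $(1,1)$ lying below the graph of $\phi(\cdot, t)$, which follows from the concavity of $u \mapsto \phi(u, t)$ (a second-derivative check yields $\partial_u^2 \phi = -\sqrt{t(1-t)}/(2[u(1-u)]^{3/2}) < 0$). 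The candidate maxima are thus
\[
\lambda_j = \Bigl(\tfrac{\phi(j/k,t)}{j},\dots,\tfrac{\phi(j/k,t)}{j},\tfrac{1-\phi(j/k,t)}{k-j},\dots,\tfrac{1-\phi(j/k,t)}{k-j}\Bigr), \quad M(j) := \|\lambda_j\|_p^p = \frac{\phi(j/k,t)^p}{j^{p-1}} + \frac{(1-\phi(j/k,t))^p}{(k-j)^{p-1}},
\]
and $\lambda_1 = x_t^*$.

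For (c), I would establish the stronger continuous claim that $F(u) := u\alpha(u)^{2p} + (1-u)\beta(u)^{2p}$, with $\alpha(u) := \sqrt{\phi(u,t)/u}$ and $\beta(u) := \sqrt{(1-\phi(u,t))/(1-u)}$, is strictly decreasing on $(0, 1)$; since $M(j) = k^{1-p}F(j/k)$, this yields $M(1) > M(j)$ for $j \geq 2$. Using the factorisation $\phi(u, t) = (\sqrt{t(1-u)} + \sqrt{u(1-t)})^2$, one obtains $\alpha(u) = \sqrt{1-t} + \sqrt{t(1-u)/u}$ and $\beta(u) = \sqrt{1-t} - \sqrt{tu/(1-u)}$, and the crucial identity $\phi'(u, t) = \alpha(u)\beta(u)$ falls out by direct differentiation. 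Then
\[
\frac{dF}{du} = p\phi'\bigl(\alpha^{2p-2} - \beta^{2p-2}\bigr) - (p-1)\bigl(\alpha^{2p} - \beta^{2p}\bigr).
\]
Substituting $\phi' = \alpha\beta$ and dividing by $\alpha^{2p}$, the sign question becomes (with $r := \beta/\alpha \in (0,1)$) the polynomial inequality $h(r) := (p-1)r^{2p} - pr^{2p-1} + pr - (p-1) < 0$, which follows from $h(1) = h'(1) = 0$ together with $h''(r) = 2p(p-1)(2p-1)r^{2p-3}(r-1) < 0$ on $(0,1)$. The degenerate regime $u \in [1-t, 1)$ where $\phi \equiv 1$ and $\beta \equiv 0$ gives $F(u) = u^{1-p}$, manifestly decreasing.

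The main obstacle is step (c), and in particular the identity $\phi'(u, t) = \alpha(u)\beta(u)$: it is this clean relationship — not at all obvious from the defining formula of $\phi$ — that turns an opaque sign computation into the neat polynomial inequality for $h$. Without it, $dF/du$ would remain a tangle of $\phi, \phi', \alpha, \beta$ with no evident cancellation.
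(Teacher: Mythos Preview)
Your argument is essentially correct, with one overclaim that is harmless and one step that is genuinely different from (and cleaner than) the paper's.

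The overclaim is in (a): you assert that membership of a sorted $\lambda$ in $K_{k,t}$ is \emph{equivalent} to the partial-sum inequalities $\sum_{i\le r}\lambda_i\le\phi(r/k,t)$, but your Abel-summation argument only establishes the necessary direction---convexity of $\normt{\cdot}$ gives the triangle inequality, not its reverse, so the values of $\normt{\cdot}$ at the extreme points of $\Delta_k^\downarrow$ do not a priori control it everywhere. This does not damage the proof: in (b) you only use the $r=j$ constraint as an \emph{upper bound} $\alpha\le\phi(j/k,t)/j$, and the fact that the bound is attained at $j=1$ follows not from your equivalence but from $x_t^*=\nabla\normt{e_1}\in K_{k,t}$ (Lemma~\ref{Lemma3.1}(1)). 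The paper proceeds the same way---it only invokes the necessary condition $a\le m^{-1}\phi(m/k,t)$.

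Where you genuinely diverge is step (c). Both proofs reduce to showing that $F(u)=u^{1-p}\phi(u,t)^p+(1-u)^{1-p}(1-\phi(u,t))^p$ is decreasing on $(0,1-t)$. The paper rewrites $F'<0$ as $\phi'<\frac{p-1}{p}\,\frac{a^p-b^p}{a^{p-1}-b^{p-1}}$ with $a=\phi/u$, $b=(1-\phi)/(1-u)$, then shows the right-hand side is increasing in $p$, reduces to the limiting logarithmic inequality at $p\to 1^+$, and verifies that case via the trigonometric substitution $t=\cos^2 r$, $u=\sin^2 s$ followed by a one-variable minimisation. Your route is more direct: the identity $\phi'=\alpha\beta$ (with $\alpha=\sqrt a$, $\beta=\sqrt b$), which the paper never isolates, collapses $F'<0$ to the single polynomial inequality $h(r)=(p-1)r^{2p}-pr^{2p-1}+pr-(p-1)<0$ for $r=\beta/\alpha\in(0,1)$, dispatched by $h(1)=h'(1)=0$ and $h''<0$. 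You thereby avoid both the monotonicity-in-$p$ reduction and the trigonometric computation; the paper's approach, by contrast, makes the $p=1$ (Shannon) case explicit as the extremal one, which has some conceptual appeal even if the verification is heavier.
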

\begin{proof}
Without loss of generality we restrict ourselves to 
$K_{k,t}\cap\Delta_k^\downarrow.$ Generally, the condition for a vector $(a,\dots,a,b,\dots,b)$
to belong to $K_{k,t}\cap\Delta_k^\downarrow$ is that $a\ge b$, $ma+(k-m)b=1$ and
$\scalar{\lambda}{(a,\dots,a,b,\dots,b)}\leq\normt{\lambda}$ for all $\lambda\in\Delta_k$,
where $m$ is the number of occurrences of $a$. In particular, for $\lambda=m^{-1}(1^m0^{k-m})
$ (notation from item (5) of Proposition \ref{technical-proposition}), it is necessary that we
have $a\leq m^{-1}\varphi(m/k,t)$. We note that the formula provided by (5), Proposition
\ref{technical-proposition}, equivalent to $\varphi(x,t)=\left(\sqrt{x(1-t)}+\sqrt{t(1-x)}\right)^2$,
$x=m/k$, is well defined -- and in fact an algebraic function -- for any $t\in(0,1)$ and $x\in
(0,1)$, not only on our domain $t\in(0,1-1/k)$, $x\in(0,1-t)$. Thus our proposition is proved if
we show that the function
\begin{equation}\label{2-normp}
x\mapsto x^{1-p}\varphi(x,t)^p+(1-x)^{1-p}(1-\varphi(x,t))^p
\end{equation}
is {\em decreasing} as a function of $x\in(0,1-t)$, for $p,t$ fixed as above. Indeed, this amounts to showing that the $\ell^p$ norms of probability vectors of the type $(a,\dots,a,b,\dots,b)$ with $a = m^{-1}\varphi(m/k,t)$ are maximized when $m=1$. We will prove this in
two steps.

First, let us investigate the aspect of its derivative:
\begin{eqnarray*}
\lefteqn{(1-p)\left[\left[\frac{\varphi(x,t)}{x}\right]^p-\left[\frac{1-\varphi(x,t)}{1-x}
\right]^p\right]}\\
& & \mbox{}+p\left[\left[\frac{\varphi(x,t)}{x}\right]^{p-1}-\left[\frac{1-\varphi(x,t)}{1-x}
\right]^{p-1}\right]\partial_x\varphi(x,t).
\end{eqnarray*}
For this expression to be strictly less than zero, we would need that the two (equivalent)
inequalities below hold:
\begin{eqnarray}
\lefteqn{
\frac{1}{p-1}\left[\left[\frac{\varphi(x,t)}{x}\right]^{p-1}-\left[\frac{1-\varphi(x,t)}{1-x}
\right]^{p-1}\right]\partial_x\varphi(x,t)}\label{one}\\
&<& \frac1p\left[\left[\frac{\varphi(x,t)}{x}\right]^p-\left[\frac{1-\varphi(x,t)}{1-x}
\right]^p\right],\nonumber
\end{eqnarray}
\begin{equation}\label{two}
\partial_x\varphi(x,t)<\frac{p-1}{p}
\frac{\left[\frac{\varphi(x,t)}{x}\right]^p-\left[\frac{1-\varphi(x,t)}{1-x}\right]^p
}{\left[\frac{\varphi(x,t)}{x}\right]^{p-1}-\left[\frac{1-\varphi(x,t)}{1-x}
\right]^{p-1}
}.
\end{equation}
Our strategy is to first show that the map $p\mapsto\frac{p-1}{p}\frac{a^p-b^p}{a^{p-1}-b^{p-1}}
$ for fixed
$$
0<b=\frac{1-\varphi(x,t)}{1-x}<1<a=\frac{\varphi(x,t)}{x}
$$
is increasing on $[1,+\infty)$, and then show that inequality \eqref{one}
holds when we take
$p\searrow1$. Continuity in $p$ will then provide the desired result.
 Since $\varphi(x,t)>x$, we indeed have
$0<b<1<a.$

 Let us now prove the
first step. For simplicity, we shall let $c=a/b>1$ and then $
\frac{p-1}{p}\frac{a^p-b^p}{a^{p-1}-b^{p-1}}=\frac{a}{c}\cdot\frac{c^p-1}{p}\cdot
\frac{p-1}{c^{p-1}-1}.$ Thus, it will be enough to show that
$$
f_c\colon[1,+\infty)\to[0,+\infty),\quad f_c(p)=\frac{c^p-1}{p}\cdot
\frac{p-1}{c^{p-1}-1}
$$
is increasing. We re-write $f_c$ as $f_c(p)=\left(1-\frac1p\right)c+\left(1-\frac1p\right)
\frac{c-1}{c^{p-1}-1}$.
Then $\partial_pf_c(p)=\frac{1}{p^2}c+\frac{1}{p^2}\frac{c-1}{c^{p-1}-1}-
\left(1-\frac1p\right)\frac{(c-1)c^{p-1}\log c}{(c^{p-1}-1)^2}$. Clearly,
$\partial_pf_c(1^+)=+\infty$,
while $\partial_pf_c(+\infty)=0$. Thus, close to 1, $f_c$ is indeed necessarily increasing, 
regardless of $c>1$. The statement $\partial_pf_c(p)\ge0$ is equivalent to
$\frac{(c^p-1)(c^{p-1}-1)}{c^{p-1}(c-1)}-p(p-1)\log(c)\ge0$. We denote the left hand side
by $h(c)$ (since we shall analyse here the dependence on $c$, we suppress from the notation the
dependence in $p$). First note that $h(1^+)=0$. We have
$$
h'(c)=\frac{(c^p-cp+p-1)((p-1)c^{p}-pc^{p-1}+1)}{c^{p}(c-1)^2}.
$$
As all factors in this expression are trivially positive when $p,c\in(1,+\infty)$, so is $h'(c)$.
This completes the first step.

Note that inequality \eqref{one} when $p\searrow 1$ becomes simply
$$
\left(\log\left[\frac{\varphi(x,t)}{x}\right]-\log\left[\frac{1-\varphi(x,t)}{1-x}\right]\right)
\partial_x\varphi(x,t)\leq\frac{\varphi(x,t)}{x}-\frac{1-\varphi(x,t)}{1-x}.
$$
It will be convenient to divide by $\partial_x\varphi(x,t)$ in the above and move all terms
to the right before differentiating in $x$ in order to find the point of minimum for
this expression and find it to be nonnegative:
\begin{equation}\label{sh}
x\mapsto
\frac{\frac{\varphi(x,t)}{x}-\frac{1-\varphi(x,t)}{1-x}}{\partial_x\varphi(x,t)}
-\left(\log\left[\frac{\varphi(x,t)}{x}\right]-\log\left[\frac{1-\varphi(x,t)}{1-x}\right]\right).
\end{equation}

The expression of this derivative in $x$ is too cumbersome to be provided, but
the change of variable $t=\cos^2(r)$ and $x=\sin^2(s)$, where we will allow $r$ to vary in $(0,\pi
/2)$ and $s$ in ($0,r)$ allows a simplification. It can
be shown that (see \cite{num} for the details) with these variables, this derivative is
$$
-16\frac{\cos^3(r)\sin(r-2s)}{\sin^3(2s)\sin^2(2(r-s))},
$$
which cancels only at $s=\frac{r}{2}$.
Our function from equation 
\eqref{sh} becomes
$$
4\frac{\cos(r)\cos(r-2s)}{\sin(2r-2s)\sin(2s)}+2\log\left(\tan(r-s)\tan(s)\right).
$$
The value at the critical point is
$4\frac{\cos(r)}{\sin^2(r)}+4\log(\tan(r/2))$, positive whenever $r\in(0,\pi/2)$. Indeed,
its derivative as a function of $r$ is \[\displaystyle - \frac{1}{\sin \left( r \right)}-2\,{\frac { \cos^2\left( r \right)}{ \sin^3\left( r \right)}}+ \left( \frac12+\,{\frac { \sin^2 \left( r/2 \right)}{2 \cos^2 \left( r/2 \right) }} \right) \frac{\cos \left( r/2\right)}{\sin \left( r/2 \right) }\]
or, in a nicer form,
$$
-\frac{1}{\sin \left( r \right) }-2\,{\frac { \cos^2 \left( r \right)}{\sin^3 \left( r \right)}}+\frac{1}{\sin(r)}=-2\,{\frac { \cos^2 \left( r \right)}{\sin^3 \left( r \right)}},
$$
which is obviously negative.

This way we have proved the positivity of the function in \eqref{sh} for all $t\in(0,1)$ and
$x\in(0,1-t)$, which concludes our proof.
\end{proof}

\begin{corollary}\label{thm:g-global-maximum}
When $t\le 1/k$,
the global maximum of the function $g(x)=\|\nabla\normt{x}\|_p^p$, $x\in\Delta_k$,
is attained at the point $e_1 = (1, 0, \ldots, 0)$. We have
\begin{equation}
g(e_1) =\|x^*_t\|_p^p=\max_{\lambda \in K_{k,t}} \|\lambda\|_p^p =
 \phi(1/k,t)^p + (k-1)\left[1-\phi(1/k,t)\right]^p,
\end{equation}
where the function $\phi$ was defined in 
Proposition \ref{technical-proposition}.
\end{corollary}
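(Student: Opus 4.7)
My plan is to chain Lemma~\ref{Lemma3.1}, Theorem~\ref{thm:no-3-valued-max} and Proposition~\ref{prop:3.8} into a three-step reduction.

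First, I would transfer the problem from $K_{k,t}$ to $\Delta_k$. The hypothesis $t\le 1/k$ ensures $m_x/k+t\le 1$ for every non-constant $x\in\Delta_k$, so $\normt{\cdot}$ is differentiable there by the discussion preceding Proposition~\ref{prop:properties-H}. By Lemma~\ref{Lemma3.1}, $K_{k,t}$ is the image of $\Delta_k$ under the subdifferential of $\normt{\cdot}$, and on the (dense) differentiability locus this image is exactly $\nabla\normt{x}$. Together with the continuity of $\lambda\mapsto\|\lambda\|_p^p$, this gives
\begin{equation*}
\max_{\lambda\in K_{k,t}}\|\lambda\|_p^p \;=\; \max_{x\in\Delta_k\setminus\R 1^k} g(x),\qquad g(x):=\|\nabla\normt{x}\|_p^p.
\end{equation*}
The constant-vector contribution is only $k^{1-p}$ and is negligible, since $\phi(1/k,t)>1/k$ forces $g(e_1)>k^{1-p}$.

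Second, I would use Theorem~\ref{thm:no-3-valued-max} to restrict the search to at-most-two-valued vectors. At any $x\in\Delta_k^\downarrow$ taking three or more distinct values, the direction $y$ of \eqref{eq:direction-y} is an ascent direction for $g$ that keeps $x+\varepsilon y$ inside $\Delta_k^\downarrow$ for small $\varepsilon>0$. By $S_k$-invariance of $g$, no such $x$ can realize the global maximum of $g$. Compactness of $\Delta_k$ then forces the maximum to be attained on the closed subset of at-most-two-valued vectors.

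Finally, I would apply Proposition~\ref{prop:3.8}. For a two-valued $x$, invariance of $\normt{\cdot}$ under permutations respecting the block structure of $x$ forces $\nabla\normt{x}$ to be two-valued, hence to belong to the set of two-valued elements of $K_{k,t}$ analysed in Proposition~\ref{prop:3.8}. That proposition shows that the maximum of $\|\cdot\|_p^p$ on this set is attained at $\xopt=\nabla\normt{e_1}$, so the global maximum is $g(e_1)=\|\xopt\|_p^p$, and evaluating this from \eqref{eq:def-xopt} gives the claimed closed form. I expect the main subtlety to lie in the first step: one has to invoke the full content of Lemma~\ref{Lemma3.1}, in particular the density of differentiability points, to justify identifying the supremum over $K_{k,t}$ with the supremum of $g$; the $t\le 1/k$ restriction is what makes this clean, as it pushes the non-differentiability locus all the way to the constant diagonal.
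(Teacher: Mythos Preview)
Your three–step reduction is exactly the argument the paper has in mind: Lemma~\ref{Lemma3.1} to pass from $K_{k,t}$ to gradients, Theorem~\ref{thm:no-3-valued-max} to discard three-or-more-valued points, and Proposition~\ref{prop:3.8} to finish on two-valued vectors. The paper states the corollary without proof, treating it as an immediate consequence of these pieces, so your outline matches.

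One point deserves tightening. In your first step you write that density of differentiability points in $\Delta_k$ together with continuity of $\|\cdot\|_p^p$ yields $\max_{K_{k,t}}\|\lambda\|_p^p=\sup_x g(x)$. But density of the \emph{domain} does not give density of the \emph{image}: the gradient image $\{\nabla\normt{x}:x\neq 1^k/k\}$ misses the entire interior of $K_{k,t}$ (an interior $\lambda$ satisfies $\langle\lambda,a\rangle<\normt{a}$ for all non-constant $a$, hence lies in no singleton subdifferential). What actually closes the gap is the \emph{convexity} of $\|\cdot\|_p^p$: its maximum on the compact convex $K_{k,t}$ is attained at an extreme point, and then one uses that extreme (or exposed) points are captured by gradients. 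This is precisely the route the paper takes explicitly in the general-$t$ subsection (via Straszewicz and the characterization of exposed points). You correctly flagged this step as the subtle one; just replace ``continuity'' by ``convexity'' and the argument is complete.
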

\subsection{The general $t$ case}

In the previous sections we have proved the Theorem \ref{thm:minimum-pnorm-Kkt}
in the case $t\leq 1/k$.
Now we prove it in full generality.

The case $t\geq 1-1/k$ is trivial (in this case the $(t)$-norm is the operator norm and $\xopt = e_1$), so we focus on
the case where $t\in (1/k,1-1/k)$. We will require the following well-known notions and
results (see \cite[Sections 18 and 25]{rockafellar}).  Given a convex set $C$ in an Euclidean space
$\mathbb R^k$ and a point $x\in C$, a {\em supporting hyperplane} of $C$ at $x$ is a
$k-1$-dimensional affine manifold in $\mathbb R^k$ which contains $x$ and so that $C$
is included entirely in exactly one of the two closed half-spaces determined by this manifold.
An {\em exposed point} of $C$ is a point through which there is a supporting hyperplane
of $C$ which contains no other point of $C$.
\begin{theorem}[Straszewicz]
For any closed convex set $C$, the set of exposed points of $C$ is a dense subset of the set
of extreme points of $C$.
\end{theorem}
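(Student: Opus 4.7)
The plan is to show that any extreme point $e$ of $C$ lies in the closure of the set of exposed points. Fix $\varepsilon>0$; I will construct an exposed point of $C$ in the open ball $B(e,\varepsilon)$. By intersecting $C$ with a closed ball centred at $e$ of sufficiently large radius $R$, one reduces to the case where $C$ is compact: for $R$ large and $\varepsilon$ small, exposed points of $C\cap\overline{B(e,R)}$ that lie in $B(e,\varepsilon)$ are automatically exposed in $C$ itself, since the supporting hyperplane does not touch $\partial B(e,R)$.

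The heart of the argument is a separation. Let $V=\overline{\mathrm{conv}}(C\setminus B(e,\varepsilon))$. I claim $e\notin V$. Indeed, if $e\in V$, then by Carath\'eodory's theorem in $\mathbb R^k$ together with a limiting/compactness argument (extracting convergent subsequences of the weights and of the at most $k+1$ points), $e$ would be a non-trivial convex combination $\sum_i \lambda_i x_i$ of points $x_i\in C$ satisfying $\|x_i-e\|\ge\varepsilon$, contradicting the extremity of $e$. Hence the separating hyperplane theorem furnishes $a\in\mathbb R^k$ and $\delta>0$ with $\scalar{a}{e}\ge\sup_{v\in V}\scalar{a}{v}+\delta$. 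Consequently, every maximizer of the linear functional $\scalar{a}{\cdot}$ over $C$ must lie in $C\cap B(e,\varepsilon)$.

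To upgrade $a$ to a direction whose maximizer on $C$ is unique, I would invoke the classical fact that the set of $a'\in\mathbb R^k$ for which $\scalar{a'}{\cdot}$ attains its maximum on the compact convex set $C$ at a single point is dense in $\mathbb R^k$; equivalently, the support function of $C$ is Fr\'echet differentiable on a dense subset (a consequence of Mazur's theorem applied to the convex support function, or directly via Rademacher). For $a'$ sufficiently close to $a$, the positive gap $\delta$ persists by continuity of $\scalar{\cdot}{\cdot}$ uniformly on the compact $C$, so the unique maximizer of $\scalar{a'}{\cdot}$ still sits in $B(e,\varepsilon)$, and its uniqueness promotes it to an exposed point of $C$ within distance $\varepsilon$ of $e$.

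The main obstacle I anticipate is the interplay in the perturbation step: the direction $a'$ must simultaneously be close enough to $a$ to inherit the localization property and generic enough within a dense set of $\mathbb R^k$ to possess a unique maximizer on $C$. This is resolved by the quantitative gap $\delta>0$, whose stability under small perturbations of $a$ makes the ``localization'' condition open; intersecting this open neighbourhood with the dense set of ``good'' directions produces the desired $a'$.
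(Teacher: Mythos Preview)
The paper does not prove Straszewicz's theorem; it is quoted as a classical result and attributed to \cite[Sections 18 and 25]{rockafellar}, so there is no ``paper's own proof'' to compare against. Your sketch is essentially a correct outline of one of the standard proofs.

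Two small comments on the write-up. First, in the reduction to the compact case, the justification ``since the supporting hyperplane does not touch $\partial B(e,R)$'' is not literally true (any hyperplane meets any sphere of large enough radius). The correct argument is that if $H$ exposes $x$ in $C\cap\overline{B(e,R)}$ and $x$ lies in the \emph{open} ball $B(e,R)$, then $H$ automatically supports $C$ and meets $C$ only at $x$: for any $z\in C$ on the wrong side of $H$ (or in $H\cap C\setminus\{x\}$), a short initial piece of the segment $[x,z]$ stays inside the ball and contradicts the exposing property of $H$ for $C\cap\overline{B(e,R)}$. Second, after the compact reduction, the set $C\setminus B(e,\varepsilon)$ is compact, so its convex hull is already closed in $\mathbb R^k$ and no limiting argument is needed in Step~3; Carath\'eodory alone suffices. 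With these clarifications, the separation plus generic-perturbation strategy (density of directions where the support function is differentiable, via Rademacher) goes through exactly as you describe.
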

The set of exposed points of a polar dual set is characterized by
\cite[Corollary 25.1.3]{rockafellar}. We will apply this result to $K_{k,t}$:
\begin{proposition}
For all $k\in\mathbb N$, $t\in(0,1)$, the set of exposed points of $K_{k,t}$ coincides
with the image of the points of differentiability of $\normt{\cdot}$
$$
\{\nabla\normt{x}\colon x\in\Delta_k, \normt{\cdot}\textrm{ differentiable at }x\}.
$$
\end{proposition}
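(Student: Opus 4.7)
The plan is to apply Rockafellar's Corollary 25.1.3, cited just above the statement, which asserts that for a compact convex set $C \subset \mathbb{R}^k$ the exposed points of $C$ are exactly the gradients of the support function $\sigma_C$ at its points of differentiability. Applied to $C = K_{k,t}$, the problem reduces to understanding the support function $\sigma_{K_{k,t}}$ and its set of differentiability points, and relating them to $\normt{\cdot}$ and its differentiability points.

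First, I would prove that $\sigma_{K_{k,t}}(a) = \normt{a}$ for every $a \in \Delta_k$. The inequality $\sigma_{K_{k,t}}(a) \leq \normt{a}$ is immediate from Lemma \ref{max}, since $K_{k,t}$ sits in the unit ball of the dual of the $(t)$-norm. For the reverse inequality, one shows that $\partial \normt{a} \cap K_{k,t}$ is nonempty for every $a \in \Delta_k$: at a differentiability point this is Lemma \ref{Lemma3.1}(1), while at a non-differentiability point one approximates $a$ by differentiability points $a_n \to a$ (whose set is dense in $\Delta_k$ by Lemma \ref{Lemma3.1}(2) whenever $t < 1 - 1/k$, the range $t \geq 1-1/k$ reducing to the trivial operator-norm case) and uses compactness of $K_{k,t}$ together with upper semi-continuity of the subdifferential to extract some $\mu \in \partial \normt{a} \cap K_{k,t}$, which then realizes $\langle \mu, a\rangle = \normt{a}$.

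Second, I would establish that the points of differentiability of $\sigma_{K_{k,t}}$ inside $\Delta_k$ are exactly those of $\normt{\cdot}$, with matching gradients. Differentiability of $\sigma_{K_{k,t}}$ at $x \in \Delta_k$ is equivalent to $\partial \normt{x} \cap K_{k,t}$ being a singleton, while differentiability of $\normt{\cdot}$ at $x$ is equivalent to $\partial \normt{x}$ itself being a singleton, and the forward implication is immediate from Lemma \ref{Lemma3.1}(1). For the converse, I would use the explicit description of non-differentiability points from Lemma \ref{Lemma3.1}(2) combined with items (5)--(6) of Proposition \ref{technical-proposition}: every non-differentiability point $x \in \Delta_k^\downarrow$ has its top $m$ coordinates equal with $m \geq k(1-t)$, and at such an $x$ one exhibits a whole $(m-1)$-dimensional simplex of probability vectors supported on the top $m$ coordinates inside $\partial \normt{x} \cap K_{k,t}$, so the latter set is not a singleton.

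Combining these two ingredients with Rockafellar's corollary yields the equality of the proposition: the exposed points of $K_{k,t}$ coincide with $\{\nabla \normt{x} : x \in \Delta_k,\ \normt{\cdot} \text{ is differentiable at } x\}$. The main obstacle I anticipate is the converse half of the differentiability correspondence in the second step: one must produce a concrete, nontrivial family of probability vectors inside $\partial \normt{x} \cap K_{k,t}$ at every non-differentiability point $x \in \Delta_k$, which requires the structural information given by items (5) and (6) of Proposition \ref{technical-proposition} together with the monotonicity statement of Lemma \ref{Lemma3.1}(4).
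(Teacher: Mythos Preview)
Your proposal is correct and follows exactly the route the paper takes: the paper offers no proof beyond the citation of \cite[Corollary 25.1.3]{rockafellar}, and you are supplying the details needed to make that citation apply---namely that the support function of $K_{k,t}$ agrees with $\normt{\cdot}$ on $\Delta_k$ (immediate from Lemma~\ref{max} and Lemma~\ref{Lemma3.1}) and that the differentiability loci match. Your anticipated obstacle in the converse half of the second step is real but easily handled: at a non-differentiability point $x\in\Delta_k^\downarrow$ with $x_1=\cdots=x_m$ and $m\ge k(1-t)$, the permutation invariance of $\normt{\cdot}$ forces $\partial\normt{x}\cap\Delta_k$ to be invariant under permutations of the first $m$ coordinates, so it is a singleton only if every element has its first $m$ coordinates equal---and one checks directly (via item (6) of Proposition~\ref{technical-proposition} and the approximation argument you already outlined) that this fails.
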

We can now complete the proof of our main theorem:

\begin{proof}[Proof of Theorem \ref{thm:minimum-pnorm-Kkt}]
The function $\|\cdot\|_p^p$ being convex, its maximum is reached on an extremal point of $K_{k,t}$.
Therefore, by the above proposition,
$$
\max\{\|\lambda\|_p^p\colon\lambda\in K_{k,t}\}=\sup\{\|\nabla\normt{x}\|_p^p\colon
x\in\Delta_k,\normt{\cdot}\textrm{ differentiable at }x\}.
$$
By Theorem \ref{thm:no-3-valued-max} and Proposition \ref{prop:3.8}, among points of
differentiability, the maximum of  $\|\nabla\normt{x}\|_p^p$ is reached at $e_1$.
Since $x^*_t=\|\nabla\normt{e_1}\|_p^p$, this
concludes the proof of Theorem \ref{thm:minimum-pnorm-Kkt} in full generality.
\end{proof}

\section{Minimum output entropy for quantum channels}
\label{sec:MOE}

In the reminder of the paper, we apply the minimization result of Theorem \ref{thm:minimum-pnorm-Kkt} 
to the problem of the \emph{minimum output entropy} of quantum channels.

Quantum channels \cite{nch} are linear, completely positive and trace preserving maps which model the 
most general evolution of quantum systems. In Quantum Information Theory, they are used to model 
information transmission, and several notions of channel capacities have been introduced. In what follows, 
we are interested in the \emph{classical capacity} of channels, a measure of how fast classical information 
can be transmitted with the help of quantum channels.

A quantum channel $\Phi: M_d(\mathbb C) \to M_k(\mathbb C)$ is a linear map which has the following two properties:
\begin{itemize}
\item trace preservation: $\forall X \in M_d(\mathbb C)$, $\mathrm{Tr} \Phi(X) = \mathrm{Tr} X$;
\item complete positivity: $\forall s \geq 1$, the map $\Phi \otimes \mathrm{id}_s$ is positive.
\end{itemize}

The information transmission capacity of such a channel is characterized by its classical information, 
$C(\Phi)$, which measures, asymptotically, how many uses of the channel are required to send one bit of 
classical information. Computing the classical capacity of quantum channels \cite{hol, swe} is a difficult 
problem whereas the capacity of classical channels (Markov maps) was computed by Shannon in his seminal 
paper \cite{sha}. The main difficulty in the quantum setting is the need of regularization,
\begin{equation}\label{eq:capacity}
C(\Phi) = \lim_{r \to \infty} \frac{1}{r}\chi(\Phi^{\otimes r}),
\end{equation}
where the quantity $\chi$ is the so-called \emph{Holevo capacity} (or the one-shot capacity) \cite{hol} of the channel,
\begin{equation}
\chi(\Phi) = \max_{(p_i),(\rho_i)} H(\sum_i p_i \Phi(\rho_i)) - \sum_i p_i H(\Phi(\rho_i)),
\end{equation}
the maximum being taken over probability vectors $(p_i)$, $p_i \geq 0$, $\sum_i p_i = 1$ and quantum states 
$(\rho_i)$, $\rho_i \in M_d(\mathbb C)$, $\rho_i \geq 0$, $\mathrm{Tr} \rho_i =1$. The function $H$ denotes 
the \emph{von Neumann entropy}, the extension (by functional calculus) of the Shannon entropy to quantum states
\begin{equation}
H(\rho) = - \mathrm{Tr}(\rho \log \rho).
\end{equation}

For some time, the Holevo quantity $\chi$ was conjectured to be additive, in the sense that for all quantum 
channels $\Phi, \Psi$,
\begin{equation}
\chi(\Phi \otimes \Psi) = \chi(\Phi) + \chi(\Psi).
\end{equation}
If such an additivity property would hold, there would be no need for the regularization procedure in 
equation \eqref{eq:capacity} and the classical capacity of $\Phi$ would be equal to its one-shot capacity. 
Shor showed \cite{sho} that the additivity of $\chi$ is equivalent to similar properties of other quantities of 
interest in quantum information, the foremost being the minimum output entropy of channels \cite{kru}
\begin{equation}
H^{\min}(\Phi) = \min_{\substack{\rho\in M_d(\C) \\ \rho\geq 0, \mathrm{Tr}\rho=1}} H(\Phi(\rho)).
\end{equation}
The focus of the community shifted to showing additivity for the minimum output entropy, or its $p$-variants, 
called \emph{R\'enyi entropies}. These are defined for probability vectors $x \in \Delta_k$ by
\begin{equation}
H_p(x) = \frac{1}{1-p}\log\sum_{i=1}^k x_i^p,
\end{equation}
and extended by functional calculus to quantum states $\rho$. Note that the above definitions are valid for 
$p \in (0, \infty)$, the value in $p=1$, obtained by taking a limit, coinciding with the von Neumann entropy $H$. 
The $\min$ variants are defined by
\begin{equation}
H_p^{\min}(\Phi) = \min_{\substack{\rho\in M_d(\C) \\ \rho\geq 0, \mathrm{Tr}\rho=1}} H_p(\Phi(\rho)).
\end{equation}

The additivity property for the quantities $H_p^{\min}$ was shown to be false, in a series of papers 
\cite{hwe, win, hay, chl} culminating with Hastings' counterexample \cite{has}. Since the resolution of the 
additivity conjecture, effort has been put \cite{fki, fkm, bho, cn1, cn2, cn3, cn-entropy, cfn1, cfn2, ghp, fne} 
into understanding, extending and improving the deviations from additivity.

The remainder of the paper contains two main results. The first one provides a limit value for the minimum 
$p$-output entropy of random quantum channels, while the second one deals with counterexamples to the 
additivity relation for the quantity $H_p^{\min}$.

\section{Limiting value of the minimum output entropy for large random quantum channels}\label{sec:limiting}

\subsection{Random quantum channels and the subspace model}

We shall endow the set of quantum channels $\Phi: M_d(\mathbb C) \to M_k(\mathbb C)$ with a natural 
probability measure and we shall refer to channels sampled from this measure 
as \emph{random quantum channels}.

The idea behind the model of random quantum channels we are considering (which is standard in the literature, 
see \cite{hayden-winter}) is the \emph{Stinespring dilation theorem}, which asserts that any completely 
positive, trace preserving map $\Phi$ can be realized as
\begin{equation}\label{eq:channel-from-isometry}
\Phi(X) = [\mathrm{id} \otimes \mathrm{Tr}](WXW^*),
\end{equation}
where $n$ is an integer (called the dimension of the environment) and
\begin{equation}
W : \mathbb C^d \to \mathbb C^k \otimes \mathbb C^n
\end{equation}
is an isometry, $W^*W = I_d$. Conversely, any isometry $W$ gives rise to a quantum channel.

The set of all isometries $W : \mathbb C^d \to \mathbb C^k \otimes \mathbb C^n$ 
admits a left- and right- invariant probability measure, called the Haar measure, which can be obtained, 
say, from the Haar measure on the unitary group $\mathcal U(kn)$. For each integer dimension $n$, 
we shall endow the set of all channels with the measure induced by the probability on the set of isometries 
$W$ by the map which associates to $W$ the channel \eqref{eq:channel-from-isometry}. 
Such a channel will be called a random channel with environment dimension $n$.

A crucial observation is that the minimum output entropy of a channel depends only on its output, and not on the exact way in which the input is mapped to the output. In our isometry picture, the object of interest is the output set
\begin{equation}
\{[\mathrm{id} \otimes \mathrm{Tr}](W \rho W^*) \, : \, \rho \text{ quantum state}\}.
\end{equation}
Moreover, note that the entropy functionals are convex, for all $p \geq 1$; hence, their minimum is 
attained on the extremal points of the set of states, i.e. rank-one projections 
$P_x$, $x \in \mathbb C^d$, $\|x\| = 1$. We are thus interested in the entropies of the set of quantum states
\begin{equation}
\{[\mathrm{id} \otimes \mathrm{Tr}](P_{Wx}) \, : \, x \in \mathbb C^d, \|x\| = 1\}.
\end{equation}

The eigenvalues of the partial trace $[\mathrm{id} \otimes \mathrm{Tr}](P_{y})$ are called 
the \emph{singular values} (or the \emph{Schmidt coefficients}) of the vector $y \in \C^k\otimes \C^n$: 
they are the numbers $\lambda_{1}(y)\geq\ldots \geq \lambda_{k}(y)\geq 0$ such that
\begin{equation}
        y=\sum_{i=1}^{k}\sqrt{\lambda_{i}(y)} \, e_{i}(y)\otimes f_{i}(y)
\end{equation}
where $e_{i}(y)$ (resp.~ $f_{i}(y)$) are orthonormal vectors in $\C^k$ (resp.~ $\C^n$).
If $y$ is a norm one vector in the Euclidean space $\C^{kn}$, then $\lambda (y)=(\lambda_{1}(y), \ldots,  \lambda_{k}(y))$
belongs to the set $ \Delta_k^\downarrow$.

Going back to our isometry picture for quantum channels, we notice that the image subspace
\begin{equation}
V = \mathrm{Im} W \subset \mathbb C^k \otimes \mathbb C^n, \quad \dim V = d
\end{equation}
contains all the information needed to compute minimum output entropies:
\begin{equation}
H_p^{\min}(\Phi) = \min_{x \in V, \|x\| =1} H_p(\lambda(x)).
\end{equation}

To an output subspace $V  \subset \mathbb C^k \otimes \mathbb C^n$, we associate its \emph{singular value set}
\begin{equation}
\tilde K_V = \{ \lambda(x) \, : \, x \in V, \|x\| = 1\} \subset \Delta_k^\downarrow.
\end{equation}

The image measure of the Haar probability measure on the set of isometries through the map 
$W \mapsto V = \mathrm{Im} W$ is the Haar measure on the Grassmann manifold  
$\Gr_d(\C^k\otimes \C^n)$ of subspaces of $\C^k\otimes \C^n$ with dimension $d$. In this way, 
$\tilde K_V$ is a random subset of $\Delta_k^\downarrow$. For technical reasons, it will be convenient to 
replace it by
\begin{equation}
K_V = \{ (\lambda_{\sigma(1)}, \lambda_{\sigma(2)}, \ldots, \lambda_{\sigma(k)}) \, : \, \lambda \in \tilde K_V \text { and } \sigma \in S_k\} \subset \Delta_k,
\end{equation}
which is its \emph{symmetrized version} under permuting the coordinates.

\subsection{The large $n$ asymptotics}\label{sec:large-asymptotics}

We are interested in a random 
sequence $V_{n}$ of subspaces of $\C^k \otimes \C^n$ having the following properties:
\begin{enumerate}
\item $V_{n}$ has dimension $d_n$ which satisfies $d_n \sim tkn$;
\item The law of $V_{n}$ follows the invariant measure on the Grassmann manifold $\Gr_{d_n}(\C^{k}\otimes\C^n)$.
\end{enumerate}

In this setting, we call $K_{n,k,t}= K_{V_{n}}$. We recall the following theorem, 
which was our main theorem in \cite{bcn1}:

\begin{theorem}\label{thm:main-bcn1}
Almost surely, the following hold true:
\begin{itemize}
\item Let $\mathcal{O}$ be an open set in $\Delta_{k}$ containing $K_{k,t}$.
Then, for $n$ large enough, $K_{n,k,t}\subset \mathcal{O}$.
\item Let $\mathcal K$ be a compact set in the interior of $K_{k,t}$.
Then, for $n$ large enough, $\mathcal K \subset K_{n,k,t}$.
\end{itemize}
\end{theorem}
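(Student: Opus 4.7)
The plan is to reduce both inclusions to a single limiting statement---namely, that the support function of $K_{n,k,t}$ converges almost surely to that of $K_{k,t}$---and then to extract each bullet by a complementary argument. For $a \in \Delta_k$ with diagonal matrix $A = \diag(a) \in M_k(\C)$, the identity $\scalar{\lambda(x)}{a} = \scalar{x}{(A \otimes I_n) x}$ gives
\begin{equation*}
h_{K_{n,k,t}}(a) \;=\; \sup_{\lambda \in K_{n,k,t}} \scalar{\lambda}{a} \;=\; \bigl\|P_{V_n} (A \otimes I_n) P_{V_n}\bigr\|_\infty,
\end{equation*}
where $P_{V_n}$ is the orthogonal projection onto $V_n$ in $\C^{kn}$, while Lemma \ref{max} identifies $h_{K_{k,t}}(a) = \normt{a}$. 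Thus the $(t)$-norm is literally the support function of the limiting convex body.

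Since $V_n$ is Haar distributed, $P_{V_n}$ is a uniformly random projection of normalized rank $\sim t$ in $\C^{kn}$, asymptotically free from the deterministic tensor $A \otimes I_n$ whose spectral distribution is $\mu_a$. I would invoke strong asymptotic freeness of Haar-rotated projections against deterministic matrices (in the Haagerup--Thorbj\o rnsen / Collins--Male tradition) to upgrade weak freeness to norm convergence, so that almost surely
\begin{equation*}
\|P_{V_n}(A\otimes I_n)P_{V_n}\|_\infty \longrightarrow \|pap\|_\infty = \normt{a},
\end{equation*}
with $\tau(p)=t$ and $p$ free from $a$ in a tracial $\mathrm{II}_1$ factor. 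The first bullet then follows by a standard compactness argument: pick a countable dense set $D \subset \Delta_k$ on which this convergence holds almost surely and simultaneously; if $K_{n,k,t} \not\subset \mathcal O$ along some subsequence, extract $\lambda_{n_j} \to \lambda^* \in \Delta_k \setminus \mathcal O \subset \Delta_k \setminus K_{k,t}$, pick a separating $a \in D$ with $\scalar{\lambda^*}{a} > \normt{a}$, and contradict $\scalar{\lambda_{n_j}}{a} \leq h_{K_{n_j,k,t}}(a) \to \normt{a}$.

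The second bullet is the substantially harder half, because $K_{n,k,t}$ need not be convex and support-function convergence alone only controls its convex hull. My plan is to establish density first and then promote it to containment. For density: whenever $\normt{\cdot}$ is differentiable at $a$, a spectral-gap / perturbation analysis of the compression $P_{V_n}(A\otimes I_n)P_{V_n}$ forces the Schmidt coefficient vector of its top eigenvector to converge almost surely to the exposed point $\nabla\normt{a}$; since points of differentiability are dense by Lemma \ref{Lemma3.1}(2), and exposed points are dense among extreme points by Straszewicz, a dense subset of $K_{k,t}$ lies in the Kuratowski $\liminf$ of $K_{n,k,t}$. Promoting density to containment of open sets is the main technical obstacle, and I would attack it topologically: the Schmidt-coefficient map $\mathbb P(V_n) \to \Delta_k^\downarrow$ is smooth away from a negligible locus and, once $d_n$ is large, has image of full dimension $k-1$; applying the outer direction of the first bullet locally forces the boundary of this image into an arbitrarily small tubular neighborhood of $\partial K_{k,t}$, and a connectedness / degree argument then traps any compact $\mathcal K \subset \mathrm{int}(K_{k,t})$ inside the image. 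Controlling this boundary and handling the degenerate strata of the Schmidt map is where I expect the bulk of the technical effort to lie.
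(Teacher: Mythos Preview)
The paper does not prove this theorem. It is explicitly recalled from the authors' earlier work: ``We recall the following theorem, which was our main theorem in \cite{bcn1}.'' There is no proof in the present paper to compare against; the statement is imported wholesale and then used as a black box in Theorem~\ref{thm:MOE-convergence}.

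That said, your outline is broadly in the spirit of how the result is actually established in \cite{bcn1}. The identification of $h_{K_{n,k,t}}(a) = \|P_{V_n}(A\otimes I_n)P_{V_n}\|_\infty$ and the use of strong asymptotic freeness for Haar projections to obtain $h_{K_{n,k,t}}(a) \to \normt{a}$ almost surely is exactly the mechanism behind the outer inclusion, and your compactness/separating-hyperplane extraction is the standard way to pass from pointwise support-function convergence to set containment. You are also right that the inner inclusion is the hard half precisely because $K_{n,k,t}$ is not convex. In \cite{bcn1} this is handled by showing that for each point of differentiability $a$ of $\normt{\cdot}$, the Schmidt vector of the (almost surely unique) top eigenvector of $P_{V_n}(A\otimes I_n)P_{V_n}$ converges to $\nabla\normt{a}$, and then filling in the interior---your proposal matches this at the level of strategy. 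Where your sketch is thin is the final ``promotion'' step: the actual argument does not rely on a degree-theoretic trapping of the boundary, but rather on an explicit geometric construction showing that the image of the unit sphere of $V_n$ under the Schmidt map is path-connected and contains a neighborhood of each $\nabla\normt{a}$, combined with the convexity of $K_{k,t}$ itself. Your degree/connectedness heuristic is plausible but would need substantial work to make rigorous, particularly at the non-manifold points of the Schmidt map where eigenvalues collide.
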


\subsection{Convergence result for the minimum output entropy}

Putting together Theorem \ref{thm:main-bcn1} proved in \cite{bcn1} and Theorem \ref{thm:minimum-pnorm-Kkt}
 proved in Section \ref{sec:proof}, we obtain the following convergence result 
 for the minimum output $p$-entropies of random quantum channels.

\begin{theorem}\label{thm:MOE-convergence}
Let $p$ be a real number in $[1,\infty]$ and $\Phi_n : M_{d_n}(\mathbb C) \to M_k(\mathbb C)$ a sequence of random quantum channels with constant output space of dimension $k$, environment of size $n \to \infty$ and input space of dimension $d_n \sim tkn$. Then, almost surely as $n \to \iy$,
\begin{equation}
\lim_{n \to \infty} H_p^{\min}(\Phi_n) = H_p(\xopt),
\end{equation}
with $\xopt$ defined in equation \eqref{eq:def-xopt}.
\end{theorem}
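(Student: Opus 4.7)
The plan is to reduce the computation of $H_p^{\min}(\Phi_n)$ to a deterministic optimization on the convex body $K_{k,t}$, using Theorem \ref{thm:main-bcn1} to handle the randomness and Theorem \ref{thm:minimum-pnorm-Kkt} to identify the optimizer.

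From the subspace model developed in Section \ref{sec:MOE}, one has $H_p^{\min}(\Phi_n) = \min_{\lambda \in K_{n,k,t}} H_p(\lambda)$, since $H_p$ depends only on the vector of eigenvalues and is invariant under its permutations. For $p>1$, we write $H_p(x) = (1-p)^{-1}\log\|x\|_p^p$, so minimizing $H_p$ is equivalent to maximizing $\|\cdot\|_p^p$; Theorem \ref{thm:minimum-pnorm-Kkt} then gives $\min_{K_{k,t}} H_p = H_p(\xopt)$. The cases $p=1$ and $p=\infty$ follow by taking $p\to 1^+$ and $p\to\infty$ in the pointwise inequality $H_p(\lambda)\ge H_p(\xopt)$ on $K_{k,t}$, using continuity of $p\mapsto H_p$ on the compact simplex. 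It remains to show $\min_{K_{n,k,t}} H_p \to \min_{K_{k,t}} H_p$ almost surely.

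I would prove this by sandwiching. For the upper bound, I approximate $\xopt$ from the interior of $K_{k,t}$ by $\lambda^\e := (1-\e)\xopt + \e\, k^{-1}1^k$; since $k^{-1}1^k$ is an interior point of $K_{k,t}$ (by symmetry of $K_{k,t}$ under coordinate permutation and convexity), so is $\lambda^\e$, and a small closed ball around it is a compact subset of the interior. The second statement of Theorem \ref{thm:main-bcn1} then puts this ball inside $K_{n,k,t}$ eventually a.s., giving $\limsup_n H_p^{\min}(\Phi_n) \leq H_p(\lambda^\e)$; letting $\e\to 0$ and using continuity of $H_p$ on $\Delta_k$ yields the bound $H_p(\xopt)$. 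For the lower bound, uniform continuity of $H_p$ on the compact simplex $\Delta_k$ supplies, for each $\e>0$, an open neighborhood $\mathcal O$ of $K_{k,t}$ on which $H_p > H_p(\xopt) - \e$; the first statement of Theorem \ref{thm:main-bcn1} then forces $K_{n,k,t}\subset \mathcal O$ eventually a.s., so $\liminf_n H_p^{\min}(\Phi_n) \geq H_p(\xopt) - \e$.

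The main subtlety is that $\xopt$ typically lies on the boundary of $K_{k,t}$, so the compact-interior-containment statement of Theorem \ref{thm:main-bcn1} does not directly exhibit points of $K_{n,k,t}$ arbitrarily close to $\xopt$; the convex interpolation with the barycenter circumvents this cheaply. Everything else reduces to compactness of $\Delta_k$ and continuity of $H_p$ on it, which hold uniformly for any $p \in [1,\infty]$.
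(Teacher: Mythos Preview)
Your proposal is correct and follows essentially the same approach as the paper: combine Theorem~\ref{thm:main-bcn1} (convergence of $K_{n,k,t}$ to $K_{k,t}$) with Theorem~\ref{thm:minimum-pnorm-Kkt} (identification of the optimizer on $K_{k,t}$), then handle $p=1$ by continuity. The paper's own proof is a two-line sketch that asserts exactly this; your sandwiching argument via the interior approximation $\lambda^\e=(1-\e)\xopt+\e k^{-1}1^k$ and the uniform-continuity neighborhood $\mathcal O$ is the natural way to unpack what ``follows right away'' means, and your explicit treatment of $p=\infty$ by passing to the limit in the pointwise inequality is a detail the paper leaves implicit.
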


\begin{proof}
In the case $p>1$,
this follows right away from Theorems \ref{thm:main-bcn1} and \ref{thm:minimum-pnorm-Kkt}.
The case $p=1$ can be obtained by continuity of the entropy.
\end{proof}

\section{Violation of the additivity for minimum output entropies}
\label{sec:violation}

\subsection{The MOE additivity problem}

The following theorem summarizes some of the most important breakthroughs in quantum information
theory in the last decade. It is based in particular on the papers \cite{has, hayden-winter}.

\begin{theorem}
For every $p \in [1, \infty]$, there exist
quantum channels $\Phi$ and $\Psi$ such that
\begin{equation}
H_p^{\min}(\Phi \otimes \Psi) < H_p^{\min}(\Phi) + H_p^{\min}(\Psi).
\end{equation}
\end{theorem}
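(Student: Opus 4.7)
The plan is to run the Hayden--Winter conjugate-channels construction, controlling the single-channel MOE via the exact limit supplied by Theorem \ref{thm:MOE-convergence} and using the Bell state to upper-bound the output entropy of the tensor product.

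\emph{Individual MOEs.} First I would fix $k \geq 2$ and $t \in (0, 1)$, take a Haar-random isometry $W : \mathbb{C}^d \to \mathbb{C}^k \otimes \mathbb{C}^n$ with $d = \lfloor tkn \rfloor$, let $\Phi := \Phi_n$ be the induced channel, and let $\Psi := \bar\Phi_n$ be the channel associated to the entrywise conjugate isometry $\bar W$. Since $\bar W$ is also Haar distributed, Theorem \ref{thm:MOE-convergence} applies to both channels and yields, almost surely,
\begin{equation*}
H_p^{\min}(\Phi) + H_p^{\min}(\Psi) \xrightarrow[n \to \infty]{} 2 H_p(\xopt).
\end{equation*}

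\emph{Upper bound on the tensor MOE.} Next I would feed the Bell state $\ket{\phi^+_d} = d^{-1/2}\sum_{i=1}^d \ket{ii}$ into $\Phi \otimes \Psi$. The standard Stinespring computation shows that $(\Phi \otimes \bar\Phi)(\ketbra{\phi^+_d}{\phi^+_d})$ has the Bell state $\ket{\phi^+_k}$ in its output space as an eigenvector, with eigenvalue $d/(kn)$ (which tends to $t$ as $n \to \infty$). Since $H_p$ is maximized, over probability vectors in $\Delta_{k^2}$ with a prescribed largest coordinate, by the vector spreading the remaining mass uniformly, one obtains
\begin{equation*}
\limsup_{n \to \infty} H_p^{\min}(\Phi \otimes \Psi) \le H_p\!\left(t,\,\tfrac{1-t}{k^2-1},\,\ldots,\,\tfrac{1-t}{k^2-1}\right).
\end{equation*}

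\emph{Producing a strict inequality.} The violation reduces to exhibiting $(k, t)$ with
\begin{equation*}
2 H_p(\xopt) \;>\; H_p\!\left(t,\,\tfrac{1-t}{k^2-1},\,\ldots,\,\tfrac{1-t}{k^2-1}\right).
\end{equation*}
Recall $\xopt$ has largest coordinate $\phi(1/k,t) = t + \tfrac{1}{k} - \tfrac{2t}{k} + 2\sqrt{t(1-t)(1/k)(1-1/k)}$, so that $\phi(1/k,t) \to t$ as $k \to \infty$. For $p = 1$, a direct expansion then gives
\begin{equation*}
2 H_1(\xopt) - H_1\!\left(t,\,\tfrac{1-t}{k^2-1},\,\ldots\right) \longrightarrow h(t) \qquad \text{as } k \to \infty,
\end{equation*}
where $h$ is the binary entropy. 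Taking $t$ bounded away from $0$ and $1$ (say $t = 1/2$) and $k$ sufficiently large produces a strict violation, approaching $\log 2$ in the limit. For $p > 1$ the same comparison works with $H_p$ in place of $H_1$; alternatively, the $p = 1$ case can be recovered by continuity from $p > 1$.

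\emph{Main obstacle.} The delicate point is the third step: both entropies grow like $2 \log k$, and the violation lives only in subleading corrections of the form $2[\phi(1/k,t) - t]\log(k-1) \sim 4\sqrt{t(1-t)/k}\,\log k$, which must be compared against the order-one term $h(t)$. The decisive input is that Theorem \ref{thm:MOE-convergence} supplies the \emph{exact} asymptotic $H_p(\xopt)$, not merely a one-sided bound; the precise value of $\phi(1/k,t)$ (rather than an estimate) is what allows the strict comparison to go through, and drives the quantitative threshold dimension and the optimal $\log 2$ violation announced in the introduction.
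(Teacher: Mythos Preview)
The paper does not actually prove this theorem: it is stated as a summary of prior breakthroughs and attributed to the literature (Hastings, Hayden--Winter, etc.), with no self-contained argument given. What the paper \emph{does} prove is the sharper Theorem~\ref{thm:violation}, whose proof for $p=1$ follows exactly the strategy you outline --- compare $2H(\xopt)$ from Theorem~\ref{thm:MOE-convergence} against the entropy of the Bell-state output, expand at large $k$, and read off the $h(t)$ gap. So your proposal is not a reconstruction of the paper's proof of this statement (there is none), but rather a correct extraction of the existence result from the paper's own later machinery.

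Two small discrepancies are worth noting. First, your claim that the Bell state $\ket{\phi^+_k}$ is an \emph{eigenvector} of $(\Phi\otimes\bar\Phi)(E_d)$ with eigenvalue $d/(kn)$ is not quite right: the Hayden--Winter argument only gives $\langle\phi^+_k|(\Phi\otimes\bar\Phi)(E_d)|\phi^+_k\rangle \geq d/(kn)$, hence $\lambda_1 \geq d/(kn)$. This still suffices for your limsup bound via Schur-concavity, so the conclusion survives. Second, you work with the Hayden--Winter vector $\gamma^{HW}_t=(t,\frac{1-t}{k^2-1},\ldots)$, whereas the paper's proof of Theorem~\ref{thm:violation} uses the exact limiting spectrum $\gamma^*_t=(t+\frac{1-t}{k^2},\frac{1-t}{k^2},\ldots)$ from Theorem~\ref{thm:bi-canal}. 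The paper explicitly discusses both routes and notes that the coarser $\gamma^{HW}_t$ bound already yields violations (first at $k=184$ rather than $k=183$) and the same asymptotic $\log 2$ gap, so for mere existence your choice is perfectly adequate.
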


Except for some particular cases ($p>4.73$, \cite{hwe} and $p>2$, \cite{ghp}), the proof of this theorem uses 
the random method, i.e. the channels $\Phi, \Psi$ are random channels, and the above inequality occurs with 
non-zero probability. At this moment, we are not aware of any explicit, non-random choices for $\Phi, \Psi$ in the case  $1 \leq p \leq 2$.

Moreover, the strategy in all the results cited above are based on the \emph{Bell phenomenon}, 
i.e. the choice $\Psi  = \bar \Phi$ and the use of the maximally entangled state as an input for 
$\Phi \otimes \bar \Phi$.

\subsection{The Bell phenomenon}

In order to obtain violations for the additivity relation of the minimum output entropy, one needs 
to obtain upper bounds for the quantity $H_p^{\min}(\Phi \otimes \Psi)$. The idea of using conjugate 
channels ($\Psi  = \bar \Phi$) and bounding the minimum output entropy by the value of the entropy 
at the Bell state dates back to \cite{win}. To date, it has proven to be the most successful method of 
tackling the additivity problem. Several results show that the choice of the Bell state in the conjugate 
channel setting might not be far from optimal \cite{cfn1, fne}. The following inequality is elementary 
and lies at the heart of the method
\begin{equation}
H_p^{\min} (\Phi \otimes \bar \Phi) \leq H^p_{\min} ([ \Phi \otimes \bar \Phi] (E_{d})),
\end{equation}
where $E_{d}$ is the maximally entangled state over the input space $(\C^d)^{\otimes 2}$. 
More precisely, $E_d$ is the projection on the Bell vector
\begin{equation}
\Bell_d = \frac{1}{\sqrt{d}}\sum_{i=1}^{d} e_i \otimes e_i,
\end{equation}
where $\{e_i\}_{i=1}^{d}$ is a fixed basis of $\C^{d}$.

For random quantum channels $\Phi = \Phi_n$, the random output matrix $[\Phi_n\otimes \bar \Phi_n] (E_{d})$
was thoroughly studied in \cite{cn1} in the regime $d \sim tkn$; we recall here one of the main results of that paper.

\begin{theorem}\label{thm:bi-canal}
Almost surely, as $n$ tends to infinity, the random matrix $[\Phi_n\otimes\bar \Phi_n](E_{tkn}) \in M_{k^2}(\C)$ 
has eigenvalues
\begin{equation}\label{eq:def-gamma-opt}
\gamma^*_t = \left( t + \frac{1-t}{k^2},\underbrace{\frac{1-t}{k^2}, \ldots, \frac{1-t}{k^2}}_{k^2-1 \text{ times}}\right).
\end{equation}
\end{theorem}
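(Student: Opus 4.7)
The plan is to write the output matrix explicitly in terms of the random isometry defining $\Phi_n$, compute its expectation via Weingarten calculus, and then upgrade convergence in expectation to almost sure convergence through a fluctuation estimate.

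First, let $W_n : \C^{d_n} \to \C^k \otimes \C^n$ be the Haar-random isometry associated to $\Phi_n$ and let $Q_n = W_n W_n^* \in M_{kn}(\C)$ denote the associated random projection onto a Haar-distributed $d_n$-dimensional subspace of $\C^{kn}$. Using the ``vec trick'' $(A \otimes \bar A)\ket{\mathrm{Bell}_d} = d^{-1/2}\,\mathrm{vec}(AA^*)$ and then tracing out the two environment factors, a direct index computation yields
\[
 \sigma_n(ab,cd) := \bigl([\Phi_n\otimes\bar\Phi_n](E_{d_n})\bigr)_{(ab),(cd)} = \frac{1}{d_n} \sum_{\alpha,\beta=1}^{n} Q_n\bigl((a\alpha),(b\beta)\bigr)\, Q_n\bigl((d\beta),(c\alpha)\bigr),
\]
with $a,b,c,d \in \{1,\ldots,k\}$.

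Second, I would compute $\E[\sigma_n]$ using the exact second-order correlator of a Haar projection,
\[
 \E\bigl[Q_n(i,j)\, Q_n(k,l)\bigr] = \frac{d_n(d_n N - 1)}{N(N^2-1)}\delta_{ij}\delta_{kl} + \frac{d_n(N-d_n)}{N(N^2-1)}\delta_{il}\delta_{kj}, \qquad N=kn.
\]
Substituting into the expression for $\sigma_n$ and sending $n \to \iy$ with $d_n/(kn) \to t$, the first term contributes the ``coherent'' piece $t\,\ketbra{\Omega}{\Omega}$, where $\ket{\Omega} = k^{-1/2}\sum_a \ket{aa}$ is the Bell vector on the output space $\C^k\otimes\C^k$, while the second contributes the noise $\tfrac{1-t}{k^2} I_{k^2}$. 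Thus $\lim_{n\to\iy} \E[\sigma_n] = t\,\ketbra{\Omega}{\Omega} + \tfrac{1-t}{k^2} I_{k^2}$, whose spectrum is exactly $\gamma^*_t = \bigl(t + \tfrac{1-t}{k^2},\,\tfrac{1-t}{k^2},\,\ldots,\,\tfrac{1-t}{k^2}\bigr)$, with $\ket{\Omega}$ as eigenvector for the top eigenvalue.

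Third, to promote this to almost sure convergence of eigenvalues, I would control the fluctuations of $\sigma_n$ entry-by-entry. A fourth-order Weingarten computation should give $\mathrm{Var}(\sigma_n((ab),(cd))) = O(n^{-2})$ for every fixed multi-index; since the output dimension $k^2$ is fixed, Chebyshev plus Borel--Cantelli then imply $\|\sigma_n - \E[\sigma_n]\|_\infty \to 0$ almost surely. An equivalent route uses Levy's concentration inequality on $\mathcal U(kn)$ applied to the Lipschitz functionals $U \mapsto \bra{\psi}\sigma_n\ket{\psi}$ with $\ket{\psi}$ ranging over an $\varepsilon$-net of the unit sphere of $\C^{k^2}$, giving exponential concentration at scale $n^{-1/2}$. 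Combined with the Hoffman--Wielandt inequality, the spectrum of $\sigma_n$ converges almost surely to that of $\lim_n \E[\sigma_n]$, namely $\gamma^*_t$.

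The main obstacle I anticipate is the fourth-order Weingarten bookkeeping for the variance: enumerating the pairings in $S_4$ and verifying that the potentially dominant $O(n^{-1})$ contributions to the connected expectation $\E[\sigma_n(x,y)^2] - \E[\sigma_n(x,y)]^2$ cancel requires careful accounting, although it is routine in principle. This is essentially the analysis performed in \cite{cn1}, from which one may alternatively quote the result directly rather than redo the computation.
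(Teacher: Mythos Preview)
The paper does not prove this theorem; it merely quotes it from \cite{cn1} (``we recall here one of the main results of that paper''). Your proposal is precisely the strategy of \cite{cn1}: express the output in terms of the random projection $Q_n=W_nW_n^*$, compute the second-order moment of $Q_n$ to get the limit of $\E[\sigma_n]$, and then use a fourth-order Weingarten/graphical-calculus computation to bound the variance and conclude almost sure convergence. Your index formula for $\sigma_n$ and the two asymptotic contributions $t\ketbra{\Omega}{\Omega}$ and $\tfrac{1-t}{k^2}I_{k^2}$ are correct, and your own final remark that ``this is essentially the analysis performed in \cite{cn1}'' is accurate --- so there is nothing to compare beyond noting that you have sketched what the cited reference actually does, while the present paper simply invokes it.
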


This result improves on a bound  \cite{hayden-winter} via linear algebra techniques,
which states that the largest eigenvalue of the random matrix $[\Phi_n\otimes \bar \Phi_n](E_{tkn})$
 is at least $t$. The improvement 
provided by Theorem \ref{thm:bi-canal} comes from the fact that the largest 
 eigenvalue of the output is larger (by $(1-t)k^{-2}$). In the next section, we will show how this improvement 
 leads to better bounds for the size of channels which exhibit violations.

\subsection{Macroscopic violations for the minimum output entropy of random quantum channels}

In this section, we fix $p=1$, so we shall study the most important case of Shannon - von Neumann entropy. 
The main theorem of this section was the initial motivation for the line of work started in \cite{bcn1}: 
we want to obtain \emph{large} violations for the additivity relation, for \emph{reasonable values} 
of the model parameter $k$. Note that previous work showed that violations of size $\approx 10^{-6}$ 
exist for channels with output space of dimension  $\approx 10^4$ \cite[Proposition 3]{fkm}. 
We drastically improve these results with the following result.

\begin{theorem}\label{thm:violation}
For any output dimension $k\geq 183$, in the limit $n \to \iy$, there exist values of the parameter 
$t$ such that almost all
random quantum channels violate the additivity of the von Neumann minimum output entropy. 
For large enough values of $k$, the violation can be made as close as desired to 1 bit.

Moreover, in the same asymptotic regime, for all $k < 183$, the  von Neumann entropy of the output state
$[\Phi_n\otimes \bar \Phi_n](E_{tnk})$ is almost surely larger than $2 H^{\min}(\Phi_n)$. 
Hence, in this case, one can not exhibit violations of the additivity using the Bell state as an input
for the product of conjugate random quantum channels.
\end{theorem}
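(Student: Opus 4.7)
\textbf{Proof plan for Theorem \ref{thm:violation}.}

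\emph{Reduction to a deterministic function.}
By Theorem~\ref{thm:MOE-convergence} applied to both $\Phi_n$ and $\bar{\Phi}_n$ (equidistributed with $\Phi_n$), almost surely $H^{\min}(\Phi_n)$ and $H^{\min}(\bar{\Phi}_n)$ converge to $H(\xopt)$. By Theorem~\ref{thm:bi-canal}, the eigenvalues of the Bell output $[\Phi_n\otimes\bar\Phi_n](E_{tkn})$ converge to $\gamma^*_t$, so its entropy converges to $H(\gamma^*_t)$. The Bell-state lower bound on the additivity defect therefore converges almost surely to
$$
V(k,t) := 2H(\xopt) - H(\gamma^*_t) = 2H_2(a) + 2(1-a)\log(k-1) - H_2(b) - (1-b)\log(k^2-1),
$$
where $a := \phi(1/k,t)$ (item (5) of Proposition~\ref{technical-proposition}), $b := t+(1-t)/k^2$, and $H_2$ denotes the binary entropy. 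The theorem then amounts to three analytic claims about $V$: (A) $\max_t V(k,t)\le 0$ for every $2\le k\le 182$; (B) $\max_t V(k,t)>0$ for every $k\ge 183$; and (C) $\sup_{k,t} V(k,t)=\log 2$, attained only in the limit.

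\emph{Asymptotic analysis and the value $\log 2$.} Expanding $\phi(1/k,t)$ for large $k$ yields $a = t + 2\sqrt{t(1-t)/k}+O(1/k)$, while $b = t + O(1/k^2)$. The two $O(\log k)$ terms in $V(k,t)$ cancel to leading order, and a direct computation gives $\lim_{k\to\infty} V(k,t) = H_2(t)$, which is maximized at $t=1/2$ with value $\log 2$. Retaining the next-order term at $t=1/2$ shows $V(k,1/2)$ tends to $\log 2$ strictly from below at rate $-\log(k)/\sqrt k$. This establishes the ``arbitrarily close to $\log 2$'' half of (C); the strict upper bound $V(k,t)<\log 2$ then follows by combining this asymptotic with a monotonicity argument in $k$ for $V(k,\cdot)$ on the relevant region, using the explicit $k$-derivative of $\phi(1/k,t)$.

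\emph{The critical dimension.} For (A) and (B) the strategy is a one-dimensional optimization of $t\mapsto V(k,t)$ for each $k$. Since $\phi(1/k,t)$ is algebraic, the stationarity equation $\partial_t V(k,t)=0$ can be written down explicitly and solved to arbitrary precision; one then certifies the sign of $V$ at its interior critical point. To transport a single violation from $k=183$ to all $k\ge 183$, I would compare $V(k,t^*_{k-1})$ with $V(k-1,t^*_{k-1})$, where $t^*_{k-1}$ denotes the optimizer at dimension $k-1$, and invoke the monotonicity in $k$ mentioned above. The ``no violation for $k\le 182$'' half of the theorem then follows immediately from (A), since it is precisely the statement that $H\bigl([\Phi_n\otimes\bar\Phi_n](E_{tkn})\bigr)>2H^{\min}(\Phi_n)$ almost surely in the limit.

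\emph{Main obstacle.} The delicate point is the sharp threshold $k=183$. Since the $\log(k)/\sqrt k$ correction to the asymptotic limit $H_2(t)$ remains comparable to $\log 2$ up to $k\approx 200$, distinguishing $k=182$ from $k=183$ cannot be achieved by asymptotics alone: the optimizer $t^*(k)$ satisfies a transcendental equation with no closed form, and the sign of $V(k,t^*(k))$ at the transition must be certified by rigorous numerical computation of the algebraic quantity $\phi(1/k,t)$ and its derivatives. Once that one numerical fact is established, everything else in the proof flows from the expansion of $\phi$ together with Theorems~\ref{thm:MOE-convergence} and \ref{thm:bi-canal}.
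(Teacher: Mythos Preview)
Your proposal is essentially correct and follows the same path as the paper: reduce via Theorems~\ref{thm:MOE-convergence} and~\ref{thm:bi-canal} to the deterministic function (your $V(k,t)$ is the paper's $-D(k,t)$), compute the large-$k$ limit $H_2(t)$, and certify the threshold $k=183$ numerically. You also correctly identify that the sharp constant $183$ cannot come out of asymptotics and must be checked by computation on the explicit algebraic expression.

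The one substantive difference is your device for passing from a single violation at $k=183$ to all $k\ge183$. You propose a monotonicity of $V$ in $k$ (evaluated at the previous optimizer $t^*_{k-1}$), to be extracted from the explicit $k$-derivative of $\phi(1/k,t)$. The paper does not establish or use any such monotonicity. Instead it fixes $t=1/2$, shows (by an explicit one-variable analysis) that $D(k,1/2)<0$ exactly when $k\ge276$, and then closes the finite window $183\le k\le 276$ by direct numerical optimization in $t$ for each of those values of $k$. Your monotonicity route would be cleaner if it works, but it is not immediate: $V(k,t)$ depends on $k$ through $\phi(1/k,t)$, through $(1-t)/k^2$, and through the two terms $\log(k-1)$ and $\log(k^2-1)$, and these contributions have opposite signs. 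The paper's approach is cruder but requires no such lemma. Finally, note that the strict upper bound $V(k,t)<\log 2$ in your claim (C) is not part of the statement of Theorem~\ref{thm:violation} as written; only approachability of $\log 2$ is asserted there, so that half of (C) is extra.
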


\begin{proof}
The result follows from an analysis of the entropies of the two probability vectors 
$\xopt$ and $\gamma^*_t$ from Theorems \ref{thm:MOE-convergence} and \ref{thm:bi-canal}.
We estimate the following almost sure asymptotic entropy difference:
\begin{equation}
D(k, t) = \lim_{n \to \iy} H\left([\Phi_n\otimes \bar \Phi_n](E_{tnk})\right) - 2H^{\min}(\Phi_n),
\end{equation}
which is an upper bound for $H^{\min}(\Phi_n\otimes \bar \Phi_n) - 2H^{\min}(\Phi_n)$. Using
Theorems \ref{thm:minimum-pnorm-Kkt} and \ref{thm:bi-canal}, we have that 
$D(k, t) = H(\gamma^*_t) - 2H(\xopt)$, for which an analytic expression is available, from equations 
\eqref{eq:def-xopt} and \eqref{eq:def-gamma-opt}.
A numerical study \cite{num} of this function (see Figure \ref{fig:ent-diff}) shows that $D(k, t) > 0$ for all
$k < 183$ and all $t \in (0, 1)$. Violations (i.e. negative values for $D(k, t)$) appear for the first 
time at $k=183$ and $t \approx 0.11$.

\begin{figure}[htbp]
\centering
\subfigure[]{\includegraphics[width=0.43\textwidth, bb = 30 25 250 250]{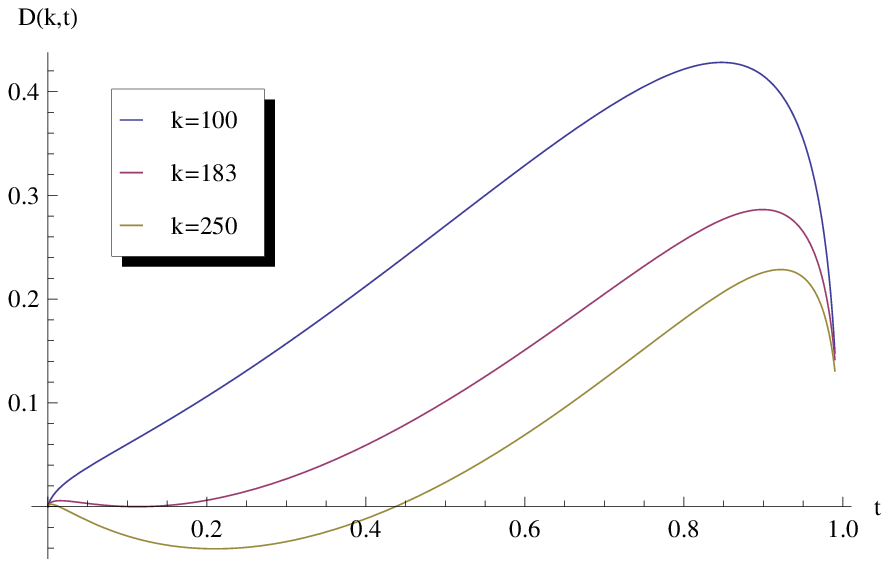}}\qquad
\subfigure[]{\includegraphics[width=0.43\textwidth, bb = 30 25 250 250]{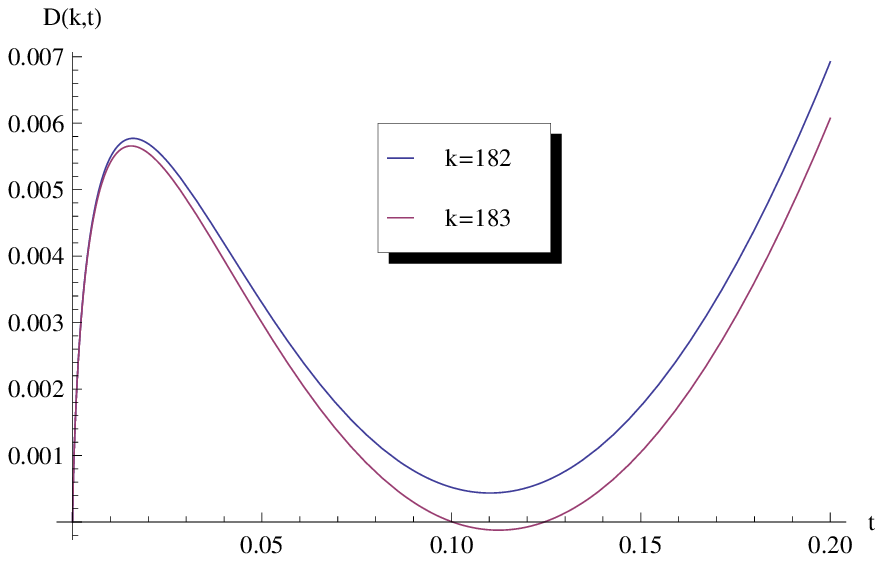}}\\
\caption{The entropy difference function $D(k,t)$. On the left side, the plots for $k=100, 183, 250$ 
(top to bottom) and $t \in (0,1)$ are given. An enlargement of the area where violations appear for the 
first time is presented in (b), for $k=182, 183$.}
\label{fig:ent-diff}
\end{figure}

An asymptotical expansion of the explicit function $D(k, t)$ at fixed $t$ and $k \to \iy$ shows that
$$D(k, t) = t \log t  + (1-t) \log(1-t) + o(1) = -H(t, 1-t) + o(1).$$
This shows that, for $t=1/2$, the quantity $D(k,1/2)$ is negative, for $k$ large enough. Analysis of the
function $k\to D(k,1/2)$ shows that it is negative iff 
$k\geq 276$, which implies that there is a violation of additivity 
for $k\geq 276$.
A numerical study of $t\to D(k,t)$
for all $k\in \{183,\ldots ,276\}$
allows to conclude that violations are observed iff $k\geq 183$, proving one of the claims of the theorem.

Moreover, the maximal violation of $\log 2$ (1 bit),  is achieved for $t=1/2$ and very large values of $k$. Note that the parameter value $t=1/2$ has been already used in \cite{cn2} to obtain violations of $p$-R\'enyi entropy additivity for $p>1$.
\end{proof}

Several remarks and comments about the theorem are in order now. 
\begin{enumerate}
\item Let us point out the improvement we obtained over previous results for the size of the violation. For the first time, \emph{macroscopic} violations are obtained for the minimum output entropy; in particular, the size of the violation increases with output size. 
\item In contrast with the case of $p$-R\'enyi entropies for $p>1$ where violations of size $\log k$ have been obtained in \cite{hayden-winter, cn2}, for the von Neumann entropy we get \emph{bounded} violations, of order $\log 2$, which do not grow to infinity with $k$. 
\item Also, the smallest output dimension for which violations are observed is $k=183$, which corresponds to approximately 8 qubits.
\end{enumerate}

The large asymptotic violation, which can be made arbitrarily close to 1 bit, is achieved for $t=1/2$ and large $k$. In Figure \ref{fig:ent-diff-2-k}, we have plotted the function $D(k,1/2)$, fixing $t=1/2$. One can then observe that the first violation (negative value of $D$) appears at $k=276$, see \cite{num}. Another interesting choice is $t=1/k$, which corresponds to channels with equal input and output spaces. The plot of the entropy difference in this regime can be found in Figure \ref{fig:ent-diff-2-k}. A numerical analysis shows that the first violation appears at $k=432$. Note that in this case, one can also write a series expansion for $D$:
\begin{equation}
D(k, 1/k) = -\frac{\log k}{k} + o(k^{-1}\log k),
\end{equation}
which agrees with the vanishing violation observed in \cite{has}.

\begin{figure}[htbp]
\centering
\subfigure[]{\includegraphics[width=0.43\textwidth]{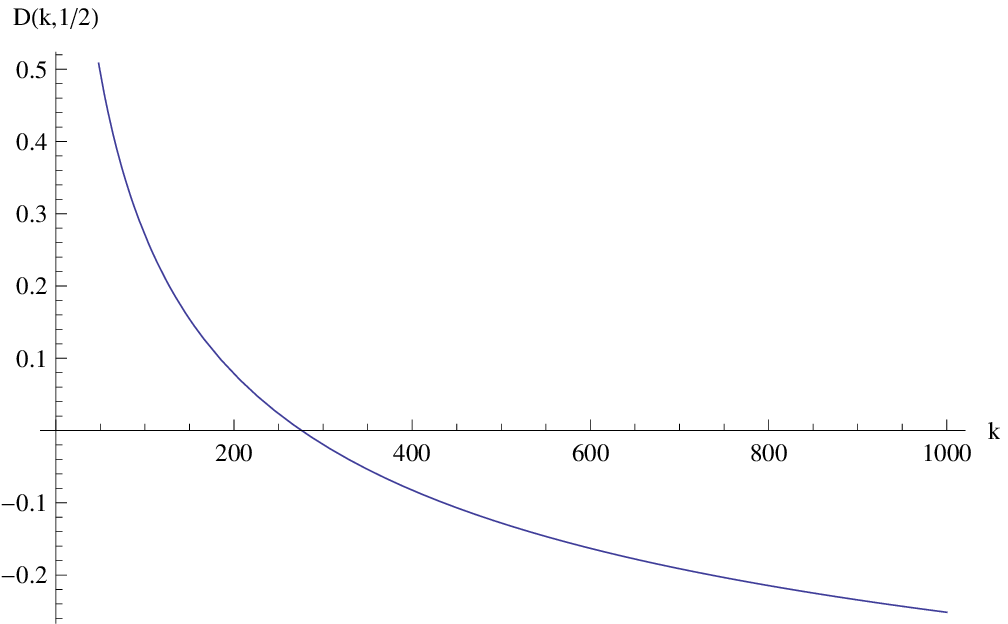}}\qquad
\subfigure[]{\includegraphics[width=0.43\textwidth]{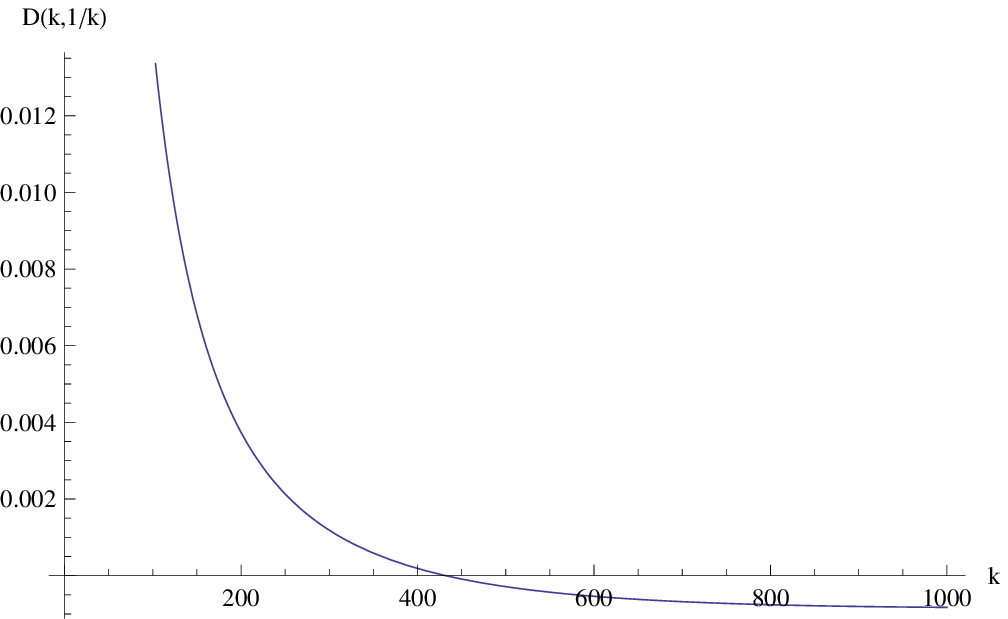}}\\
\caption{The entropy difference function $D(k,t)$ as a function of $k$, for $t=1/2$ and $t=1/k$ respectively.}
\label{fig:ent-diff-2-k}
\end{figure}

We would also like to point out that our result on $p$-norm maximization on $K_{k,t}$ implies, after a numerical study similar to the one used in the proof of Theorem \ref{thm:violation}, that violations for the $p$-R\'enyi entropy are observed for the first time at $k=16$ for $p=2$, $k=14$ for $p=3$ and $k=13$ for $p\in [4, \infty]$, see \cite{num}.

Next, we would like to emphasize the importance of
Theorem \ref{thm:bi-canal} derived in \cite{cn1}. Without this result, one has to rely on
the Hayden-Winter bound \cite{hayden-winter} and replace the output eigenvalue vector $\gamma^*_t$ from Theorem \ref{thm:bi-canal} with the more mixed vector
\begin{equation}\label{eq:def-gamma-opt-HW}
\gamma^{HW}_t = \left( t, \underbrace{\frac{1-t}{k^2-1}, \ldots, \frac{1-t}{k^2-1}}_{k^2-1 \text{ times}}\right).
\end{equation}
This leads to a larger entropy difference
\begin{equation}
D^{HW}(k, t) = H_p(\gamma^{HW}_t ) - 2H_p(\xopt).
\end{equation}

A numerical analysis of this problem, presented in Figure \ref{fig:ent-diff-HW}, shows that the first violations
appear for $k=184$. The use of the exact result from \cite{cn1} improves thus by one the bound on 
the minimum size of channels which exhibit violations of the
additivity of the minimum output entropy. Note however that one can still achieve values of the violation 
arbitrarily close to 1 bit using the Hayden-Winter bound \cite{hayden-winter}.

\begin{figure}[htbp]
\centering
\subfigure[]{\includegraphics[width=0.43\textwidth, bb = 30 25 250 250]{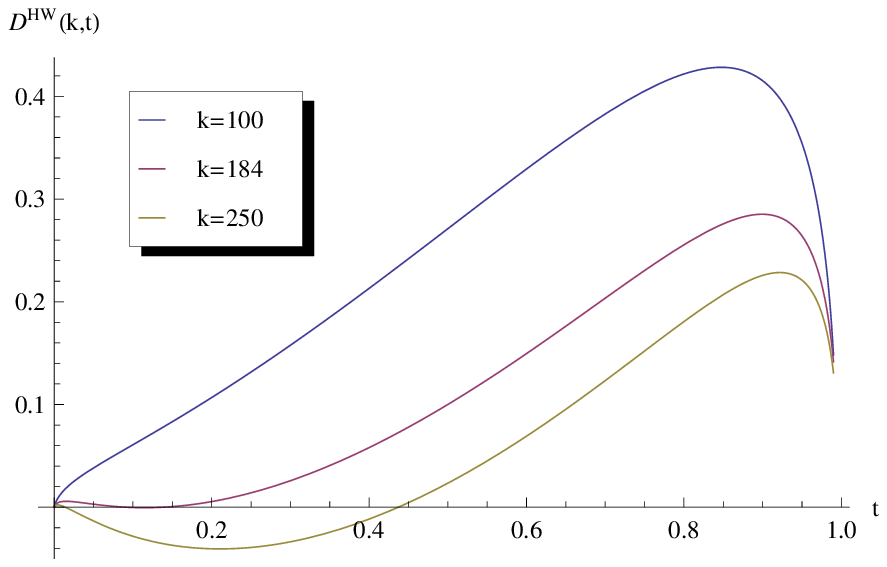}}\qquad
\subfigure[]{\includegraphics[width=0.43\textwidth, bb = 30 25 250 250]{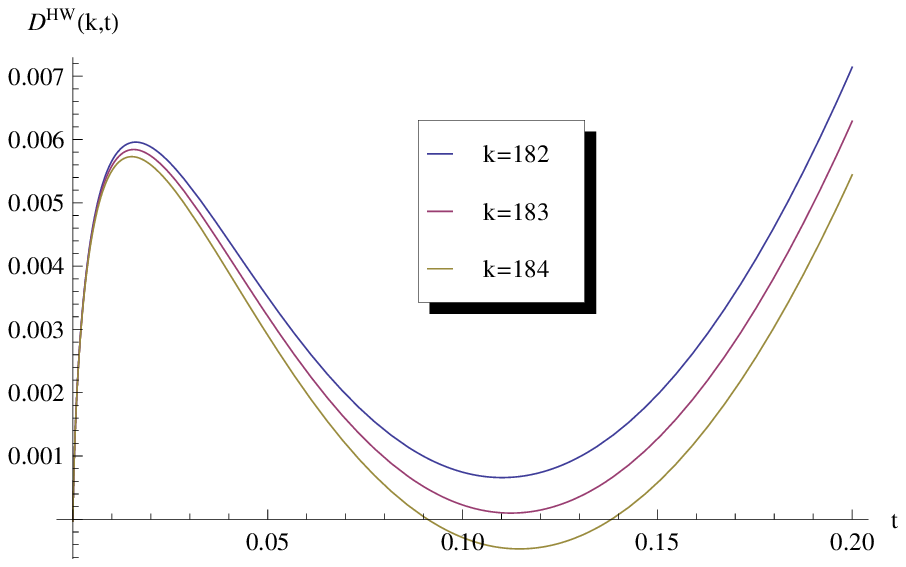}}\\
\caption{The entropy difference function $D^{HW}(k,t)$. On the left side, the plots for $k=100, 184, 250$ (top to bottom) and $t \in (0,1)$ are given. An enlargement of the area where violations appear for the fist time is presented in (b), for $k=182, 183, 184$.}
\label{fig:ent-diff-HW}
\end{figure}

Our result does not imply that, almost surely, there is  no violation of the additivity of the 
minimum output entropy for $k<183$. What we prove is that the Bell state will not yield such counterexamples. 
Some other input state for the product channel might provide better upper bounds. 
Work in this direction \cite{cfn1, fne} shows however that the Bell state is not far from being the 
optimal input state for product of conjugate random quantum channels. 
We conjecture thus that the violation of $1$ bit is indeed the maximal one in the current setting.

As a final remark, note that our techniques do not provide any information on the size of the environment 
dimension $n$. We plan to address this question in a subsequent paper, since the techniques required to 
tackle bounds on the environment dimension are of very different nature.


\begin{thebibliography}{99}

\bibitem{Bau} H.H. Bauschke and P.L. Combettes, {\em Convex Analysis and Monotone Operator Theory
in Hilbert Spaces}, CMS Books in Mathematics, Springer Science+Business Media, LLC 2011

\bibitem{BB-mathz}
Belinschi, S. T. and Bercovici, H. {\em Atoms and regularity for measures in a partially
defined free convolution semigroup}. Math. Z. 248 (2004), 665--674.

\bibitem{BB-imrn}
Belinschi, S. T. and Bercovici, H. {\em Partially defined semigroups
relative to multiplicative free convolution.}
Int. Math. Res. Not. (2): 65--101, 2005.

\bibitem{bcn1}
Belinschi, S. T., Collins, B. and Nechita, I.
{\it Laws of large numbers for eigenvectors and eigenvalues associated to random subspaces in a tensor product.}
Inventiones Mathematicae, vol. 190, no. 3, 2012, pp. 647-697.

\bibitem{Biane}
Biane, Philippe. {\em Processes with free increments.
} Math. Z. {\bf 227}(1), 143--174 (1998).

\bibitem{bho}
Brandao, F., Horodecki, M. S. L.
{\it On Hastings's counterexamples to the minimum output entropy additivity conjecture.}
Open Systems \& Information Dynamics, 2010, 17:01, 31--52.

\bibitem{cfn1}
Collins, B., Fukuda M., and Nechita, I.
{\it Towards a state minimizing the output entropy of a tensor product of random quantum channels.}
J. Math. Phys. 53, 032203 (2012)

\bibitem{cfn2}
Collins, B., Fukuda M., and Nechita, I.
{\it Low entropy output states for products of random unitary channels.}
arXiv:1208.1449, to appear in Random Matrices: Theory Appl.

\bibitem{cn1}
Collins, B. and Nechita, I.
{\it Random quantum channels I: Graphical calculus and the Bell state phenomenon.}
Comm. Math. Phys. 297 (2010), no. 2, 345-370.

\bibitem{cn2}
Collins, B. and Nechita, I.
{\it Random quantum channels II: Entanglement of random subspaces, R\'enyi entropy estimates and additivity problems.}
Advances in Mathematics 226 (2011), 1181-1201.

\bibitem{cn3}
Collins, B. and Nechita, I.
{\it Gaussianization and eigenvalue statistics for Random quantum channels (III).}
Ann. Appl. Probab. Volume 21, Number 3 (2011), 1136--1179.

\bibitem{cn-entropy}
Collins, B. and Nechita, I.
{\it Eigenvalue and Entropy Statistics for Products of Conjugate Random Quantum Channels.}
Entropy, 12(6), 1612-1631.

\bibitem{chl}
Cubitt, T., Harrow, A. W., Leung, D., Montanero, A. and Winter, A.
{\it Counterexamples to additivity of minimum output $p$-R\'enyi entropy for $p$ close to 0.}
Commun. Math. Phys. 284, 281290 (2008).

\bibitem{fki}
Fukuda, M. and King, C.
{\it Entanglement of random subspaces via the Hastings bound.}
J. Math. Phys. 51, 042201 (2010).

\bibitem{fkm}
Fukuda, M., King, C. and Moser, D.
{\it Comments on Hastings' Additivity Counterexamples.}
Commun. Math. Phys., vol. 296, no. 1, 111 (2010).

\bibitem{fne}
Fukuda, M. and Nechita, I.
{\it Asymptotically well-behaved input states do not violate additivity for conjugate pairs of random quantum channels.}
arXiv:1212.1630.

\bibitem{ghp}
Grudka, A., Horodecki, M., and Pankowski, L.
{\it Constructive counterexamples to the additivity of the minimum output R\'enyi entropy of quantum channels for all $p>2$.}
Journal of Physics A: Mathematical and Theoretical 43.42 (2010): 425304.

\bibitem{hay}
Hayden, P.
{\it The maximal $p$-norm multiplicativity conjecture is false.}
arXiv:0707.3291.

\bibitem{hayden-winter}
Hayden, P. and Winter, A.
{\it Counterexamples to the maximal p-norm multiplicativity conjecture for all $p>1$}.
Comm. Math. Phys. 284 (2008), no. 1, 263--280.

\bibitem{has}
Hastings, M. B.
{\it Superadditivity of communication capacity using entangled inputs.}
Nature Physics 5, 255 (2009).

\bibitem{hol}
Holevo, A. S.
{\it Bounds for the quantity of information transmitted by a quantum communication channel.}
Problems of Information Transmission, 9:177183, 1973.

\bibitem{hwe}
Holevo, A. S. and Werner, R.F.
{\it Counterexample to an additivity conjecture for output purity of quantum channels.}
J. Math. Phys. 43:4353-4357 (2002). 

\bibitem{kru}
King, C. and Ruskai, M. B.
{\it Minimal entropy of states emerging from noisy quantum channels.}
IEEE Trans. Inf. Theory 47, 192--209 (2001). 

\bibitem{nch}
Nielsen, Michael A., and Isaac L. Chuang.
{\it Quantum computation and quantum information.}
Cambridge University Press, 2010.

\bibitem{NS}
Nica, A. and Speicher, R. {\em On the multiplication of free $N$-tuples
of non-commutative random variables.} Amer. J. Math. {\bf 118} (1996),
no.4, 799--837.

\bibitem{nsp}
Nica, A. and Speicher, R.
{\it Lectures on the combinatorics of free probability.}
Cambridge Univ. Press (2006).

\bibitem{rockafellar}
Rockafellar, R. T.
{\it Convex analysis.}
Princeton Mathematical Series, No. 28 Princeton University Press, Princeton, N.J. 1970 xviii+451 pp.

\bibitem{swe}
Schumacher, B. and Westmoreland, D.
{\it Sending classical information via noisy quantum channels.}
Physical Review A, 56(1):131--138, July 1997.  

\bibitem{sha}
Shannon, C. E.
{\it A mathematical theory of communication.}
Bell Syst. Tech. J. 27, 379423 (1948).

\bibitem{sho}
Shor, P. W.
{\it Equivalence of additivity question in quantum information theory.}
Comm. Math. Phys. 246, 453--472 (2004). 

\bibitem{voiculescu-dykema-nica}
Voiculescu, D.V.,  Dykema. K.J. and Nica, A.
{\it Free random variables}, AMS (1992).

\bibitem{win}
Winter, A.
{\it The maximum output p-norm of quantum channels is not multiplicative for any $p > 2$.}
arXiv:0707.0402.

\bibitem{num}
Supporting numerical \textsc{Mathematica} routines available at \href{http://web5.uottawa.ca/www5/bcollins/documents/moe-paper.nb}{http://web5.uottawa.ca/www5/bcollins/documents/moe-paper.nb} and \href{http://web5.uottawa.ca/www5/bcollins/documents/moe-paper.pdf}{http://web5.uottawa.ca/www5/bcollins/documents/moe-paper.pdf}


\end{thebibliography}
\end{document}